\newif\ifFOCS \FOCSfalse
\renewcommand{\Pr}{\mathbf{Pr}}
\newcommand{\E}{\mathbf{E}}
\newcommand{\N}{\mathcal{N}}
\newcommand{\D}{\mathcal{D}}
\newcommand{\G}{\mathcal{G}}
\newcommand{\Matching}{\texttt{Matching}}
\newcommand{\Sample}{\texttt{Sample}}
\newcommand{\FlowAndCut}{\texttt{FlowAndCut}}
\newcommand{\R}{\mathbb{R}}
\renewcommand{\u}{\mathbf{u}}
\newcommand{\U}{\mathbf{U}}
\renewcommand{\L}{\mathcal{L}}
\newcommand{\one}{\mathbf{1}}
\newcommand{\w}{\mathbf{w}}
\newcommand{\V}{V}%\mathbf{X}}
\newcommand{\Stretch}{\mathsf{Stretch}}
\newcommand{\Ball}{\mathsf{Ball}}
\renewcommand{\v}{\mathbf{v}}
\newcommand{\eps}{\varepsilon}
\newcommand{\FOCSspace}[1]{}
\newcommand{\mybox}[1]{\medskip \noindent\fbox{\parbox{0.97\columnwidth}{#1}}\medskip}
\newcommand{\FOCSspace}[1]{}
\newcommand{\mybox}[1]{\medskip \noindent\fbox{\parbox{0.98\textwidth}{#1}}\medskip}
\newcommand{\grant}{Research supported by a UC Berkeley Regents Fellowship and in part by
NSF grant CCF-0635401}
\newtheorem{theorem}{Theorem}[section]
\newtheorem{claim}[theorem]{Claim}
\newtheorem{defn}[theorem]{Definition}
\newtheorem{lemma}[theorem]{Lemma}
\begin{document}
\ifFOCS
\IEEEoverridecommandlockouts
\fi
\ifFOCS
\title{Breaking the Multicommodity Flow Barrier for $\mathbf{O(\sqrt{\mathbf{log}\ n})}$-Approximations to Sparsest Cut}
\author{\IEEEauthorblockN{Jonah Sherman\IEEEauthorrefmark{1}}
\IEEEauthorblockA{Computer Science Division\\
University of California at Berkeley\\
CA, 94720 USA\\
jsherman@cs.berkeley.edu
}\thanks{\IEEEauthorrefmark{1} \grant}}
\else
\title{Breaking the Multicommodity Flow Barrier for $O(\sqrt{\log n})$-Approximations to
Sparsest Cut}
\author{Jonah Sherman\thanks{\grant}\\University of California, Berkeley}
\fi
\maketitle
\thispagestyle{empty}
%%%%%%%%%%%%%%%%%%%%%%%%%%%%%%%%%%%%%%%%%%%%%%%%%%%%%%
%                      ABSTRACT
%%%%%%%%%%%%%%%%%%%%%%%%%%%%%%%%%%%%%%%%%%%%%%%%%%%%%%
\begin{abstract}
This paper ties the line of work on algorithms that find an $O(\sqrt{\log
n})$-approximation to the {\sc sparsest cut} together with the line of work
on algorithms that run in sub-quadratic time by using only single-commodity flows.
We present an algorithm that simultaneously achieves both goals, finding an
$O(\sqrt{\log(n)/\eps})$-approximation using
$O(n^{\eps}\log^{O(1)}n)$ max-flows.
The core of the algorithm is a stronger, algorithmic version of Arora \emph{et al.}'s
structure theorem, where we show that matching-chaining argument at the heart of their
proof can be viewed as an algorithm that finds good augmenting paths in certain geometric
multicommodity flow networks.  By using that specialized algorithm in place of
a black-box solver, we are able to solve those instances much more efficiently.

We also show the cut-matching game framework can not achieve an approximation
any better than $\Omega(\log(n)/\log\log(n))$ without re-routing flow.
\end{abstract}
%%%%%%%%%%%%%%%%%%%%%%%%%%%%%%%%%%%%%%%%%%%%%%%%%%%%%%
%                      INTRO
%%%%%%%%%%%%%%%%%%%%%%%%%%%%%%%%%%%%%%%%%%%%%%%%%%%%%%
\section{Introduction}
We consider the problem of partitioning a graph into
relatively independent pieces in the sense that not too many edges cross between them.
Two concrete optimization problems arising in that context
are the {\sc sparsest cut} and {\sc balanced separator} problems.  We are given an undirected
weighted graph $G$ on $n$ vertices, where each edge $xy$ has capacity $G_{xy}$
(we identify a graph with its adjacency matrix).
The \emph{edge expansion} of a cut $(S,\overline{S})$ is $h(S) =
\frac{\sum_{x \in S, y\in \overline{S}} G_{xy}}{\min\{|S|,|\overline{S}|\}}$.
The {\sc sparsest cut} problem is to find a cut $(S,\overline{S})$ minimizing $h(S)$; we write
$h(G)$ to denote the value of such a cut.  The {\sc balanced separator} problem has the
same objective but the additional constraint that $\min\{|S|,|\overline{S}|\} \geq \Omega(n)$.
Both problems are NP-hard, so we settle for approximation algorithms.

Most of the original work on graph partitioning focused on achieving the best
approximation factor and falls into one of two themes.  The first is based on
multicommodity flow, using the fact that
if a graph $H$ of known expansion can
be routed in $G$ via a feasible flow, then $h(H) \leq h(G)$.
If $H$ is some fixed graph, finding the best possible lower bound is equivalent to solving
the maximum concurrent flow problem; i.e., maximizing $\alpha$ such that $F \leq G$ and $D \geq \alpha H$
 where $D_{xy} = \sum_{p : x\leftrightarrow y}f_p$ is the \emph{demand graph} and $F_{xy}
= \sum_{p \owns xy}f_p$ is the \emph{flow graph} of the underlying flow.
By taking $H$ to be the complete graph, Leighton and Rao showed an
upper bound of $h(G) \leq O(\log n)\alpha^* h(H)$ for the optimal $\alpha^*$, yielding an $O(\log n)$ approximation.
The other theme is the discrete Cheeger's
inequality of Alon and Milman\cite{Alon} characterizing the relationship between cuts and the
spectrum of a graph's Laplacian matrix.  In particular, if $G$ has maximum degree $d$,
then $\lambda_2(\L_G)/2 \leq h(G) \leq \sqrt{2d\lambda_2(\L_G)}$, where $\lambda_2(\L_G)$
is second smallest eigenvalue of $G$'s Laplacian.
The two themes are incomparable, as the latter is a better approximation when
$G$ is an expander (i.e., $h(G)/d$ is large) while the former is better when $G$ has
sparse cuts.

Arora, Rao, and Vazirani naturally combined the two themes.
Rather than embedding a fixed graph $H$ of known expansion, they embed an arbitrary $H$
and then certify $H$'s expansion via $\lambda_2(\L_H)$\cite{ARV}.
Since $\lambda_2(\L_H) \geq \alpha$ is equivalent to $\L_H
\succeq \frac{\alpha}{n} \L_K$, where $K$ is the complete graph, the problem of finding
the best such lower-bound can be cast as a semidefinite program:
\begin{equation}
\max \alpha \qquad s.t. \quad \frac{\alpha}{n}\L_K \preceq \L_D, \quad F \leq G\label{SDP}
\end{equation}
They showed that for the optimal $\alpha^*$, one has an upper bound of $h(G)
\leq O(\sqrt{\log n}) \alpha^*$, yielding the currently best known approximation
factor.
Shortly thereafter, Arora, Hazan, and Kale designed a primal-dual algorithm to
approximately solve \eqref{SDP} in $\tilde{O}(n^2)$ time using multicommodity
flows\cite{AHK}.

More recently, researchers have focused on designing efficient algorithms for graph partitioning
that beat the quadratic multicommodity flow barrier. Khandekar, Rao, and Vazirani designed
a simple primal-dual framework for constructing such algorithms based on the
\emph{cut-matching game} and showed one could achieve an $O(\log^2n)$ approximation in
that framework using polylog max-flows\cite{KRV}.
Arora and Kale designed a very general primal-dual
framework for approximately solving SDPs\cite{AroraKale}.  They showed efficient algorithms
for several problems could be designed in their framework, including an $O(\log n)$-approximation to
{\sc sparsest cut} using polylog max-flows.  They also showed one could achieve
an $O(\sqrt{\log n})$-approximation in their framework using multicommodity flows,
simplifying the previous algorithm of \cite{AHK}.
Orecchia \emph{et al.} extended the cut-matching game framework of \cite{KRV} to achieve
an $O(\log n)$ approximation\cite{OSVV}.  They present two slightly
different algorithms, and remarkably, their second algorithm is the same as Arora
and Kale's, even though they never explicitly mention any SDP.
They also showed a lower bound of
$\Omega(\sqrt{\log n})$ on the approximation factor achievable in the cut-matching
framework, suggesting the framework might precisely capture the limits of current
approximation algorithms and posed the question of whether $O(\sqrt{\log n})$ could be
efficiently achieved in that framework.

\subsection{This Paper.} We tie those two lines of work together by simultaneously
achieving the $O(\sqrt{\log n})$ approximation factors of the former with the nearly
max-flow running time of the latter.
\begin{theorem} \label{alg} For any $\eps \in [O(1/\log(n)), \Omega(1)]$, there is an algorithm to
approximate the {\sc sparsest cut} and
{\sc balanced separator} problems to within a factor of
$O(\sqrt{\log(n)/\eps})$ using only $O\left(n^\eps\log^{O(1)}(n)\right)$ max-flows.
\end{theorem}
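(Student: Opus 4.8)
The plan is to approximately solve the semidefinite program \eqref{SDP} by a primal--dual outer loop --- a strengthened cut-matching game, equivalently a matrix-multiplicative-weights (MMW) solver for \eqref{SDP} in the style of Arora--Kale --- whose inner oracle is an \emph{algorithmic} form of the ARV structure theorem. The outer loop maintains a dual certificate: a feasible flow $F \le G$ with demand graph $D$ together with the embedding $\{v_x\}\subseteq\R^d$ obtained from a Cholesky factorization of the current MMW density matrix. Each iteration invokes the oracle, which either (i) returns a sparse cut directly, by applying ARV's region-growing/rounding to a spread direction of $\{v_x\}$, or (ii) certifies $\{v_x\}$ is spread and returns a fractional matching between two $\Omega(n)$-sized vertex sets that are $\Omega(\sqrt{\eps/\log n})$-separated in the embedding and whose demand is routable in $G$, which MMW then uses to update. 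A standard width/convergence analysis bounds the number of outer iterations by $\log^{O(1)}n$, so it suffices to implement the oracle with $O(n^\eps\log^{O(1)}n)$ max-flows and to argue that an approximate optimum of \eqref{SDP} rounds (via ARV) to a cut of value $O(\sqrt{\log(n)/\eps})\cdot h(G)$.

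The heart of the proof is the oracle, which I would cast as the \emph{geometric multicommodity flow} problem implicit in ARV: route in $G$ as much demand as possible between pairs of vertices that are far apart after projecting the embedding onto a random Gaussian direction. Rather than calling a black-box multicommodity flow solver (the $\tilde O(n^2)$ bottleneck of \cite{AHK,AroraKale}), I would observe that ARV's matching-chaining argument already locates the augmenting flow: split off the two tails $S,T$ of size $\Omega(n)$ under the random projection; then, over a sequence of chaining rounds, use single-commodity max-flows to route a constant fraction of the current frontier set to geometrically nearby vertices with a biased drop in the projection value. After the chaining either crosses the frontier from the $S$-side to the $T$-side --- yielding a routed path system between an $\Omega(\sqrt{\eps/\log n})$-separated pair of $\Omega(n)$-sized sets, which is the sought flow --- or it collapses the frontier into a small ball, which localizes an obstruction that region-growing converts to a sparse cut. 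Choosing the chaining depth to be $\Theta(1/\eps)$ in place of ARV's $\Theta(\sqrt{\log n})$, and balancing it against a branching factor so the total number of max-flow calls is $O(n^\eps\log^{O(1)}n)$, is exactly what both bounds the cost and degrades the separation guarantee from $\Omega(1/\sqrt{\log n})$ to $\Omega(\sqrt{\eps/\log n})$ --- hence the $O(\sqrt{\log(n)/\eps})$ approximation, which smoothly interpolates between the KRV/OSVV regime ($\eps=\Theta(1/\log n)$, polylog max-flows, $O(\log n)$) and ARV ($\eps=\Theta(1)$).

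The step I expect to be the main obstacle is making ARV's chaining \emph{constructive and efficient}: ARV's counting argument only asserts that a suitable matching \emph{exists} under the no-separated-sets hypothesis, whereas here each matching must be produced as an actual (fractional) routing in $G$ by a max-flow, the per-round single-commodity flows must be shown to compose into one feasible multicommodity flow of the required value, and the recursion must terminate with only $O(n^\eps\log^{O(1)}n)$ calls --- so one has to control how the frontier set and the geometric eligibility constraints degrade across the $\Theta(1/\eps)$ levels and show that a failure at \emph{any} level is witnessed by a genuinely $O(\sqrt{\log(n)/\eps})$-sparse cut rather than a weaker one. A secondary obstacle is the wrapper bookkeeping: ensuring the MMW iterates satisfy the $\ell_2^2$-triangle-inequality hypotheses the structure theorem needs (enforced through the flow structure and by only adding well-separated matchings), keeping the MMW width $\log^{O(1)}n$ under approximate, single-commodity oracle answers, and checking that \textsc{balanced separator} follows from the identical algorithm run with the standard size constraint on the cut side, since the rounding of \eqref{SDP} already accommodates balance --- at the cost of only further $\log^{O(1)}n$ factors.
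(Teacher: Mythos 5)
Your outer architecture matches the paper's: a multiplicative-weights/expander-flow loop whose oracle is a constructive version of the ARV chaining argument, implemented with single-commodity flows via the project-and-max-flow primitive. But the step you yourself flag as ``the main obstacle'' --- making the chaining constructive --- is precisely the paper's entire technical content, and your proposal contains no mechanism for it. The paper's answer is a specific sampling distribution over direction \emph{sequences}: $\u_{r+1}$ is taken to be a $(1-1/R)$-correlated copy of $\u_r$ (shuffled with independently drawn directions, following Lee's refinement), so that the projections of a chained path along successive directions nearly add up; and the fact that a constant fraction of the ``sink set'' along $\u_r$ reappears in the ``source set'' along $\u_{r+1}$, even though the two directions are nearly identical, is a measure-concentration phenomenon proved via Borell's reverse hypercontractive inequality (a strong isoperimetric statement: a set of Gaussian measure $\delta$ has intersection at least $(\eps\delta)^{1/(1-\rho)}$ with the $\rho$-correlated neighborhood of almost every point). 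Without some such construction the oracle does not exist; ``route a constant fraction of the frontier with a biased drop in projection value'' restates the goal rather than supplying an algorithm.

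Your parameters are also inconsistent with the bounds you claim. The chaining depth must be $R=\Theta(\sqrt{\eps\log n})$ --- \emph{increasing} in $\eps$, interpolating from $O(1)$ hops at $\eps=\Theta(1/\log n)$ up to ARV's $\Theta(\sqrt{\log n})$ at constant $\eps$ --- not $\Theta(1/\eps)$, which runs the wrong way (at constant $\eps$ it gives $O(1)$ hops, which cannot reach pairs at distance $\Omega(1)$). Moreover the oracle does not route demand between $\Omega(n)$-sized far-apart sets per invocation: a length-$R$ chain survives only with probability $e^{-O(R^2)}=n^{-O(\eps)}$, so each sampled direction sequence yields only about $n^{1-\eps}$ routed pairs at squared distance $\Theta(\eps)$. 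That $e^{-O(R^2)}$ loss, pushed through the multiplicative-weights width, is exactly where the $n^{\eps}$ factor in the number of max-flows comes from --- not from a recursion/branching-factor tradeoff. Finally, the cut case is not reached by ``collapsing the frontier into a small ball'' followed by region-growing: in the paper a balanced cut of expansion $O(\sqrt{\log(n)/\eps})$ falls out directly whenever any single projected max-flow instance has min-cut below $\kappa cn$, and if no sampled direction produces such a cut then the matching-cover hypothesis of the structure theorem is satisfied and the chaining succeeds in expectation, so there is no separate failure mode inside the chaining itself.
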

Theorem \ref{alg} effectively subsumes the results of \cite{AHK, KRV, AroraKale,
OSVV}, as taking $\eps = \Theta(1/\log(n))$ yields an $O(\log(n))$ approximation using
polylog max-flows, while any constant $\eps < 1/2$ achieves an $O(\sqrt{\log(n)})$
approximation in sub-quadratic $\tilde{O}(m + n^{3/2+\eps})$ time using the max-flow
algorithm of Goldberg and Rao\cite{GoldbergRao}.
We also show the cut-matching game framework of \cite{KRV} can not achieve an approximation
better than $\Omega(\log(n)/\log\log(n))$ without re-routing flow.

We build heavily on Arora and Kale's work,
achieving our improvement by replacing their use of
a black-box multicommodity flow solver with a specialized one
that makes use of the additional structure present in the flow instances that arise.
We begin in section \ref{FLOWS} by reviewing the nature of those flow problems,
as well as the main ideas behind the algorithms of \cite{AHK, AroraKale, OSVV}.
Having clarified the connection to partitioning, we also state our main technical result,
theorem \ref{ARVstrong}.
In section \ref{ALG} we describe the details of our algorithm, the correctness of which
follows immediately from theorem \ref{ARVstrong}.
The proof of theorem \ref{ARVstrong} appears in section
\ref{PROOF}.  Our lower-bound for the cut-matching game is then discussed in
\ref{CUTMATCH}, and we finish with some concluding remarks in section \ref{CONCLUSION}.

%%%%%%%%%%%%% EXPANDER FLOWS %%%%%%%%%%%%%%%%%%%%%
\section{\label{FLOWS}Expander Flows}
Expander-flow based algorithms all work by approximately solving \eqref{SDP}, either explicitly
as in \cite{AHK, AroraKale}, or implicitly as in \cite{KRV, OSVV}, by iteratively
simulating play of its corresponding two-player zero-sum game.
The game has two players: the embedding player and the flow player.
The embedding player chooses a non-trivial embedding $\V = (\v_1,\ldots,\v_n) \in
(\R^{d})^n$ of the vertices of $G$.
The flow player chooses a feasible flow $F \leq G$ supporting demands $D$ with the goal
of routing flow between points that are far away in the embedding.
More precisely, the payoff to the flow player is:
\begin{equation} \Phi(\V, D) =
\frac{\sum_{x<y}D_{xy}\|\v_x-\v_y\|^2}{\frac{1}{n}\sum_{x<y} \|\v_x - \v_y\|^2 } \notag
\end{equation}
For given demands $D$, the best
response for the embedding player is the one-dimensional embedding
given by an eigenvector of $\L_D$ of eigenvalue $\lambda_2(\L_D)$, yielding a value of
$\lambda_2(\L_D)$.
On the other hand, for a given embedding,
the best response for the flow player is a solution to the
weighted maximum multicommodity flow problem given by
\begin{equation}
\max \sum_{x<y}D_{xy}\|\v_x-\v_y\|^2 \qquad s.t. \quad F \leq G \label{MCF}
\end{equation}

The frameworks of \cite{KRV, AroraKale, OSVV} start with an initial embedding $\V^1$, such
as all points roughly equidistant.  On a given iteration $t$, the algorithm presents $\V^t$ to
the flow player, who must either respond with demands $D^t$ of value $\Phi(\V^t,D^t) \geq
1$, or a cut $C^t$ of expansion at most $\kappa$, where $\kappa$ is the desired approximation
factor.  In the latter case the algorithm terminates; in the former case, the demands
are used to update the embedding for the next iteration.
The precise update differs among each algorithm, but essentially vertices $x,y$ with large
$D^t_{xy}$ will be squeezed together in the embedding.  The analysis of
\cite{KRV, AroraKale, OSVV} show that after $T$ iterations, for
sufficiently large $T$, their adaptive strategies actually played nearly as well as they
could have in hindsight, in that
\begin{equation}
\lambda_2\left(\frac{\L_{D^1} + \cdots + \L_{D^T}}{T}\right) \geq \Omega(1) \notag
\end{equation}
Since averaging $T$ feasible flows yields a feasible flow, after $T$ iterations
the graph $D = (D^1 + \cdots + D^T)/T$ with $\lambda_2(\L_D) \geq \Omega(1)$ has
been routed in $G$.  Thus, for a given graph, the algorithm either routes an
$\Omega(1)$-expander-flow in $G$ or else finds a cut of expansion $\kappa$.
Using a binary search and scaling the edge capacities appropriately yields
an $O(\kappa)$ approximation algorithm.

The embedding can be updated in nearly linear time, and $T = O(\log^{O(1)}(n))$,
so the running time of such algorithms is dominated by the running time of the flow player.
% Bencz\'{u}r and Karger
By sparsifying $G$ (using e.g. \cite{Sparse}), we can and shall assume it has
$m = O(n\log n)$ edges.
Using Fleischer's multicommodity flow algorithm\cite{Fleischer} as a black box,
a nearly optimal pair of primal/dual solutions to \eqref{MCF} can be computed in
$\tilde{O}(n^2)$ time.  Note that \eqref{MCF} has demand weights for
every pair of vertices, so $\Omega(n^2)$ space is required to even explicitly
write it down.  On the other hand, each $\v_x \in \R^{O(\log n)}$, so
the weights $\|\v_x-\v_y\|^2$ are all implicitly stored in only $O(n\log n)$ space.
Therefore, making use of the additional geometric structure of these instances is crucial to
achieving sub-quadratic time.  Implicit in all of \cite{KRV, AroraKale, OSVV} is a
specialized algorithm to approximately solve \eqref{MCF}.  The actual algorithm
used is the same in all three, and those algorithms differ only in their strategy for
the embedding player.

In the next two subsections, we briefly sketch the single-commodity and
multicommodity flow based algorithms of \cite{AroraKale}, and then describe how we tie the two together.
In particular, our algorithm is essentially an ``algorithmetization'' of the
multicommodity flow algorithm's analysis.  For the rest of the section, suppose we have an embedding $\V$ with
$\sum_{x < y} \|\v_x-\v_y\|^2 = n^2$, and let us further assume that the points are unique
and $\|\v_x\| \leq 1$ for all $x$; i.e., the diameter is not much more
than the average distance.
\subsection{Using Single-Commodity Flows}
Consider first the absolute simplest case, where $d=1$ and the points are
simply numbers in $[-1,1]$.
It is easy to see that since the points are in $[-1,1]$, unique, and have average
squared-distance $\Omega(1)$, there must be some interval $[a,b]$, where $b-a =
\Omega(1) =: \sigma$ and the set of points to the left of $a$, $A = \{x : \v_x \leq a\}$ and to the
right of $b$, $B = \{y : \v_y \geq b\}$ have $|A| = |B| = \Omega(n) =: 2cn$.
A natural way to try to push flow far along this line would be to shrink $A$ and $B$
down to single vertices and then compute a max-flow from $A$ to $B$.

\mybox{
\FlowAndCut$(\kappa, c, w_1,\ldots,w_n \in \R)$:
\begin{itemize}
\item Sort $\{w_x\}$, let $A$ be the $2cn$ nodes $x$ with least $w_x$ and $B$ be those with
greatest $w_y$.
\item Add two vertices $s,t$.  Connect $s$ to each $x \in A$ and $t$ to each $y \in B$ with edges of capacity
$\kappa$.
\item Output the max-flow/min-cut for $s-t$.
\end{itemize}
}

Consider invoking \FlowAndCut$(\kappa, c, \v_1,\ldots,\v_n)$ with $\kappa =
c^{-1}\sigma^{-2}$.
If the max-flow is at least $\kappa cn$, then since all flow must cross the gap $[a,b]$,
we have pushed $\kappa c n$ units of flow across a squared-distance of $\sigma^2$,
achieving a solution $D$ with $\Phi(\V,D) \geq (\kappa c n \sigma^2)/n = 1$.
Otherwise, if the min-cut is at most $\kappa c n$, then at most $cn$ of the added
$\kappa$-capacity edges are cut, so at least $cn$ vertices must remain on each side and
the cut has expansion at most $\kappa$.
That is, for dimension one a $\kappa = O(1)$ approximation is obtained.

The approach of \cite{AroraKale, OSVV}
is to reduce the general case to the one-dimensional case by
picking a random standard normal vector $\u$ and projecting each $\v_x$ along
$\u$, yielding the 1-dimensional embedding $w_x = \v_x \cdot \u$.
The fact that the points are in the unit ball and have average distance
$\Omega(1)$ implies that with probability $\Omega(1)$, there is a gap $[a,b]$
with $b-a = \Omega(1) = \sigma$ as before.  Applying the previous analysis, we either
find a cut of expansion $O(1)$ or a flow with
$\sum_{x<y}D_{xy}(w_x-w_y)^2 \geq n$.  Then, the Gaussian tail
ensures that distances could not have been stretched too much along $\u$:
with high probability $(w_x - w_y)^2 \leq O(\log n)\|\v_x - \v_y\|^2$
for every pair $x,y$.  Thus, $\Phi(\V,D) \geq \Omega(1/\log(n))$, yielding an
$O(\log n)$ approximation.

\subsection{Using Multi-Commodity Flows}
Arora, Rao, and Vazirani showed that, if a \emph{best} response $D^*$ to $\V$ has
$\Phi(\V,D^*) \leq 1$, one can find a cut of expansion $O(\sqrt{\log n})$.
Supposing the optimal solution to \eqref{MCF} has value at most $n$, there must be
a solution to the dual problem of value at most $n$.
The dual assigns lengths $\{w_e\}$ to the edges of $G$, aiming to minimize $\sum_{e}G_e w_e$
subject to the constraints that the shortest-path distances between each $x,y$ under $\{w_e\}$
are at least $\|\v_x-\v_y\|^2$.
Arora and Kale show the existence of such a dual solution implies that
projecting the points along a random $\u$ and running
\FlowAndCut with $\kappa = \Theta(\sqrt{\log n})$ must yield a cut
of capacity at most $\kappa cn$ with probability $\Omega(1)$.

If not, then a flow of value at least $\kappa cn$ is returned for $\Omega(1)$ of the
directions $\u$ along which $A$ and $B$ are $\sigma$-separated.
For simplicity, assume that the flows actually correspond to a matching between $A$
and $B$.  That is, each $x$ has either zero flow leaving, or else has exactly $\kappa$
flow going to a unique $y$ along a single path.  On the one hand, that matching is routed in $G$ along
$cn$ flowpaths, each carrying flow $\kappa$.  On the other hand,
the total volume of $G$ is only $\sum_{e} G_e w_e = n$, so $\Omega(n)$ of those
flowpaths must have length at most $O(1/\kappa)$ under $\{w_e\}$.

For each $\u$, let $M(\u)$ be the matching consisting of those demand pairs routed along
such short paths.  Then, according to the following definition, $M$ is an
$(\Omega(1),\Omega(1))$-matching-cover.
\begin{defn} A $(\sigma, \delta)$-\textbf{matching-cover} for an embedding $\{\v_x\}$
is a collection $\{M(\u)\}_{\u \in \R^d}$ of directed matchings satisfying the following
conditions.
\begin{itemize}
\item Stretch: $(\v_y - \v_x) \cdot \u \geq \sigma$ for all $(x,y) \in M(\u)$
\item Skew-symmetry: $(x,y) \in M(\u)$ iff $(y,x) \in M(-\u)$
\item Largeness: $\E_\u\left[|M(\u)|\right] \geq \delta n$
\end{itemize}
For a list of vectors $\u_1,\ldots,\u_R$, let
$M(\u_1,\ldots,\u_R)$ denote the graph that contains edge $(x,y)$ iff there exist
$x_0,\ldots,x_R$ with $x_0=x,x_R = y$ and $(x_{r-1},x_r) \in M(\u_r)$ for all $r \leq R$.
For the empty list, let $M()$ denote the graph where each vertex has a directed self-loop.
Note that $M(\u_1,\ldots,\u_R)$ is not a matching, but rather a graph with maximum
in-degree and out-degree one.
\end{defn}
Furthermore, $M$ has the property that for each edge $(x,y) \in M(\u)$, the distance between
$x$ and $y$ under $\{w_e\}$ is at most $O(1/\kappa)$.  The following theorem holds for
$M$.
\begin{theorem}[\cite{JamesLee}, refining \cite{ARV}]\label{ARV}
Let $M$ be a $(\Omega(1), \Omega(1))$-matching-cover for $\{\v_x\}$.  Then, there
are vertices $x,y$ and $\u_1,\ldots,\u_R$ where $R \leq O(\sqrt{\log n})$ such that $(x,y) \in
M(\u_1,\ldots,\u_R)$ and $\|\v_x - \v_y\|^2 \geq L$.
\end{theorem}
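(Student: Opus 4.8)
The plan is to follow the matching-chaining argument of Arora, Rao, and Vazirani. Fix $R = \Theta(\sqrt{\log n})$ (the constant to be tuned against $\sigma$ and $\delta$), draw i.i.d.\ standard Gaussian directions $\u_1,\dots,\u_R \in \R^d$, and pick a uniformly random start vertex $x_0$. Run a chained walk: for $r = 1,\dots,R$, if $x_{r-1}$ has an out-edge in $M(\u_r)$ let $x_r$ be its $M(\u_r)$-partner, and otherwise declare the walk dead. If the walk survives all $R$ steps then $(x_0,x_R) \in M(\u_1,\dots,\u_R)$ by the definition of $M(\u_1,\dots,\u_R)$, so it suffices to exhibit a positive-probability event on which the walk survives and $\|\v_{x_0} - \v_{x_R}\|^2 \geq L$. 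As a first dichotomy: if some $M(\u)$ already contains an edge $(x,y)$ with $\|\v_x-\v_y\|^2 \geq L$ we are done with $R=1$, so we may assume every matching edge is short, $\|\v_x - \v_y\|^2 < L$.

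Along the walk I would track two quantities: the survival probability $S_r = \Pr[\text{alive at step } r]$, and the conditional expected squared displacement $\Phi_r = \E\big[\|\v_{x_r} - \v_{x_0}\|^2 \mid \text{alive at } r\big]$. Largeness gives $S_1 \geq \delta$, since a uniformly random vertex has an out-edge in a random $M(\u)$ with probability $\E_\u[|M(\u)|]/n \geq \delta$; the main bookkeeping task is to show $S_r$ stays above a positive quantity over all $R$ steps. For the displacement, at a surviving step $r$ the \emph{Stretch} condition forces $(\v_{x_r}-\v_{x_{r-1}})\cdot \u_r \geq \sigma$, i.e.\ the step has a component of order $\sigma/\|\u_r\| = \Omega(\sigma/\sqrt{d})$ along the fresh direction $\u_r$. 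If $\u_r$ were genuinely independent of the history, it would with good probability be nearly orthogonal to the accumulated displacement $\v_{x_{r-1}}-\v_{x_0}$, so this forced component would add to — rather than cancel in — $\|\v_{x_r}-\v_{x_0}\|^2$; iterating, $\Phi_r$ would climb until it reaches $L$. The purpose of $R = \Theta(\sqrt{\log n})$ is to let this forced growth overtake the accumulated fluctuations while the joint event ``survive and stay near-orthogonal at every step'' still has positive probability; a cruder version of the same argument only buys $O(\log n)$, matching Leighton--Rao.

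The step I expect to be the real obstacle is exactly the claim that $\u_r$ behaves like fresh randomness relative to the walk's past: a priori the matching $M(\u_r)$ may be correlated with how the walk reached $x_{r-1}$, so $\v_{x_r}-\v_{x_{r-1}}$ need not look like a random vector against $\v_{x_{r-1}}-\v_{x_0}$, and its conditional drift must be controlled rather than assumed to vanish. This is where the \emph{Skew-symmetry} hypothesis is essential: following \cite{ARV} and the refinement in \cite{JamesLee}, one symmetrizes by coupling the walk with a companion walk run along the negated directions and uses skew-symmetry to identify the forward and reverse matchings, which makes the offending conditional expectations cancel or become boundable. Carrying this out — simultaneously keeping $S_r$ bounded below for all $\Theta(\sqrt{\log n})$ steps, and ruling out at each intermediate distance scale the alternative that a constant fraction of steps are ``fat'' (which would again immediately produce the desired far pair) — is the technical core, and is where the constants hidden in $\sigma,\delta = \Omega(1)$ and in $R = O(\sqrt{\log n})$ have to be made to line up.
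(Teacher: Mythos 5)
There is a genuine gap, and it sits exactly at the growth mechanism you chose. Your walk uses \emph{independent} fresh directions $\u_1,\dots,\u_R$ and tries to grow the squared displacement $\Phi_r$ by arguing that each step contributes a component of length $\sigma/\|\u_r\| = \Omega(\sigma/\sqrt{d})$ along a direction nearly orthogonal to the accumulated displacement. Do the arithmetic: if the cross terms vanish, each step adds only $\Omega(\sigma^2/d)$ to $\Phi_r$, so reaching $\Phi_R \geq L = \Omega(1)$ requires $R = \Omega(d/\sigma^2)$. Since one cannot take $d$ below $\Theta(\log n)$ (and the theorem allows arbitrary $d$), this mechanism caps out at $R = \Omega(\log n)$ --- it \emph{is} the ``cruder version'' you mention, and no choice of constants rescues it. Your stated hope that ``$R = \Theta(\sqrt{\log n})$ lets the forced growth overtake the fluctuations'' has no quantitative content behind it; orthogonal increments simply do not accumulate fast enough.

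The actual ARV/Lee argument is built on the opposite geometry: the step directions must be (nearly) \emph{parallel}, not orthogonal, so that the stretches $\sigma$ add up along a \emph{single} fixed direction $\u$. Concretely, for the current endpoint $y$ one looks at the set $A$ of directions where $y$ has an out-edge; skew-symmetry makes $-A$ the set where it has an in-edge, and Gaussian isoperimetry (not independence of fresh samples) produces many pairs $\u \in A$, $\hat\u \in -A$ with $\|\u - \hat\u\| = O(1)$. Chaining $(x,y) \in M(\hat\u)$ with $(y,z) \in M(\u)$ then gives $(\v_z - \v_x)\cdot\u \geq (3/2)\sigma - \|\v_x-\v_y\|\,\|\u-\hat\u\|$, so either a long edge appears or the projection along the \emph{same} $\u$ grows by $\Omega(\sigma)$ per hop. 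After $R$ hops a pair at distance $\leq \sqrt{L}$ would have projection $\Omega(R\sigma)$ onto a standard Gaussian direction, which by the Gaussian tail bound (union-bounded over all $n^2$ pairs) is impossible once $R\sigma \gg \sqrt{L\log n}$; hence a pair with $\|\v_x-\v_y\|^2 \geq L$ must appear within $R = O(\sqrt{\log n})$ hops. This is why the paper's algorithmic version samples $\u_{r+1}$ as a $(1-1/R)$-correlated copy of $\u_r$ rather than independently: the whole point is that ``the projections along each $\u_r$ essentially add together.'' Your appeal to skew-symmetry as a device for decorrelating the matching from the walk's history also misses its actual role, which is precisely to set up the $A$ versus $-A$ isoperimetric step above.
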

In other words, there are vertices $x,y$ with $\|\v_x - \v_y\|^2 \geq L$ that are only $R$
matching hops away in $M$.  Applying theorem \ref{ARV}, there are vertices $x,y$ with
$\|\v_x-\v_y\|^2 \geq L$ but of distance only $RO(1/\kappa)$ under
$\{w_e\}$.  Choosing $\kappa = O(R/L) = O(\sqrt{\log n})$ yields a contradiction to the
assumption that $\{w_e\}$ is dual feasible.

\subsection{Results.}
Our improvement comes from being able to achieve an $O(\sqrt{\log n})$ gap between cut and
flow solutions, as in the latter case, while still only using single-commodity flows, as
in the former case.  
\ifFOCS  In the reduction from $d > 1$ to $d=1$, the distances could indeed be stretched by nearly
$\Theta(\log(n))$, so simply pushing flow along a single direction will not help
achieve anything better; in fact, that is the idea behind our lower bound in the
cut-matching game.\else Recall the case of $d > 1$
was reduced to the $d = 1$ case by projecting along a random vector and
bounding the squared-stretch by $O(\log n)$.  Indeed,
the stretch could be nearly that much, so
simply pushing flow along a single direction will not allow
us to achieve anything better; in fact, that is the main idea behind our lower bound for
the cut-matching game. \fi

To do better, we need to do something more sophisticated than simply push flow
along a single direction.  A natural idea is to try to pick several directions
$\u_1,\ldots,\u_R$, push flow along each of them, and then try to glue the flows together
to actually push flow far away globally.  One motivation for such an approach is that
it seems to be the next simplest thing to do,
following that of using only a single direction.  The second and most crucial motivation is
to observe that \emph{such an approach is strongly suggested by the analysis for the
multicommodity flow algorithm just sketched}.
\ifFOCS \else
To see that, suppose the typical
flowpath along a random $\u$ routes between points of squared-distance $\Delta$.
Theorem \ref{ARV} says we can always augment $R = O(\sqrt{\log n})$ such flowpaths to
route demand between points of squared-distance $L = \Omega(1)$, at the cost of possibly raising
congestion by a factor of $R$.  Thus, either $\Delta \geq L/R = \Omega(1/\sqrt{\log n})$,
or else augmenting together $R$ typical flowpaths and scaling down by $R$ maintains
feasibility and increases the objective of \eqref{MCF}.
\fi

Unfortunately, theorem \ref{ARV} doesn't say anything at all about \emph{finding} such
directions $\u$, or whether the same $\u_1,\ldots,\u_R$ will
simultaneously work for many vertices.  To analyze such an algorithm, we need a stronger,
algorithmic version of theorem \ref{ARV}.  Our main technical contribution is such a
theorem.
\begin{theorem} \label{ARVstrong} For any $1 \leq R \leq \Theta(\sqrt{\log(n)})$, there is
$L \geq \Theta(R^2/\log(n))$ and an (efficiently sample-able) distribution $\D$ over
$(\R^d)^{\leq R}$ with the following property.

If $M$ is an $(\Omega(1), \Omega(1))$-matching-cover for
$\{\v_x\}$, then the expected number of edges $(x,y)$ with $(x,y) \in M(\D)$ and
$\|\v_x-\v_y\|^2 \geq L$ is at least $e^{-O(R^2)}n$.
\end{theorem}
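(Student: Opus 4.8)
The plan is to give an algorithmic refinement of the matching–chaining argument behind Theorem \ref{ARV}. Recall that the original proof proceeds by a potential argument: starting from the set $S_0$ of all $n$ vertices (each reachable from itself along the empty list via the self-loops of $M()$), one repeatedly finds a direction $\u$ along which a large fraction of the ``frontier'' set can be matched and pushed a fixed distance $\sigma$ further in the $\u$ coordinate, while a chain-of-matchings argument controls the geometry. The key quantitative fact is that after $R = O(\sqrt{\log n})$ rounds one reaches a pair at squared-distance $L = \Omega(1)$. For the algorithmic version we cannot adaptively choose the best $\u$ at each step based on the current frontier --- that would require knowing $M$. Instead, we must fix, in advance, a \emph{distribution} $\D$ over lists $(\u_1,\ldots,\u_r)$, $r\le R$, such that for \emph{any} matching-cover $M$, a sample from $\D$ succeeds with probability $e^{-O(R^2)}$. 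The natural candidate for $\D$ is: draw $r$ i.i.d.\ standard Gaussian vectors $\u_1,\ldots,\u_r$ in $\R^d$ (with $r$ chosen appropriately, perhaps deterministically $R$), possibly conditioned on mild events, and output the list.

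First I would set up the frontier process run with a \emph{fixed} list of Gaussians rather than adaptively chosen directions. Define $S_0 = V$ and $S_t = \{\,y : \exists x\in S_0,\ (x,y)\in M(\u_1,\ldots,\u_t)\,\}$, i.e.\ the set of vertices reachable along the first $t$ sampled directions. The Stretch property of the matching-cover gives that whenever $(x,y)\in M(\u_t)$ we have $(\v_y-\v_x)\cdot\u_t \ge \sigma$, so along any surviving chain the accumulated coordinate shift $\sum_{t}(\v_{x_t}-\v_{x_{t-1}})\cdot\u_t$ grows by at least $\sigma$ per step, reaching $\Omega(R)$ after $R$ steps; on the other hand, the $R$-dimensional random projection $(\v_x\cdot\u_1,\ldots,\v_x\cdot\u_R)$ of a fixed point $\v_x$ has squared norm concentrated around $R\|\v_x\|^2$, which is $O(R)$ when $\|\v_x\|\le 1$. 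Comparing these two --- a chain that has drifted by $\Omega(R)$ in a direction where a single endpoint can only have traveled $O(\sqrt R)$ typically --- forces the endpoints $x,y$ of any length-$R$ surviving chain to be genuinely far apart in the original embedding, i.e.\ $\|\v_x-\v_y\|^2 \ge L$ for $L = \Omega(R^2/\log n)$. (The $\log n$ loss enters because we must tolerate the worst-case stretch of the Gaussian projection across all $\binom n2$ pairs, exactly as in the single-commodity reduction.) This is precisely the $L\ge\Theta(R^2/\log n)$ claimed.

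The heart of the matter --- and the main obstacle --- is a lower bound on $\E_{\D}|S_R|$, or more precisely on the expected number of \emph{chains} of length $R$ surviving in $M(\u_1,\ldots,\u_R)$, since each such chain witnesses a far pair. Here I would run the chaining argument of \cite{ARV, JamesLee} but in expectation over the \emph{fixed random} $\u_t$ rather than a cleverly chosen one. The Largeness condition guarantees $\E_\u|M(\u)| \ge \delta n$; the difficulty is that the matched set $M(\u_t)$ need not have large overlap with the current frontier $S_{t-1}$, because $S_{t-1}$ itself depends on $\u_1,\ldots,\u_{t-1}$ and $M$ is adversarial. The ARV chaining trick addresses exactly this by a union-bound / region-growing argument: if too few of $S_{t-1}$ are matchable along $\u_t$ in the ``forward'' direction, one exploits Skew-symmetry to route flow backward along $-\u_t$ and still make progress, at the cost of a constant-factor loss in the frontier size each round. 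Carried out over $R$ rounds, this gives a multiplicative loss of $\beta^R$ for some constant $\beta<1$, i.e.\ $e^{-O(R)}$; the extra room to $e^{-O(R^2)}$ absorbs (i) the probability that the Gaussians $\u_t$ are ``typical'' (bounded norm, the required $\sigma$-gaps actually appear) and (ii) a further $\beta^{\Theta(R)}$-type loss from the $R$-fold nesting of the region-growing steps, each of which may itself cost a factor depending on $R$. The technically delicate point is ensuring the frontier-shrinkage at step $t$ is bounded by a constant \emph{independent of which earlier $\u$'s were drawn}, so that the losses multiply cleanly; I expect this to require carefully tracking, alongside $S_t$, the ``weighted'' frontier $\sum_{y\in S_t} (\text{mass reaching }y)$ and showing it drops by at most a constant factor in expectation, using linearity of expectation over the randomness of $\u_t$ and the Largeness bound applied conditionally.

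Finally I would assemble the pieces: fix $R$ in the stated range, set $L = \Theta(R^2/\log n)$ from the geometric argument of paragraph two, let $\D$ be the distribution on $(\R^d)^{\le R}$ that outputs $R$ i.i.d.\ standard Gaussians (truncated to a typical norm event, with self-loops padding out shorter effective lengths so that $M(\D)$ is well-defined), and conclude from paragraph three that $\E_{\D}[\,\#\{(x,y)\in M(\D): \|\v_x-\v_y\|^2\ge L\}\,] \ge e^{-O(R^2)}n$. Since a single standard Gaussian is trivially sample-able and $R = O(\sqrt{\log n})$, the distribution $\D$ is efficiently sample-able, completing the proof. The one place I would be most careful about is the interaction between the truncation events on the $\u_t$ and the chaining losses --- conditioning the Gaussians can only distort the Largeness bound by a $1-o(1)$ factor if the truncation is to a probability-$\Omega(1)$ event, so it is important to choose the typicality event generously enough that it is hit with constant probability yet tightly enough that the stretch/norm estimates of paragraph two still go through.
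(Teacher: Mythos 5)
The central gap is your choice of $\D$ as $R$ \emph{i.i.d.} standard Gaussians. With independent directions, the geometric claim in your second paragraph fails: the quantity $\sum_{t}(\v_{x_t}-\v_{x_{t-1}})\cdot\u_t \geq R\sigma$ does not telescope and does not force the endpoints apart, because each increment $\v_{x_t}-\v_{x_{t-1}}$ is stretched along its \emph{own} direction $\u_t$. The increments can cancel in space (the chain can even return to its starting point: go ``right'' along $\u_1$, ``up'' along $\u_2$, ``left'' along $\u_3\approx-\u_1$, etc., with every step $\sigma$-stretched along its own direction), so there is no contradiction with $\|\v_x\|\leq 1$ and no lower bound on $\|\v_{x_R}-\v_{x_0}\|$. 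This is exactly why the paper takes $\D$ to be a prefix of a sequence of \emph{highly correlated} Gaussians ($\u_{r+1}$ a $(1-1/R)$-correlated copy of $\u_r$, via \Sample), so that the per-step projections genuinely add up along an essentially common direction; the independent $\w_r$'s are only shuffled in (following Lee) to help the chain grow, at a controlled loss of stretch.

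Relatedly, your third paragraph does not supply the mechanism that makes the frontier survive each round with only an $e^{-O(R)}$ per-step loss. The issue is not resolved by skew-symmetry or by ``routing backward along $-\u_t$'': the reason a vertex reached at step $t$ can be extended at step $t+1$ is a \emph{measure-concentration / isoperimetry} statement --- the set of directions along which $y$ has an out-edge has measure $\Omega(1)$, and a slightly perturbed (correlated) copy of the current direction lands in that set with quantifiable probability. The paper makes this quantitative via a reverse-hypercontractive isoperimetric inequality (Lemma \ref{iso}, from Borell), which is what produces the $e^{-O(R^2)}$ in the expected count; nothing in your sketch replaces it, and with i.i.d.\ $\u_t$ the conditional largeness you hope to track can be destroyed by an adversarial $M$ (the frontier after step $t$ can be concentrated on vertices whose matched directions avoid a fresh independent $\u_{t+1}$ in the required stretched-and-consistent sense). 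So both the choice of distribution and the key analytic lemma need to change for the argument to go through.
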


Using theorem \ref{ARVstrong} and choosing $R = \Theta(\sqrt{\eps \log(n)})$, $L =
\Theta(\eps)$, we simply sample $\u_1,\ldots,\u_R$ from $\D$, and let our final flowpaths be the
concatenation of those along $\u_1,\ldots,\u_R$.
On average, we get $n^{1-\eps}$ such paths, and thus
have \emph{simultaneously} routed $n^{1-\eps}$ paths between points of squared distance $L$
using only $R$ single-commodity flows.
Using an iterative re-weighing scheme and
repeating $O(n^{\eps}\log^{O(1)}n)$ times, we achieve a feasible flow and an
approximation ratio of $O(R/L) = O(\sqrt{\log(n)/\eps})$.

That is, to push flow far away, we sample $\u_1,\ldots,\u_R$ from $\D$ and then
iteratively push flow along each direction.  The distribution $\D$ will essentially
consist of picking a random direction $\u_1$, and then choosing $\u_{r+1}$ to be a
$1-1/R$-correlated copy of $\u_{r}$; i.e., a vector extremely close to $\u_{r}$.
Because $\u_{r+1}$ and $\u_r$ are so close, it is intuitively clear and easy to argue that
\emph{if} flow gets pushed along
at each step, it must be pushed far away, as the
projections along each $\u_r$ will essentially add together.
The somewhat counterintuitive fact is that flow actually does get pushed further along in
this manner.
Even though $\u_{r}$ and $\u_{r+1}$ are extremely close together, a significant
fraction of vertices
that were in the ``sink set'' along $\u_{r}$ will be in the ``source set'' along $\u_{r+1}$.
That phenomenon is a consequence of measure concentration.

\section{\label{ALG}The Algorithm}
While we found it most convenient to discuss expander flows and the corresponding game
in the context of the {\sc sparsest cut} problem, our algorithm applies
most directly to {\sc balanced separator}, which has a similar SDP relaxation and game.
Roughly, the difference is that in the {\sc balanced separator} case the embedding
player must choose an embedding for which the maximum squared distance between points
is not much larger than the average.  When the average distance is $\Theta(1)$, this is
equivalent to the requirement that $\|\v_x\| \leq O(1)$ assumed earlier in section
\ref{FLOWS}.
The reduction from {\sc sparsest cut} to {\sc balanced separator} is well-known, and in
fact, the unbalanced case is ``easy'' in the sense that if the cut found is unbalanced, it
will be an $O(1)$ approximation to the sparsest cut\cite{ARV}.
In particular, Arora and Kale show that one can either obtain an $O(1)$
cut/flow gap with a single max-flow, or else reduce the problem to the balanced case by
finding $\Omega(n)$ points in a ball of radius $O(1)$ that are still spread-out within
that ball; for details, we refer the reader to \cite{AroraKale}.

The precise statement of the results sketched in section \ref{FLOWS} is the following main lemma of \cite{AroraKale}.
\begin{lemma}[\cite{AroraKale}]\label{aroraoracle}
Let $U \subseteq [n]$ be a set of nodes.  Suppose we are given vectors
$\V = \{\v_x\}_{x\in U}$ of length at most $O(1)$ such that $\sum_{x,y\in
U}\|\v_x-\v_y\|^2 = n^2$.
\begin{itemize}
\item There is an algorithm that uses $O(1)$ expected
max-flow computations and outputs either a demand graph $D$ on $U$ of max-degree
$O(\log(n))$ that is
routable in $G$ with $\Phi(\V, D) \geq 1$ or a balanced cut of expansion $O(\log n)$.
\item There is an algorithm that uses a single
multicommodity flow computation and $O(1)$ expected max-flow computations and outputs either a
demand graph $D$ on $U$ of max-degree $O(1)$ that is routable in $G$ with $\Phi(\V,D) \geq 1$ or
a balanced cut of expansion $O(\sqrt{\log n})$.
\end{itemize}
\end{lemma}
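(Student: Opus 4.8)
The plan is to follow Arora and Kale, turning the sketch of Section~\ref{FLOWS} into a proof; the two bullets are its single-commodity and multicommodity halves. After translating by the centroid and rescaling we may assume $\|\v_x\|\le 1$ for all $x\in U$ and $\sum_{x<y\in U}\|\v_x-\v_y\|^2=\Theta(n^2)$, and all graphs below are on $U$. The common engine is a random projection: draw $\u\sim N(0,I_d)$, set $w_x=\v_x\cdot\u$, and call \FlowAndCut$(\kappa,c,w_1,\ldots,w_n)$. Two standard facts drive the analysis. First, a \emph{gap lemma}: with probability $\Omega(1)$ over $\u$ there are thresholds $a<b$ with $b-a=\sigma=\Omega(1)$ such that $A=\{x:w_x\le a\}$ and $B=\{x:w_x\ge b\}$ each have at least $2cn$ vertices, for a fixed $c=\Omega(1)$. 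This is where the hypotheses enter: $\E_\u$ of the empirical variance of $\{w_x\}$ equals $\tfrac1n\sum_x\|\v_x-\bar\v\|^2=\Theta(1)$, while each $w_x-\bar w$ is a Gaussian of $O(1)$ variance, so with constant probability the projected points have spread $\Omega(1)$ yet all but an $o(1)$ fraction lie in a window of $O(1)$ width, from which a ``variance cannot concentrate'' argument extracts the gap. Second, a \emph{stretch bound}: for any fixed pair $(w_x-w_y)^2=g^2\|\v_x-\v_y\|^2$ with $g\sim N(0,1)$, so a union bound over the at most $n^2$ pairs gives $(w_x-w_y)^2\le O(\log n)\|\v_x-\v_y\|^2$ for all pairs simultaneously, with probability at least $1-1/n$. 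Conditioned on the gap event, \FlowAndCut{} is a clean dichotomy. If it returns a cut, its value is below $\kappa cn$, so fewer than $cn$ of the capacity-$\kappa$ edges are severed, leaving more than $cn$ vertices of $A$ on the source side and more than $cn$ of $B$ on the sink side; hence the cut is balanced and has expansion below $\kappa cn/(cn)=\kappa$. If it returns a flow $F\le G$ of value at least $\kappa cn$, then since $w_x\le a$ for $x\in A$ and $w_y\ge b$ for $y\in B$ every routed demand pair has $(w_x-w_y)^2\ge\sigma^2$, so $\sum_{x<y}D_{xy}(w_x-w_y)^2\ge\kappa c\sigma^2 n$, and since each vertex is the source (resp.\ sink) of at most $\kappa$ units of $D$, the weighted max-degree of $D$ is at most $\kappa$.

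For the first bullet, take $\kappa=\Theta(\log n)$. A cut output is already balanced with expansion $O(\log n)$. A flow output, intersected with the probability-$(1-1/n)$ stretch event, satisfies $\sum_{x<y}D_{xy}\|\v_x-\v_y\|^2\ge\frac{1}{O(\log n)}\sum_{x<y}D_{xy}(w_x-w_y)^2\ge\frac{\kappa c\sigma^2}{O(\log n)}n\ge n$ once the constant inside $\kappa$ is large enough, i.e.\ $\Phi(\V,D)\ge 1$, and $D$ has max-degree $O(\log n)$ as above. The gap test is free, so resampling $\u$ until the gap and stretch events both hold, and running \FlowAndCut{} once per trial that passes the gap test, costs $O(1)$ max-flows in expectation.

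For the second bullet, take $\kappa=\Theta(\sqrt{\log n})$, and first solve once the degree-restricted form of \eqref{MCF}: maximize $\sum_{x<y}D_{xy}\|\v_x-\v_y\|^2$ over routable $D$ subject additionally to $\sum_yD_{xy}\le\beta$ for all $x$, where $\beta=O(1)$ is a large constant. Let $D$ be near-optimal and $(\{w_e\},\{z_x\})$ a near-optimal dual. If the optimum is at least $n$, output $D$: it is routable, has max-degree $O(1)$ by construction, and $\Phi(\V,D)\ge 1$. Otherwise the dual value is below $n$, so $\sum_eG_ew_e<n$, $\sum_xz_x=O(n)$, and $d_w(x,y)+z_x+z_y\ge\|\v_x-\v_y\|^2$ for every pair, where $d_w$ is shortest-path distance under $\{w_e\}$. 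I would then show that, conditioned on the gap event, \FlowAndCut{} with this $\kappa$ returns a balanced cut (expansion $O(\sqrt{\log n})$) with probability $\Omega(1)$, so again $O(1)$ max-flows suffice in expectation. Suppose not, so a flow is returned for an $\Omega(1)$ fraction of $\u$. Decompose each such flow into paths, discard those of $w$-length above $2/(\kappa c)$ (which carry less than half the flow, since its total $w$-volume is at most $\sum_eG_ew_e<n$), and divide by $\kappa$ to get a fractional matching $M(\u)$ of size $\Omega(n)$ in which every edge $(x,y)$ has $(\v_y-\v_x)\cdot\u\ge\sigma$; adjoining the reversed $-\u$ matchings makes $\{M(\u)\}$ an $(\Omega(1),\Omega(1))$-matching-cover in which each edge is realized by a $G$-path of $w$-length $O(1/\kappa)$. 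Applying Theorem~\ref{ARV} in the form that produces a linear number of far pairs (which the same matching-chaining argument yields), there are $\u_1,\ldots,\u_R$ with $R=O(\sqrt{\log n})$ and $\Omega(n)$ pairs $(x,y)\in M(\u_1,\ldots,\u_R)$ with $\|\v_x-\v_y\|^2\ge L=\Omega(1)$, each joined in $G$ by a concatenation of $R$ short paths of total $w$-length $R\cdot O(1/\kappa)$. Choosing the constant inside $\kappa$ so that $R\cdot O(1/\kappa)<L/2$ forces $z_x+z_y>L/2$ for every such pair; but $\sum_xz_x=O(n)$ and $L=\Omega(1)$ bound the number of vertices with $z_x>L/4$ by $O(n/\beta)$, an arbitrarily small fraction of $n$ for large $\beta$, and since each vertex has in- and out-degree at most one in $M(\u_1,\ldots,\u_R)$ those vertices meet only a small fraction of the $\Omega(n)$ pairs --- a contradiction.

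The gap lemma and the single-commodity dichotomy are routine; the obstacle I expect to fight is the degree interface in the second bullet. Pinning the good output to degree $O(1)$ forces the degree-restricted flow, which pays in the dual with the extra terms $z_x$, and absorbing those is exactly what makes it necessary to push the matching-chaining argument until it delivers $\Omega(n)$ far pairs rather than one, so that the $o(n)$ vertices carrying large $z_x$ can be charged away. What remains is bookkeeping: choosing $\sigma,c,\beta,L,R$ and the constant inside $\kappa$ mutually consistently, and checking that the $R=O(\sqrt{\log n})$, $L=\Omega(1)$ delivered by Theorem~\ref{ARV} are compatible with $\kappa=\Theta(\sqrt{\log n})$.
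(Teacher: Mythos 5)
Your reconstruction of the first bullet and the overall architecture of the second (degree-restricted flow, dual with vertex variables $z_x$, contrapositive via a matching-cover and chaining) is the intended one; the paper only cites this lemma from \cite{AroraKale} and sketches it in section \ref{FLOWS}, and your single-commodity half matches that sketch. But the second bullet has a genuine gap at exactly the point you flagged as the ``degree interface.'' Your charging argument needs Theorem \ref{ARV} to deliver $\Omega(n)$ far pairs for a single sequence $\u_1,\ldots,\u_R$, so that the $\Theta(n/\beta)$ vertices with $z_x>L/4$ (a constant fraction of $n$ for any constant $\beta$) can only spoil a minority of them. No such version is available: the quantitative form the chaining argument actually yields is Theorem \ref{ARVstrong}, whose count is $e^{-O(R^2)}n$, and at $R=\Theta(\sqrt{\log n})$ that is $n^{1-\Omega(1)}$, not $\Omega(n)$. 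With only $n^{1-\Omega(1)}$ far pairs, every one of them can have an endpoint with large $z_x$, and no constant $\beta$ rescues the argument (a superconstant $\beta$ would break the max-degree-$O(1)$ conclusion). So the step ``which the same matching-chaining argument yields'' is asserting a strengthening that the argument does not give.

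The repair is to move the pruning before the chaining rather than after, which is what the paper does for the analogous situation in \Matching\ and Lemma \ref{shortmatch}: when decomposing each returned flow, discard not only the long paths but also every flowpath whose endpoint has $z_x>L/4$. Since the dual objective bounds $\beta\sum_x z_x$, only $O(n/(\beta L))$ vertices are discarded, which for a suitably large constant $\beta$ removes at most a small constant fraction of the $\kappa cn$ units of flow; a counting argument then shows $\Omega(n)$ matched pairs survive, so the pruned object is still an $(\Omega(1),\Omega(1))$-matching-cover. Now \emph{every} edge of the cover joins vertices with small $z$ via a path of $w$-length $O(1/\kappa)$, and the single far pair guaranteed by Theorem \ref{ARV} as stated already violates the dual constraint $d_w(x,y)+z_x+z_y\ge\|\v_x-\v_y\|^2$ once the constant in $\kappa=\Theta(\sqrt{\log n})$ is chosen so that $R\cdot O(1/\kappa)+L/2<L$. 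With that reordering your proof goes through; the rest (gap lemma, stretch bound, the cut dichotomy, the degree bounds) is correct.
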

The importance of the degree is for the running time; if each $D^t$ has max-degree $\beta$,
then the total number of iterations needed is $O(\beta \log(n))$\cite{AroraKale}.
To prove theorem \ref{alg}, we replace lemma \ref{aroraoracle} with the following.
\begin{lemma} \label{oracle} Let $U$, $\V$ be as in lemma \ref{aroraoracle}.
For any $\eps \in [O(1/\log(n)), \Omega(1)]$, there is an algorithm that
uses $O(n^{\eps}\log^{O(1)}(n))$ expected max-flow computations and outputs either
a demand graph $D$ on $U$ of max-degree $O(1/\eps)$ routable in $G$ with $\Phi(\V,D) \geq 1$ or a
balanced cut of expansion $O(\sqrt{\log(n)/\eps})$.
\end{lemma}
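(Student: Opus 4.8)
The plan is to use Theorem~\ref{ARVstrong} as the structural engine, turn it into a single-commodity-flow oracle, and wrap that oracle in an Arora--Kale-style multiplicative-weights loop in place of their multicommodity oracle from Lemma~\ref{aroraoracle}.

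\emph{Parameters and the per-step subroutine.} First I would set $R=\Theta(\sqrt{\eps\log n})$ (legal since $\eps\ge\Omega(1/\log n)$) and take $L=\Theta(\eps)$ and the distribution $\D$ from Theorem~\ref{ARVstrong}; the target expansion is $\kappa=\Theta(R/L)=\Theta(\sqrt{\log(n)/\eps})$, and constants $\sigma,c=\Theta(1)$ come from the spread of $\{\v_x\}$ (since $\|\v_x\|\le O(1)$ and the average squared distance is $\Omega(1)$, a random projection has a gap $[a,b]$ of width $\sigma$ with $2cn$ points on each side with probability $\Omega(1)$, exactly as in Section~\ref{FLOWS}). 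Throughout I maintain lengths $\ell_e\ge 0$ on the edges of $G$ (initially uniform), scaled so $\sum_e G_e\ell_e$ is held fixed. The per-step subroutine samples $(\u_1,\dots,\u_R)\sim\D$ and, for each $r$, projects $w_x:=\v_x\cdot\u_r$ and runs $\FlowAndCut(\kappa,c,\{w_x\})$. If any of the $R$ calls returns a cut of capacity at most $\kappa cn$, then at most $cn$ of the added $\kappa$-edges are cut, leaving $\ge cn$ vertices on each side, so I output that balanced cut; its expansion is at most $\kappa=O(\sqrt{\log(n)/\eps})$ and we are done. Otherwise every call returns a flow of value $\ge\kappa cn$; I decompose it, pass to a sub-matching carrying $\kappa$ on each matched pair, and keep only the pairs whose flowpath has $\ell$-length $O(1/\kappa)$ --- at least half of them do, since the total weighted flow is at most $\sum_e G_e\ell_e$ while the flow value is $\Omega(\kappa n)$ --- calling the result $M(\u_r)$.

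\emph{Applying Theorem~\ref{ARVstrong}.} Conditioned on the event that no $\FlowAndCut$ call ever returns a small cut, the family $\{M(\u)\}$ (as above when the projection has a valid gap, empty otherwise) is an $(\Omega(1),\Omega(1))$-matching-cover: stretch holds because matched pairs straddle the width-$\sigma$ gap; skew-symmetry is obtained by choosing the gap and breaking ties symmetrically under $\u\mapsto-\u$; and largeness holds because the gap event has probability $\Omega(1)$ and then $|M(\u)|\ge cn/2$. Each $(x,y)\in M(\u)$ has $\ell$-distance $O(1/\kappa)$, so Theorem~\ref{ARVstrong} applies: concatenating the flowpaths along $\u_1,\dots,\u_R$ and scaling the result down by $R$ (to restore $F\le G$) produces a demand graph $D^t$ that is routable in $G$, has max degree $\kappa/R=O(1/\eps)$, is carried on flowpaths of $\ell$-length $O(R/\kappa)=O(\eps)$, and satisfies
\[
\E_{\D}\big[\,\sum_{x<y} D^t_{xy}\|\v_x-\v_y\|^2\,\big]\ \ge\ \frac{\kappa L}{R}\,e^{-O(R^2)}\,n\ =\ \Omega\big(n^{1-\Theta(\eps)}\big),
\]
because $\kappa L/R=\Theta(1)$ and $e^{-O(R^2)}=n^{-\Theta(\eps)}$.

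\emph{Amplification and the main obstacle.} I then run the subroutine for $T=O(n^{\eps}\log^{O(1)}n)$ steps, after each step multiplicatively increasing $\ell_e$ in proportion to the congestion $F^t_e/G_e$; if a step ever outputs a cut we stop. The remaining work is to run the Arora--Kale multiplicative-weights accounting and show the $D^t$'s can be combined --- with the Garg--K\"onemann-style normalization --- into a single demand graph $D$ routable in $G$, still of max degree $O(1/\eps)$, with $\sum_{x<y}D_{xy}\|\v_x-\v_y\|^2\ge n/2$, i.e.\ $\Phi(\V,D)\ge 1$. Each subroutine call is $R=O(\sqrt{\log n})$ max-flows, so the total is $O(n^{\eps}\log^{O(1)}n)$ max-flows. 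I expect this last step to be the main obstacle: the reweighting must do double duty, simultaneously keeping the accumulated flow inside $G$ and spreading the demand over vertices so the degree bound survives, and one must verify that the short-flowpath guarantee ($\ell$-length $O(\eps)$ per path) makes each step's weighted congestion small enough that $T=O(n^{\eps}\log^{O(1)}n)$ steps suffice for the objective to reach $n/2$. A secondary technical point is making $\{M(\u)\}$ a bona fide matching-cover --- exact skew-symmetry under $\u\mapsto-\u$, and handling the conditioning on ``no cut found yet'' without spoiling the $\Omega(1)$ gap probability --- so Theorem~\ref{ARVstrong} can be invoked verbatim; the reduction from the unbalanced case and the outer binary search over scalings are inherited from Lemma~\ref{aroraoracle} and \cite{AroraKale} and need only routine adaptation.
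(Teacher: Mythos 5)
Your outline follows the paper's proof: the same per-step subroutine (\FlowAndCut\ along correlated directions sampled from the distribution of Theorem \ref{ARVstrong}, pruning to matchings carried on short flowpaths) wrapped in a multiplicative-weights loop. But the step you defer as ``the main obstacle'' is in fact the substance of the lemma, and as sketched your accounting does not close. The tension is this: each step's matching, weighted at $\kappa/R=O(1/\eps)$ per pair, routes only $\approx n^{1-\eps}$ pairs at squared distance $L=\Theta(\eps)$, so its objective value is $\approx n^{1-\eps}$, and averaging $T$ such steps yields objective $n^{1-\eps}$, not $n$. You must therefore scale each step's demand up by $\Theta(n^{\eps})$ before feeding it to the MW framework --- but then the per-step max degree becomes $O(n^{\eps}/\eps)$, and your observation that ``each $D^t$ has max degree $O(1/\eps)$'' no longer gives any bound on the final average. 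The paper resolves this by making the degree bound an explicit constraint of the LP \eqref{primal}, with dual \emph{vertex} weights $w_x$ alongside the edge lengths $w_e$; the \Matching\ routine discards any flowpath whose endpoint has $w_x>L/3$, and the MW guarantee (Theorem \ref{PST}) then certifies that the averaged demand satisfies the degree constraint up to a constant. Maintaining only edge lengths $\ell_e$, as you propose, gives no mechanism to prevent one vertex from absorbing the $n^{\eps}$-scaled demand in a constant fraction of rounds.

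Three further points you would need to supply. First, the width: the scaled-up per-step solution violates capacities by $\rho=O(n^{2\eps}\log n)$, so the MW loop runs for $T=O(n^{2\eps}\log^2 n)$ rounds, and each round needs an expected $O(n^{\eps})$ samples from $\D$ (Theorem \ref{ARVstrong} only bounds the \emph{expected} number of long edges, so one sample does not suffice); the total is $O(n^{2\eps}\log^{O(1)}n)$ max-flows, which matches the lemma only after renaming $\eps$ --- your claimed $T=O(n^{\eps}\log^{O(1)}n)$ with per-step value $n^{1-\eps}$ does not reach objective $n$. Second, the oracle condition \eqref{relfeas} is verified not by a global congestion bound but by routing only along paths $p:x\leftrightarrow y$ with $w_x,w_y,\sum_{e\in p}w_e\le L/3$ while $\|\v_x-\v_y\|^2\ge L$, i.e.\ only along violated dual constraints; this requires discarding flowpaths of $w$-length exceeding $L/3R$ (not merely $O(1/\kappa)$) \emph{and} the vertex-weight pruning above, and one must check (as in Lemma \ref{shortmatch}) that these three prunings together delete at most $3\kappa cn/4$ of the flow so that largeness of the matching-cover survives. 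Third, rounding flows to unit matchings can blow up congestion by $m/(\text{min flowpath value})$, so flowpaths carrying less than $\kappa cn/4m$ must also be discarded; otherwise the congestion of the concatenated routing is uncontrolled.
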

For the rest of this section, we prove lemma \ref{oracle}.
We first immediately try to find a cut, using \FlowAndCut.  The parameters $c,\sigma$
are set by the following lemma.
\begin{lemma}[\cite{ARV}]\label{goodu} Let $U$, $\V$ be as in lemma \ref{aroraoracle}.  Then, there exist
$c,\sigma,\gamma = \Omega(1)$ so for a random $\u$, with probability at least $\gamma$
the sets $A,B$ in \FlowAndCut$(\cdot, c, \{\v_x \cdot \u\}_{x\in U})$ have $(\v_y - \v_x)
\cdot \u \geq \sigma$ for all $x\in A, y\in B$.
\end{lemma}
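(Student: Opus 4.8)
The plan is to show that, after projecting along a random Gaussian $\u$, with constant probability a constant fraction of the points land above some constant threshold $\sigma_0$ and another constant fraction land below $-\sigma_0$; since \FlowAndCut takes $A$ and $B$ to be the extreme $2cn$ points on each side, this immediately gives $(\v_y-\v_x)\cdot\u = \v_y\cdot\u - \v_x\cdot\u \geq 2\sigma_0$ for all $x\in A$, $y\in B$, which is the claim with $\sigma = 2\sigma_0$. First I would normalize: translating the embedding so that $\sum_{x\in U}\v_x = 0$ preserves all pairwise distances, keeps $\|\v_x\|\leq C=O(1)$, and only shifts every projection $w_x=\v_x\cdot\u$ by a common constant, hence changes neither the sorted sets $A,B$ nor the differences $(\v_y-\v_x)\cdot\u$. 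After centering, $\sum_{x,y\in U}\|\v_x-\v_y\|^2 = 2|U|\sum_{x\in U}\|\v_x\|^2 = n^2$ and $\|\v_x-\v_y\|^2\leq 4C^2$ force $|U|=\Omega(n)$, $\sum_{x\in U}\|\v_x\|^2\geq n/2$, and therefore (using $\|\v_x\|^2\leq C\|\v_x\|$) $\sum_{x\in U}\|\v_x\| = \Omega(n)$.

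Now set $X_x=\v_x\cdot\u$, so $X_x\sim N(0,\|\v_x\|^2)$ and, crucially, $\sum_{x\in U}X_x = 0$. The key object is the total positive mass $W=\sum_{x:X_x>0}X_x$; because $\sum_x X_x=0$ this equals both $\tfrac12\sum_x|X_x|$ and the total negative mass $\sum_{x:X_x<0}(-X_x)$. A short computation gives $\E[W]=\tfrac12\sqrt{2/\pi}\sum_x\|\v_x\| = \Omega(n)$, while $\E[W^2]\leq\tfrac14\big(\sum_x\|\v_x\|\big)^2 = O(n^2)$ (bounding $\E[|X_x||X_y|]\leq\|\v_x\|\|\v_y\|$ by Cauchy--Schwarz), so Paley--Zygmund yields $W\geq\Omega(n)$ with probability $\Omega(1)$. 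Separately I would control the far tail: $\E\big[\sum_x X_x\mathbf{1}[X_x\geq\tau]\big] = \sum_x\|\v_x\|\,\phi(\tau/\|\v_x\|)\leq Cn\,\phi(\tau/C)$ (using that $s\mapsto s\phi(\tau/s)$ is increasing), which decays like $e^{-\Omega(\tau^2)}$, so for a suitable \emph{constant} $\tau$ Markov makes $\sum_x X_x\mathbf{1}[X_x\geq\tau]\leq\epsilon_0 n$ with probability arbitrarily close to $1$, and the same for the negative tail. On the good event (W large, both far tails small), which still has probability $\Omega(1)$ because the two rare bad events subtract only a tiny amount from the Paley--Zygmund bound, the bulk positive mass $\sum_{0<X_x<\tau}X_x$ is $\Omega(n)$; since it is at most $\sigma_0 n + \tau\cdot|\{x:X_x\geq\sigma_0\}|$, choosing $\sigma_0$ a small enough constant forces $|\{x:X_x\geq\sigma_0\}|=\Omega(n)$. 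On this \emph{same} event the negative mass is equally large and its tail equally controlled, so $|\{x:X_x\leq-\sigma_0\}|=\Omega(n)$ as well; taking $2c$ to be the resulting $\Omega(1)$ constant and noting these two sets are disjoint (so $A$ and $B$ do not overlap) finishes the proof, with $\gamma$ the probability of the good event and $\sigma=2\sigma_0$.

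The step I expect to be the real obstacle — and the reason the naive approach fails — is obtaining \emph{both} sides simultaneously. Estimating $\E|\{x:X_x\geq\sigma_0\}|$ and $\E|\{x:X_x\leq-\sigma_0\}|$ separately and applying reverse Markov to each only shows each side is large with probability $\Omega(1)$, and a union bound over the complements is useless since those probabilities need not exceed $\tfrac12$; in fact the two one-sided events could a priori be perfectly anti-correlated. The centering identity $\sum_x X_x=0$ is exactly what breaks this: it ties the negative count to the positive count through the shared quantity $W$, so one good event controls both sides at once. Everything else — the Gaussian tail estimate, Paley--Zygmund, Cauchy--Schwarz, and the conversion of ``mass above $\sigma_0$ outside the far tail'' into ``count above $\sigma_0$'' — is routine, and all constants $c,\sigma,\gamma$ depend only on the $O(1)$ bound on $\|\v_x\|$.
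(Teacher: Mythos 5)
Your proof is correct. Note first that the paper itself gives no proof of this lemma---it is imported wholesale from \cite{ARV}---so there is nothing internal to compare against; what you have written is a self-contained replacement. The argument checks out at every step: centering the embedding is harmless (it shifts every projection by the common constant $\bar{\v}\cdot\u$, so the sorted order, the sets $A,B$, and the differences $(\v_y-\v_x)\cdot\u$ are untouched), and after centering the normalization $\sum_{x,y}\|\v_x-\v_y\|^2=2|U|\sum_x\|\v_x\|^2=n^2$ together with $\|\v_x\|\le O(1)$ does give $|U|=\Omega(n)$ and $\sum_x\|\v_x\|=\Omega(n)$. The Paley--Zygmund bound on $W=\tfrac12\sum_x|X_x|$, the tail estimate $\E[X\mathbf{1}[X\ge\tau]]=s\phi(\tau/s)$ with monotonicity in $s$, and the mass-to-count conversion are all right. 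Your diagnosis of the main subtlety is also the correct one: the two one-sided events cannot be combined by a union bound, and the identity $\sum_x X_x=0$ (available only after centering) is what forces the positive and negative masses to be equal on a single event of probability $\Omega(1)$. This is a somewhat different route from the usual \cite{ARV}-style proofs, which typically argue either via pair-counting (a constant fraction of pairs are $\Omega(1)$-separated, hence $\sigma$-separated in projection with constant probability, and one then extracts the two quantile sets) or via a second-moment bound on the projected diameter/quantiles; your version trades that combinatorial step for the centering identity plus a one-dimensional mass argument, which is arguably more elementary. Two tiny points worth making explicit in a written version: you need $|U|\ge 4cn$ so that the two sets of size $2cn$ really are the extreme ends of the sorted order and are disjoint (this follows from $|U|=\Omega(n)$ and choosing $c$ small enough), and the final $\gamma$ is obtained by subtracting the two small tail-failure probabilities from the Paley--Zygmund constant, so $\tau$ must be fixed after that constant is known.
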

Let us call the $\u$ described by lemma \ref{goodu} \emph{good}, and set $\delta = \gamma
c/16$.   Let $\eps \in [O(1/\log(n)), \Omega(1)]$ be given so that $R = O(\sqrt{\eps \log n})$
yields an expected size bound of $n^{1-\eps}$ in
theorem \ref{ARVstrong}.  Set $L = \Omega(\eps)$ as in theorem \ref{ARVstrong},
$\kappa = 24R/cL$, and $\beta = 12/cL$.
The following easy lemma was sketched in section \ref{FLOWS}.
\begin{lemma}[\cite{KRV, AroraKale}]\label{cutval} If \FlowAndCut$(\kappa, c,\ldots)$ returns a
cut of capacity at most $\kappa cn$, then the cut is $cn$-balanced and has expansion at
most $\kappa$.
\end{lemma}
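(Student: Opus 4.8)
The plan is to read off the structure of the minimum $s$-$t$ cut produced by \FlowAndCut and do a little bookkeeping on its edges. Write $G'$ for the augmented graph, i.e.\ $G$ together with $s$, $t$ and the capacity-$\kappa$ edges from $s$ to $A$ and from $t$ to $B$, and let $(S,\overline{S})$ be the returned cut, normalized so that $s\in S$ and $t\in\overline{S}$. First I would classify the edges of $G'$ crossing $(S,\overline{S})$ into exactly three kinds: (i) original edges of $G$ with one endpoint in $S\cap V$ and the other in $\overline{S}\cap V$; (ii) edges $sx$ with $x\in A\cap\overline{S}$; (iii) edges $ty$ with $y\in B\cap S$. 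Edges of kinds (ii) and (iii) each have capacity exactly $\kappa$.

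Next I would apply the capacity hypothesis. Since the total capacity across $(S,\overline{S})$ is at most $\kappa cn$ and edges of kinds (ii) and (iii) each contribute $\kappa$ (while kind (i) edges contribute a nonnegative amount), there are at most $cn$ edges of kinds (ii) and (iii) combined; in particular $|A\cap\overline{S}|\le cn$ and $|B\cap S|\le cn$. Because $|A|=|B|=2cn$, this yields $|S\cap V|\ge|A\cap S|\ge cn$ and $|\overline{S}\cap V|\ge|B\cap\overline{S}|\ge cn$, so $(S\cap V,\overline{S}\cap V)$ is a genuine nontrivial cut of $G$ with $\min\{|S\cap V|,|\overline{S}\cap V|\}\ge cn$; that is the $cn$-balance claim. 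Finally, for the expansion bound I would note that the capacity of the $G$-edges crossing $(S\cap V,\overline{S}\cap V)$ equals the kind-(i) contribution, which is at most the total cut capacity $\kappa cn$; dividing by the balance bound $cn$ gives expansion at most $\kappa$.

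Honestly there is no real obstacle here — this is exactly why \cite{KRV, AroraKale} and the present paper treat it as an ``easy'' fact. The only hygiene point worth a sentence is that $A$ and $B$ must be well-defined disjoint sets of size $2cn$, which requires $c$ to be bounded away from $1/4$; this is automatic since $c=\Omega(1)$ is the small constant supplied by Lemma \ref{goodu} and $n$ may be taken large (and ties in the sorted keys $\{w_x\}$ can be broken arbitrarily).
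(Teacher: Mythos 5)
Your proof is correct and is exactly the argument the paper sketches in Section 2.1 (and attributes to \cite{KRV, AroraKale}): at most $cn$ of the added $\kappa$-capacity edges can be cut, so at least $cn$ vertices of $A$ (resp.\ $B$) stay on the source (resp.\ sink) side, and the residual $\kappa cn$ of capacity on $G$-edges divided by the $cn$ balance gives expansion at most $\kappa$. Nothing further is needed.
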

We sample $O(\log(n))$ independent $\u$, and run \FlowAndCut$(\kappa, c, \{\v_x
\cdot \u\})$.  If we ever find a cut of capacity at most $\kappa cn$, we immediately
output it and stop, yielding a balanced cut of expansion $\kappa =
O(\sqrt{\log(n)/\eps})$.
Otherwise, with very high probability, we are in the situation where
there are at least $\gamma/2$ good $\u$ for which a flow of value at least $\kappa cn$ is
returned.  In the latter scenario, we will find a flow $D$ with $\Phi(\V, D) \geq 1$.

\subsection{Finding a Flow}
We efficiently find a solution to the maximum multicommodity flow problem
\begin{equation}\label{primal}
\begin{split}
\max \sum_{x<y}D_{xy}\|\v_x-\v_y\|^2 \\
s.t. \quad F \leq G, \quad \max_x \deg_D(x) \leq \beta
\end{split}
\end{equation}
of value at least $n$.  The dual assigns lengths
$\{w_e\}$ to edges and $\{w_x\}$ to the vertices, with the constraint that
the shortest path distance from $x$ to $y$ under these lengths dominate $\|\v_x -
\v_y\|^2$.
\[
 \min \sum_{e} G_e w_e + \sum_x \beta w_x
\]
\[s.t. \quad
\forall p: x \leftrightarrow y\quad  w_x + w_y + \sum_{e \in p}w_e \geq \|\v_x-\v_y\|^2
\]

We use the \emph{multiplicative weights} framework to approximately solve
\eqref{primal}.
%We remark that Arora and Kale generalized that framework to
%SDPs s\emph{matrix multiplicative weights} to design the embedding strategy for their
%algorithm.
\begin{theorem}[\cite{Plotkin, Freund, MW}]\label{PST} Let $A\in \R^{m \times n}, b \in \R^m$ with $b > 0$,
and consider the following iterative procedure to find an approximate solution to $Ax \leq b$.

Initialize $y^{1} \in \R^m$ to the all-1s vector.  On
iteration $t$, query an oracle that returns $x^{t}$ such that $0 \leq Ax^{t} \leq \rho b$ and
$y^{t} \cdot Ax^t \leq y^t \cdot b$, and then update
\[ y^{t+1}_j \gets \left(1+\eta \frac{(Ax^t)_j}{\rho b_j}\right) y^t_j \]

If $0 < \eta < 1/2$, then after $T = \rho \eta^{-2}\log(n)$
iterations, $A\left(\frac{x^1 + \cdots + x^T}{T}\right) \leq (1+4\eta)b$.
\end{theorem}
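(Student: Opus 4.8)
The plan is the standard multiplicative-weights potential argument. First I would normalize the oracle's output: set $a^t_j := (Ax^t)_j/(\rho b_j)$, so the width guarantee $0 \le Ax^t \le \rho b$ is exactly the statement $a^t_j \in [0,1]$, and the update rule becomes simply $y^{t+1}_j = (1+\eta a^t_j)y^t_j$ with $y^1 = \one$. I would then track the $b$-weighted potential $\Phi^t := \sum_j b_j y^t_j$; weighting by $b_j$ is precisely what makes the oracle inequality usable, and $\Phi^1 = \sum_j b_j$.

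For the upper bound on $\Phi^{T+1}$, expanding the update gives $\Phi^{t+1} = \Phi^t + \eta\sum_j b_j a^t_j y^t_j$, while the oracle condition $y^t\cdot Ax^t \le y^t\cdot b$, after substituting $(Ax^t)_j = \rho b_j a^t_j$, rearranges to $\sum_j b_j a^t_j y^t_j \le \Phi^t/\rho$. Hence $\Phi^{t+1}\le(1+\eta/\rho)\Phi^t\le e^{\eta/\rho}\Phi^t$, and telescoping over $T$ iterations yields $\Phi^{T+1}\le\big(\sum_j b_j\big)e^{\eta T/\rho}$. For the lower bound, since every $b_j>0$ and every $y^{t}_j>0$, we have $\Phi^{T+1}\ge b_j y^{T+1}_j = b_j\prod_{t=1}^T(1+\eta a^t_j)$ for each coordinate $j$ separately; the elementary inequality $\ln(1+\eta a)\ge(\eta-\eta^2)a$ for $a\in[0,1]$ (the difference vanishes at $a=0$ and has nonnegative derivative for $\eta\in(0,1)$) then gives $\prod_t(1+\eta a^t_j)\ge\exp\!\big((\eta-\eta^2)\sum_t a^t_j\big)$.

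Combining the two bounds and taking logarithms, with $\bar a_j:=\tfrac1T\sum_t a^t_j=(A\bar x)_j/(\rho b_j)$ where $\bar x=\tfrac1T\sum_t x^t$, I obtain
\[ (\eta-\eta^2)\,T\,\bar a_j \;\le\; \ln\!\Big(\tfrac{\sum_k b_k}{b_j}\Big) + \frac{\eta T}{\rho}. \]
Plugging in the stated number of iterations $T=\rho\eta^{-2}\log(\cdot)$ (the logarithm here is of the initial potential's spread $\sum_k b_k/\min_j b_j$, which is $O(\log n)$ in our application) makes the first term at most $\eta/((1-\eta)\rho)$, so $\bar a_j\le (1+\eta)/((1-\eta)\rho)$, i.e. $(A\bar x)_j\le\frac{1+\eta}{1-\eta}\,b_j$. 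A one-line check shows $\frac{1+\eta}{1-\eta}\le 1+4\eta$ exactly when $\eta\le 1/2$, which is the hypothesis, giving $A\bar x\le(1+4\eta)b$.

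There is no genuine obstacle here — this is the textbook analysis — so the only thing requiring care is the constant bookkeeping: choosing $T$ so that the slack lost to $1+z\le e^z$, to $\ln(1+\eta a)\ge(\eta-\eta^2)a$, and to the final estimate $\frac{1+\eta}{1-\eta}\le 1+4\eta$ all fit within the stated $(1+4\eta)$ budget, and matching the $\log n$ in the iteration count to the logarithm of the potential's initial-to-minimal ratio.
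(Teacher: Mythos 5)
The paper cites this theorem from the multiplicative-weights literature and gives no proof of its own, so there is nothing internal to compare against; your argument is the standard, correct potential-function proof, and the $b$-weighted potential $\Phi^t=\sum_j b_j y^t_j$ is exactly the right choice to make the oracle inequality $y^t\cdot Ax^t\le y^t\cdot b$ usable. The one point worth flagging is the one you already flag: the argument really needs $T\ge\rho\eta^{-2}\ln\bigl(\sum_k b_k/\min_j b_j\bigr)$, which matches the stated $\rho\eta^{-2}\log(n)$ only up to the usual normalization (here $m=O(n\log n)$ rows and the $b_j$ have polynomially bounded spread), so the theorem as literally quoted is slightly informal but your reading of it is the intended one.
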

We use theorem \ref{PST} with $\eta = 1/4$, initializing the dual
variables $\{w_e\}, \{w_x\}$ and updating them accordingly.  On iteration $t$, we
find a flow $(F^t,D^t)$ of objective value $2n$ that violates
the constraints by at most a factor of $\rho = O(n^{2\eps}\log n)$ and
\begin{equation}
\sum_{e} w_e F^t_e + \sum_x w_x \deg_{D^t}(x) \leq \sum_e w_eG_e + \sum_x w_x \beta
\label{relfeas}
\end{equation}
After $T = O(n^{2\eps}\log^{2}(n))$ rounds, scaling the average flow down by $2$ yields a
feasible flow of objective value $n$.
Noting that \eqref{relfeas} and the algorithm of theorem \ref{PST}
are invariant to scaling of the dual variables,
for convenience we will also scale them on each
iteration so that $\sum_{e} w_e G_e + \sum_x w_x \beta = 2n$.
In that case, any flow of objective value $2n$ that only routes along violated or tight paths
(those $p : x \leftrightarrow y$ for which $\sum_{e \in p}w_e + w_x + w_x \leq \|\v_x-\v_y\|^2$)
satisfies \eqref{relfeas}.
In our algorithm, we will only route flow along paths $p : x \leftrightarrow y$ for which
$\|\v_x-\v_y\|^2 \geq L$,
and $w_x,w_y,\sum_{e \in p}w_e \leq L/3$.

All flows will come from augmenting flows returned by \FlowAndCut, where
we identify single-commodity flows in $G \cup \{s,t\}$ with multicommodity flows
in $G$ in the obvious way.  If $F$ is an acyclic $s-t$ flow in $G\cup\{s,t\}$,
it is well-known that $F$ can be decomposed into at most $m$ flowpaths.
While computing such a decomposition could require $\Omega(nm)$ time,
fortunately we need only pseudo-decompose flows in the following sense.
\begin{defn} If $F$ is an acyclic $s-t$ flow in $G \cup \{s,t\}$ with a flow decomposition
$((f_i,p_i))_{i \leq m}$, then a list $P = ((f_i, s_i, t_i, \ell_i))_{i \leq m}$
where $p_i = s,s_i,\ldots,t_i,t$ and $\sum_{e \in p_i} w_e = \ell_i$ is a
\textbf{pseudo-decomposition} of $F$.  That is, a pseudo-decomposition is a list
containing the amount of flow, second vertex, second-to-last vertex, and length of each flowpath.
\end{defn}

The following two lemmas are easy applications of dynamic trees(see \cite{SleatorTarjan}).
\begin{lemma}\label{decomp}
Given a flow $F$ on $G\cup\{s,t\}$, a pseudo-decomposition can be computed in
$O(m\log n)$ time.
\end{lemma}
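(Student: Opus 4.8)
The plan is to realize $F$'s pseudo-decomposition via the standard link-cut (dynamic) tree data structure of Sleator and Tarjan, carrying out a sequence of at most $m$ path-peeling operations, each costing $O(\log n)$ amortized time. First I would preprocess: since $F$ is an acyclic $s$--$t$ flow on $G \cup \{s,t\}$, orient every edge in the direction of positive flow, discard zero-flow edges, and regard the resulting DAG as the support of $F$. Each node of this DAG (other than $s,t$) has both incoming and outgoing flow, and by flow conservation the total flow out of $s$ equals the total flow into $t$ equals the value of $F$. The data structure I maintain is a forest on the vertex set of the DAG in which each vertex stores a pointer to one ``active'' outgoing edge; following active edges from $s$ traces a current candidate flow path. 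Along with each tree edge I store its residual flow $f_e$ and its assigned length $w_e$, so that dynamic-tree aggregate queries return, for the current root-to-leaf path, both $\min_e f_e$ (the bottleneck) and $\sum_e w_e$ (the length $\ell$).

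The main loop is: starting from $s$, walk along active edges until reaching $t$, obtaining a path $p = s, s_1, \ldots, t_1, t$; query the dynamic tree to get $f = \min_{e \in p} f_e$ and $\ell = \sum_{e \in p} w_e$; emit the tuple $(f, s_1, t_1, \ell)$ into $P$; then subtract $f$ from the residual flow of every edge on $p$ (a single dynamic-tree ``add-to-path'' update) and cut every edge on $p$ whose residual has dropped to zero, reassigning a fresh active outgoing edge to each vertex that loses its active edge (each vertex has at least one remaining outgoing edge unless it becomes a dead end, in which case its remaining incoming edges will be cut in a later iteration or have already been cut). Each iteration saturates at least one edge, so the loop runs at most $m$ times; the walk from $s$ to $t$ can be made $O(\log n)$ per iteration by using the dynamic-tree ``find the root of the tree containing $s$'' primitive rather than literally stepping edge by edge, since the active-edge pointers are exactly the tree-parent pointers. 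Correctness — that the emitted tuples form a pseudo-decomposition of some genuine flow decomposition $((f_i,p_i))_i$ — follows because the procedure is exactly the textbook path-decomposition of a DAG flow, only with the explicit vertex lists of the $p_i$ suppressed in favor of their endpoints-after-$s$, endpoints-before-$t$, and $w$-lengths, all of which the dynamic tree reports exactly.

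The one subtlety, and the step I expect to need the most care, is maintaining the invariant that active-edge pointers always form a forest — i.e., that following them never cycles — so that the dynamic-tree operations are well defined. This is where acyclicity of $F$ is used: process the DAG in a fixed topological order and always choose, as a vertex's active outgoing edge, any edge to a topologically later vertex; then the active-edge relation is a sub-relation of the topological order and hence acyclic, so it is a forest at all times, and in particular the unique active path out of $s$ always terminates (it cannot loop, and every non-sink vertex with positive residual out-flow has an active edge, so it can only terminate at $t$). Re-establishing this invariant after each batch of cuts is routine: when a vertex's active edge is cut, pick the next not-yet-cut outgoing edge (still topologically later), and if none remains the vertex is temporarily irrelevant and its incoming edges will be removed when their other endpoints peel them off. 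Since each edge is cut at most once and each cut triggers $O(1)$ pointer updates plus $O(\log n)$ dynamic-tree work, the total cost is $O(m \log n)$, as claimed.
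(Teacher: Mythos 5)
Your proof is correct and is exactly the approach the paper intends: the paper gives no explicit proof of Lemma \ref{decomp}, stating only that it is an easy application of Sleator--Tarjan dynamic trees, and your link-cut-tree path-peeling argument (bottleneck query, add-to-path, cut saturated edges, charge each iteration to a saturated edge) is the standard instantiation of that remark.
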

\begin{lemma}\label{scale}
Given a flow $F$ on $G\cup \{s,t\}$, and a desired scaling vector
$(\alpha_1,\ldots,\alpha_m)$, we can
compute the flow $F'$ with decomposition $\{(\alpha_k f_k, p_k)\}$ in
$O(m\log n)$ time.
\end{lemma}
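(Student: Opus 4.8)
The plan is to reuse, essentially verbatim, the dynamic‑tree (link‑cut tree, \cite{SleatorTarjan}) machinery already behind Lemma~\ref{decomp}, and simply carry a second bank of edge‑costs along for the ride. Recall how that decomposition runs: it maintains a link‑cut forest whose parent pointers follow the direction of flow (so the root path from any vertex is a prefix of a flowpath of $F$), and it peels off the flowpaths of $F$ one at a time. Extracting a flowpath amounts to performing \texttt{Link} operations until the tree containing $s$ has root $t$, reading the path's bottleneck $f_k$ with one \texttt{FindMin} along the root path of $s$, subtracting $f_k$ from that whole path with one \texttt{AddCost}, and cutting the edges that just saturated (\texttt{Cut}). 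Since each edge is linked at most once and cut at most once, and each extraction forces at least one cut, the whole run is $O(m)$ dynamic‑tree operations, i.e. $O(m\log n)$ time, and — the crucial point — it never touches a flowpath edge by edge: the subtraction of $f_k$ from $p_k$ is a single path operation. The flowpaths $p_1,p_2,\dots$ come out in a fixed deterministic order, which we take to be the order indexing both the pseudo‑decomposition of Lemma~\ref{decomp} and the scaling vector $(\alpha_1,\dots,\alpha_m)$.

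To also produce $F'$, I would attach to every tree edge a second cost $g_e$, initialized to $0$. When the run extracts $p_k$ with flow $f_k$ and performs \texttt{AddCost}$(-f_k)$ on the residual cost along the root path of $s$, I would also perform \texttt{AddCost}$(+\alpha_k f_k)$ on the $g$‑cost along that same root path; and whenever an edge $e$ is cut, first query its current $g_e$ and store that number as the output value $F'_e$ (and output $F'_e=0$ for edges never linked). This is $O(1)$ extra dynamic‑tree operations per flowpath, so the running time stays $O(m\log n)$.

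The only content beyond the time bound is checking that the stored values are right, and the point to be careful about is \emph{when} an edge's value becomes final. Because $F$ is an acyclic $s$-$t$ flow, every edge with positive flow lies on some $s$-$t$ flowpath (walk forward along the flow to reach $t$ and backward to reach $s$, using conservation and acyclicity), so the decomposition exhausts $F$ and every positive‑flow edge is eventually saturated and cut exactly once; edges of zero flow are never linked, which is why $F'_e=0$ is correct for them. At the moment $e$ is cut it has residual $0$ in $F$ and hence appears in no later flowpath $p_{k'}$; therefore $g_e$ at that moment equals $\sum_{k:\,e\in p_k}\alpha_k f_k$, which is exactly the edge value of the flow with decomposition $\{(\alpha_k f_k,p_k)\}$. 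Collecting the (at most $m$) values recorded at cuts gives $F'$.

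The reason the lemma needs proof at all — and the only real obstacle — is that the obvious algorithm, writing out the decomposition $((f_k,p_k))_k$ and rescaling path by path, costs $\Theta(mn)$, since there can be $\Theta(m)$ paths each of length $\Theta(n)$; the dynamic tree is precisely what lets a ``rescale‑along‑this‑whole‑flowpath'' step run in $O(\log n)$ time, after which $F'$ falls out of the same pass that produces the pseudo‑decomposition.
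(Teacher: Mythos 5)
Your proposal is correct and is precisely the argument the paper has in mind: the paper gives no explicit proof of Lemma~\ref{scale}, stating only that it is an ``easy application of dynamic trees'' in the sense of \cite{SleatorTarjan}, and your link-cut-tree construction (a second lazily-updated edge cost, finalized when the edge saturates and is cut) is the standard way to realize that in $O(m\log n)$ time. The one point worth being explicit about in a write-up is that the dynamic tree must support two independent path-addable costs (or one runs a second mirrored tree), which does not affect the asymptotics.
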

Lemma \ref{scale} allows us to efficiently cherry-pick ``good'' flowpaths from the flows
returned by \FlowAndCut.

In their analysis, Arora and Kale round the flows returned by \FlowAndCut\ to
matchings.  We do the same, with a small change to ensure doing so does not
raise congestion by too much.

\mybox{
\Matching$(\u)$
\begin{itemize}
\item Call \FlowAndCut$(\kappa, c, \{\v_x \cdot \u\})$ and pseudo-decompose the
resulting flow into $P$.  Set $M = \emptyset$.
\item Throw away any $(f_i,s_i,t_i,\ell_i)\in P$ with $(\v_{t_i} - \v_{s_i})\cdot \u < \sigma$,
$f_i < \kappa cn/4m$, $w_{s_i} > L/3$, $w_{t_i} > L/3$, or $\ell_i > L/3R$.
\item Greedily match the remaining pairs: iteratively pick
$(f_i,s_i,t_i,\ell_i)\in P$, add $(s_i,t_i)$ to
$M$, and remove any $(f_j,s_j,t_j,\ell_j) \in $ with $\{s_i,t_i\} \cap \{s_j,t_j\} \neq
\emptyset$.
\item Output $M$.
\end{itemize}
}

The following lemma is essentially the same as one used in \cite{AroraKale}, and
follows by the choice of parameters.  The congestion bound, which was not needed for their
analysis but is needed for our algorithm, comes from the fact that \Matching\ discards any flowpath
with $f_i \leq \kappa cn/4m$ before scaling any remaining flows to $1$.
\begin{lemma} \label{shortmatch} \Matching\ is a $(\sigma,\delta)$-matching-cover.
Furthermore, for each $\u$, the (unit-weighted) demands \Matching$(\u)$
are simultaneously routable in $G$ with congestion at most $4m/\kappa cn$ along flowpaths of
length at most $L/3R$ under $\{w_e\}$.
\end{lemma}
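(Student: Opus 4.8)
The plan is to read off the three matching-cover conditions from the filtering and greedy steps of \Matching, with essentially all of the work in the Largeness bound. Stretch and the ``furthermore'' clause are immediate: every pair $(s_i,t_i)$ surviving into $M(\u)$ came from a flowpath that passed all the filters, so $(\v_{t_i}-\v_{s_i})\cdot\u \ge \sigma$; and the surviving flowpaths form a sub-flow of the $s$--$t$ flow returned by \FlowAndCut, hence respect the capacities $G_e$ on the edges of $G$, and each carries $f_i \ge \kappa cn/4m$. Since $\sum_{i:\,e\in p_i} f_i \le G_e$, the number of surviving paths through an edge $e$ is at most $(4m/\kappa cn)G_e$, so rescaling each surviving path up to one unit of demand simultaneously routes $M(\u)$ in $G$ with congestion $\le 4m/\kappa cn$ along paths of length $\sum_{e\in p_i}w_e = \ell_i \le L/3R$. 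Skew-symmetry is arranged by construction: running \FlowAndCut on $-\u$ negates every projection, swapping $A$ and $B$ and presenting the same $s$--$t$ network with source and sink interchanged, and all of the filters ($f_i$, $\ell_i$, $w_{s_i}$, $w_{t_i}$, and the stretch test now read against $-\u$) are symmetric under $s_i\leftrightarrow t_i$; with a reversal-invariant max-flow subroutine and canonical tie-breaking in the greedy loop, $M(-\u)=\{(y,x):(x,y)\in M(\u)\}$.

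The heart of the proof is Largeness. By Lemma \ref{goodu} a random $\u$ is good with probability $\ge\gamma$, and by the sampling discussion in section \ref{ALG} we may assume that for at least a $\gamma/2$ fraction of $\u$ the call \FlowAndCut returns a flow of value $V\ge\kappa cn$ (for the rest, \FlowAndCut returns a cut of capacity $\le\kappa cn$, which by Lemma \ref{cutval} is a $cn$-balanced cut of expansion $\le\kappa$ --- had such cuts been common, the sampling phase would already have found and output one). Fix such a good $\u$, decompose its flow into at most $m$ paths, all running from $A$ to $B$, and bound the flow destroyed by each filter. Goodness means $(\v_y-\v_x)\cdot\u\ge\sigma$ for all $x\in A,\ y\in B$, so the stretch filter costs nothing. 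At most $m$ paths have $f_i<\kappa cn/4m$, so the low-flow filter costs $<\kappa cn/4$. From the dual normalization, $\sum_i f_i\ell_i = \sum_e w_e F_e \le \sum_e w_e G_e \le 2n$, so by Markov the flow on paths with $\ell_i>L/3R$ is at most $2n/(L/3R)=6Rn/L$, a fixed fraction of $\kappa cn=24Rn/L$. The same normalization gives $\sum_x w_x\le 2n/\beta$, so fewer than $6n/\beta L$ vertices have $w_x>L/3$; each such vertex meets $s$ or $t$ through a single capacity-$\kappa$ edge, so the flow on paths with a heavy source or heavy sink is below $\kappa\cdot 6n/\beta L$, again a fixed fraction of $\kappa cn$ by $\beta=12/cL$. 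By the choice of constants these losses sum to at most a constant fraction of $V$, so an $\Omega(1)$ fraction of the flow survives into the greedy step; since each surviving path carries at most $\kappa$ units (its first edge has capacity $\kappa$) and each greedy selection only deletes the path-families meeting $\{s_i,t_i\}$, removing $O(\kappa)$ units, we get $|M(\u)|\ge\Omega(cn)$. Hence $\E_\u[|M(\u)|]\ge(\gamma/2)\cdot\Omega(cn)\ge\delta n$ for $\delta=\gamma c/16$.

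The one step that is not pure bookkeeping is verifying that the low-flow, long-path, and heavy-endpoint filters jointly destroy only a constant fraction of $V$, with enough slack that $\Omega(cn)$ pairs survive the greedy matching; this is precisely what the constants $\kappa=24R/cL$, $\beta=12/cL$, $L=\Theta(\eps)$ and the thresholds $L/3$, $L/3R$, $\kappa cn/4m$ are calibrated to achieve. The $f_i$ threshold is the one ingredient beyond \cite{AroraKale}, needed so that rescaling the surviving flowpaths to unit demands does not blow up congestion; everything else follows as in their proof.
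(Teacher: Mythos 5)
Your proof is correct and follows essentially the same route as the paper's: stretch and skew-symmetry by construction, the congestion bound from the $f_i \geq \kappa cn/4m$ threshold, and the largeness bound obtained by charging each filter a fixed fraction of the $\kappa cn$ flow on good directions before the greedy pairing step. The one wrinkle is the heavy-endpoint filter, where your own (honest) computation gives $6n/(\beta L) = cn/2$ discarded vertices and hence up to $\kappa cn/2$ lost flow, so your three losses literally sum to $\kappa cn$ rather than a proper fraction of it --- but the paper asserts $cn/4$ at exactly the same spot under the same normalization, so this is a shared factor-of-two calibration issue in the constants, not a missing idea in your argument.
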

\ifFOCS
\else
\begin{proof}
The symmetry and stretched properties hold by construction, so we need only establish the
largeness property.
Let $\u$ be a good direction for which the returned flow has value at least $\kappa cn$,
and let $D$ be the corresponding demands.  Since $\u$ is good, every demand pair is
$\sigma$-separated along $\u$.   Each $x \in U$ has degree at
most $\kappa$ and the total degree is at least $2\kappa cn$.
Deleting each path with $f_i \leq \kappa cn/4m$ removes at most $\kappa cn/4$ total flow.
Since $\sum_x w_x\beta \leq 2n$ and $\beta = 12/cL$, at
most $cn/4$ vertices can have $w_x > L/3$; deleting them removes at most $\kappa cn/4$
units of flow.  Finally, since the original flow was feasible in the original
graph,
\begin{equation}
 \sum_{p} f_p \ell_p = \sum_e w_e \sum_{p \owns e} f_p \leq \sum_e w_e G_e \leq 2n \notag
\end{equation}
Since $\kappa = 24R/cL$, at most $\kappa cn/4$ units can flow along paths longer than
$L/3R$.

In total, the second step of \Matching\ removes at most $3\kappa cn/4$ units of flow, so at least $\kappa
cn/2$ total degree survives.  Each greedy matching step decreases the total degree
by at most $4\kappa$, so at least $cn/8$ pairs must get matched.
Thus, the expected size of \Matching$(\u)$ is at least $(\gamma/2)(cn/8) = \delta$.

For the congestion bound, we threw away all paths with flow less than $\kappa cn/4m$, so
scaling the remaining paths to $1$ yields a flow with congestion at most $4m/\kappa
cn$.
\end{proof}
\fi
On each iteration, we sample $\u_1,\ldots,\u_R$ from the distribution $\D$ of theorem
\ref{ARVstrong} and call \Matching $(\u_r)$.  Let $D'$ be the unit-weighted
graph with an edge $(x,y)$ for each $(x,y) \in \Matching(\u_1,\ldots,\u_R)$
with $\|\v_x-\v_y\|^2 \geq L$.  By theorem \ref{ARVstrong}, the expected size of $D'$ is
at least $n^{1-\eps}$, so after $n^{\eps}/2$ expected trials, we have $|D'| \geq
n^{1-\eps}/2$.  Applying lemma \ref{scale} again $R$ times, we can compute
a flow $F'$ that routes $D'$ in $G$ with congestion $R(4m/\kappa c n) = O(\log(n)/L)$,
since $m = O(n\log n)$ by assumption.  Note also that $D'$ has max-degree $2$.

Then, $D'$ achieves an objective value of at least $|D'|L$, so
scaling up by $2n/|D'|L$ yields a solution of value $2n$
that satisfies \eqref{relfeas} and congests edges by at most an $O(n^{\eps} \log n)$ factor.
Since $\beta = 12/cL = \Omega(1/\eps)$, the degree constraints are also violated by at most an
$O(n^{\eps})$ factor.  The running time is dominated by flow computations, of which
there are an expected $O(Rn^{\eps})$ in each of $O(n^{\eps} \log^{2}(n))$ iterations,
for a total of $O(n^{2\eps}\log^{5/2}(n))$ expected max-flows.

\section{\label{PROOF}Proof of Theorem \ref{ARVstrong}}
Let $M$ be a $(\sigma, \delta)$-matching cover.  We identify $M$ with a weighted directed graph,
where edge $(x,y)$ is has weight $\Pr_\u[(x,y) \in M(\u)]$.  The skew-symmetry condition ensures
the weights of $(x,y)$ and $(y,x)$ are the same, as are the in-degree and out-degree of
each $x$.
The total out-degree of $M$ is at least $\delta n$ by assumption.
Following \cite{ARV}, we first prune $M$ to a more uniform version by iteratively removing
any vertex of out-degree less than $\delta/4$.  Doing so preserves skew-symmetry, and
at least $ \delta n/2$ out-degree remains.  It follows that we are left with a matching
cover on vertices $X$, with $|X| \geq \delta n/2$ and every $x \in X$ has out-degree at
least $\delta/4$.  The pruned $M$ is a $(\sigma,\delta/4)$-uniform-matching-cover.
\begin{defn} A
$(\sigma,\delta)$-\textbf{uniform-matching-cover} of $X \subseteq [n]$ is a $(\sigma,
0)$-matching-cover where every $x \in X$ has in-degree at least $\delta$ in $M$.
\end{defn}
\subsection{Chaining and Measure Concentration}
Let $y \in X$, and let $A$ be the set of $\u$ for which $y$ has an out-edge in $M(\u)$.
The main idea behind the proof of theorem \ref{ARV} is the following.
Since $A$ and $-A$ are two sets of measure
$\Omega(1)$, the isoperimetric profile of Gaussian space implies there must be many $\u \in A,
\hat\u \in -A$ that are very close: $\|\u - \hat\u\| \leq O(1)$ (we remark that
\cite{ARV} uses the uniform measure on the sphere, but the same analysis holds for Gaussians after scaling various
quantities by $\sqrt{d}$).
Choose $x,z$ with $(x,y) \in M(\hat\u)$, $(y,z) \in M(\u)$ and observe that
\begin{align}
(\v_y - \v_x) \cdot \u &= (\v_y-\v_x) \cdot \hat\u - (\v_y-\v_x)\cdot(\hat\u - \u) \notag\\
 &\geq \sigma - \|\v_x - \v_y\|\|\hat\u-\u\| \notag
\end{align} 
Thus, either $\|\v_x - \v_y\| \geq \Omega(\sigma)$, or else $(\v_y - \v_x) \cdot \u \geq
\sigma/2$.  In the former case, a matching edge joins two points of distance $\Omega(1)$.
In the latter case, replacing the edge $(y,z) \in M(\u)$ with $(x,z)$ yields an edge
with $(\v_z - \v_x) \cdot \u \geq (3/2)\sigma$.  By an inductive argument, the chaining case
can be repeated until an edge connects two points of distance $\Omega(1)$.
On the one hand, after $R$ chaining steps, we have pairs of points that
are $R$ matching-hops apart, $O(1)$ distance apart, and have projection $\Theta(R)$.
On the other hand, with high probability, no pair of distance $\Theta(1)$ has
projection $\Theta(\sqrt{\log n})$, so the process must end after $\Theta(\sqrt{\log n})$ steps.

To turn the argument into an algorithm, we choose a sequence of highly correlated
directions $\u_1,\ldots,\u_R$.  For $R \geq 1$ and $0\leq \rho \leq 1$, let $\N^R_\rho$ be the distribution of
$\u_1,\ldots,\u_R$ defined by choosing a standard normal $\u_1$, and then choosing
each $\u_{r+1} \sim_{\rho} \u_r$ to be a $\rho$-correlated copy of $\u_{r}$.  That is,
each of the $d$ coordinate vectors $(\u_{1,i},\ldots,\u_{R,i})$ are independently
distributed as multivariate normals with covariance matrix $\Sigma_{r,r'} =
\rho^{|r-r'|}$.
In fact, simply setting $\D = \N^R_{1-1/R}$ achieves theorem \ref{ARVstrong} for $R \leq
O(\log^{1/3}(n))$ and size bound of $e^{-O(R^3)}n$.  The barrier is essentially
the same as the one that limited the original analysis of \cite{ARV} to $R =
O(\log(n)^{1/3})$.
To overcome that barrier, we algorithmetize Lee's improvement\cite{JamesLee} by
independently sampling uncorrelated $\w_1,\ldots,\w_R$, and then shuffling the two lists
together.
The idea is that the highly correlated $\u_r$ will give us long stretch, while
the $\w_r$ will greatly increase the probability of forming a long chain, at the cost of
losing some stretch.  The sampling algorithm is:

\mybox{
\Sample$(R, \rho)$
\begin{itemize}
\item Pick $\u_1,\ldots,\u_R \sim \N^R_\rho$, $\w_1,\ldots,\w_R\sim\N^R_0$.
\item Pick a random shuffling of the two lists, pick a random $r \leq R$, and output the
first $r$ elements of the shuffled list.
\end{itemize}
}

The reason for the randomness is to keep the algorithm trivial, leaving the work to our
analysis.  We show that there exists a particular shuffling and $r \leq R$ for which
\Sample\ is good; by randomly guessing, we lose at most a $2^{R+1}$ factor in our final
expectation bound, which is negligible relative to the $e^{-O(R^2)}n$ bound we are aiming
for.

Our proof of theorem \ref{ARVstrong} closely follows Lee's proof of theorem
\ref{ARV}, the main difference being the use of a stronger isoperimetric inequality.
The the standard isoperimetric inequality says that if $A$ is a set of large
measure, then for almost points $\u$, a small ball around $\u$ has non-empty intersection
with $A$.  We use a stronger version, saying that if $A$ is a set of large
measure, then for almost all points $\u$, a small ball around $\u$ has a significantly
large intersection with $A$.
\begin{lemma}\label{iso} Let $A \subseteq \R^d$ have Gaussian measure $\delta > 0$.  If
$\u,\hat\u$ are $\rho$-correlated with $0 \leq \rho < 1$, then
\begin{align}
\Pr_{\u}\left[\Pr_{\hat\u} \left[\hat\u \in A\right] < (\eps \delta)^{1/(1-\rho)}\right]
 < \eps
\notag %\label{isoeq}
\end{align}
\end{lemma}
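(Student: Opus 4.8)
The plan is to recognize Lemma~\ref{iso} as a reverse small-set-expansion statement for the Gaussian noise (Ornstein--Uhlenbeck) operator $T_\rho$, defined by $(T_\rho f)(\u) = \E_{\hat\u \sim_\rho \u}[f(\hat\u)]$, and to derive it from Borell's reverse hypercontractive inequality followed by a one-line Markov estimate. Write $g = T_\rho \one_A$, so that $g(\u) = \Pr_{\hat\u \sim_\rho \u}[\hat\u \in A]$ is exactly the inner probability appearing in the lemma, and $\E_\u[g(\u)] = \delta$. (One should first note the standard fact that, under the normalization used here --- unit-variance coordinates with correlation $\rho$ --- the conditional-expectation map for $\rho$-correlated Gaussians is precisely $T_\rho$; this is immediate.)

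First I would invoke reverse hypercontractivity: for every nonnegative $f$ and every pair of exponents $q \le p < 1$ satisfying $\rho^2(1-q) \le 1-p$, one has $\|T_\rho f\|_q \ge \|f\|_p$, where $\|\cdot\|_r$ denotes the generalized $L^r$ quantity under the Gaussian measure (which makes sense for $r<0$ when the function is positive). Apply this with $f = \one_A$, so that $\|f\|_p = \delta^{1/p}$ for any $p>0$, and with the tuned choice $p = 1-\rho$ and $q = -(1-\rho)/\rho$. These satisfy $q < 0 < p < 1$, and $\rho^2(1-q) = \rho = 1-p$, so the noise-rate constraint holds with equality. Reverse hypercontractivity then gives $\|g\|_q \ge \delta^{1/(1-\rho)}$, i.e., since $q<0$, $\E_\u[g(\u)^q] \le \delta^{q/(1-\rho)}$.

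The rest is Markov's inequality. Since $q<0$, the event $\{g(\u) < t\}$ equals $\{g(\u)^q > t^q\}$, so $\Pr_\u[g(\u) < t] \le t^{-q}\,\E_\u[g(\u)^q] \le (\delta^{1/(1-\rho)}/t)^q$. Taking $t = (\eps\delta)^{1/(1-\rho)}$ makes $\delta^{1/(1-\rho)}/t = \eps^{-1/(1-\rho)}$, so the bound becomes $\eps^{-q/(1-\rho)} = \eps^{1/\rho}$, which is at most $\eps$ --- strictly less for $0 < \rho < 1$ and $0<\eps<1$ --- as required. The degenerate case $\rho = 0$ is trivial, since then $g \equiv \delta > \eps\delta$.

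There is no deep obstacle here; the only real content is (i) identifying the correct tool --- Borell's reverse hypercontractivity, the standard instrument for ``reverse'' small-set-expansion statements of this form --- and (ii) choosing the exponents $p = 1-\rho$, $q = -(1-\rho)/\rho$ so that the noise-rate constraint is met exactly and the final exponent comes out as the claimed $1/(1-\rho)$. If a self-contained treatment is preferred, the needed reverse hypercontractive inequality for an indicator $\one_A$ can itself be obtained by tensorizing the corresponding two-point inequality over the $d$ coordinates, but quoting Borell's theorem as a black box is cleaner.
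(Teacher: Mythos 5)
Your proof is correct and follows essentially the same route as the paper's: both invoke Borell's reverse hypercontractive inequality for $T_\rho$ with the exponents $p=1-\rho$ and $q=1-1/\rho=-(1-\rho)/\rho$ (meeting the constraint $\rho^2=(1-p)/(1-q)$ with equality) and then finish with Markov's inequality applied to $(T_\rho \one_A)^q$. The only difference is cosmetic: the paper first restates the lemma via the substitution $\tau=(\eps\delta)^{1/(1-\rho)}$ before applying the same calculation.
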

Lemma \ref{iso} is an easy corollary of Borell's reverse hypercontractive inequality
\cite{Borell};
\ifFOCS we defer the proof to the full version of the paper.
\else we include a short proof in appendix \ref{ISOPROOF}. \fi
Applications of Borell's result to strong isoperimetric inequalities appear in
\cite{MORSS}, and we follow the proofs of similar lemmas there.

%By symmetry, reversing the list does not change
%the number of chained pairs; we reverse the list only because it is slightly easier to
%argue many points $i_R$ will be ``chained-in'' by the forward list, while the conclusion of
%theorem \ref{ARVstrong} requires that many points $i_0$ be ``chained-out'' by $\D$.  Note
%that the distribution $\u_1,\ldots,\u_R$ is a reversible Markov chain, so we can and will
%just as easily sample it in reverse order.

%%%%%%%%%%%%% TESTING %%%%%%%%%%%%%%
\subsection{Definitions}
For a matching-cover $M$ and a distribution $\D$ over $\R^*$, let $M(\D)$ be the
random graph $M(\u_1,\ldots,\u_r)$ where $\u_1,\ldots,\u_r \sim \D$.  For a random graph
$\G$ and sets $S,T \subseteq [n]$, let $\mu_\G(S,T)$ be the expected number of edges from $S$
to $T$ in $\G$.  We say $S$ is $\gamma$-\emph{connected} to $T$ in $\G$ if $\mu_\G(S,T)
\geq \gamma$.  For singleton sets, we omit braces and write $\mu_\G(x,y)$ for the probability that
the edge $(x,y)$ is in $\G$.
% For a random graph $\G$ of out-degree at most $1$, we let
%$\Gamma[i;\G]$ be the distribution on $[n] \cup \{\star\}$ defined by the out-neighbor of $i$ in $\G$,
%with outcome $\star$ when $i$ has out-degree zero.

Two sets that will be useful are,
\begin{align}
\Ball[x;\ell] &= \{ y : \|\v_x - \v_y\| \leq \ell \} \notag \\
\Stretch[x, \sigma, \u] &= \{ y : (\v_y - \v_x) \cdot \u \geq \sigma\} \notag
\end{align}
We will also work with collections of distributions $\D = \{\D(\u)\}$ over $\R^*$
parameterized by $\u$.  Such a collection is itself associated with the distribution
induced by sampling a standard normal $\u$ and then sampling from $\D(\u)$.
\begin{defn} Let $\D$ be a distribution collection.  We say a vertex $x$ is
$(\sigma, \delta, \gamma, \ell)$-\textbf{covered}
in $M(\D)$ if for least $\delta$ of $\u$, $x$ is $\gamma$-connected to $\Stretch[x, \sigma, \u]
\cap \Ball[x;\ell]$ in $M(\D(\u))$.
\end{defn}
\subsection{Cover Lemmas}
Our goal is to exhibit a distribution $\D$ such that many vertices $x$ are well-connected to
$X \setminus \Ball[x;\sqrt{L}]$ in $M(\D)$.  To do so, we inductively construct particular
distribution collections $\D^r$ such that many vertices $x$ are either $e^{-O(Rr)}$-connected to
$X\setminus \Ball[x;\sqrt{L}]$ in $M(\D^r)$, or else are $(\Omega(r), \Omega(1), e^{-O(rR)},
\sqrt{L})$-covered by $M(\D^r)$.

We begin with a trivial bound on how much points can be covered.
\begin{lemma}\label{toolong} For $\ell, \gamma, \delta > 0$ and arbitrary $M, \D$,
no vertex is $(\ell\sqrt{2\log(n/\delta)},\delta,\gamma,\ell)$-covered by $M(\D)$.
\end{lemma}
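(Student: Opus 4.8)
The plan is to show that the hypothesis of being $(\ell\sqrt{2\log(n/\delta)},\delta,\gamma,\ell)$-covered is essentially vacuous: with $\sigma=\ell\sqrt{2\log(n/\delta)}$, for all but a $\delta/2$-measure set of directions $\u$ the set $\Stretch[x,\sigma,\u]\cap\Ball[x;\ell]$ is already \emph{empty}, and an empty target set cannot be $\gamma$-connected to anything for $\gamma>0$.

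First I would fix a vertex $x$ and any vertex $y\in\Ball[x;\ell]$, so that $\|\v_x-\v_y\|\le\ell$. For a standard Gaussian $\u$, the scalar $(\v_y-\v_x)\cdot\u$ is a centered Gaussian of variance $\|\v_y-\v_x\|^2\le\ell^2$; hence, by the elementary tail bound $\Pr[Z\ge t]\le\frac12 e^{-t^2/2}$ for a standard normal $Z$ and $t\ge 0$, we get $\Pr_{\u}[(\v_y-\v_x)\cdot\u\ge\sigma]\le\frac12\exp(-\sigma^2/(2\ell^2))=\delta/(2n)$ by the choice of $\sigma$. The vertex $y=x$ is harmless here: since $\sigma>0$ we always have $x\notin\Stretch[x,\sigma,\u]$, so even a self-loop at $x$ (coming from the empty list in some $\D(\u)$) never contributes an edge into $\Stretch[x,\sigma,\u]\cap\Ball[x;\ell]$.

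Next I would union bound over the at most $n$ vertices of $G$ lying in $\Ball[x;\ell]$: the Gaussian measure of the set of $\u$ for which $\Stretch[x,\sigma,\u]\cap\Ball[x;\ell]$ contains \emph{any} vertex at all is at most $n\cdot\delta/(2n)=\delta/2<\delta$. Finally, unwinding the definition of covered: if $x$ were $(\sigma,\delta,\gamma,\ell)$-covered in $M(\D)$, then on a set of $\u$ of Gaussian measure at least $\delta$ we would have $\mu_{M(\D(\u))}(x,\Stretch[x,\sigma,\u]\cap\Ball[x;\ell])\ge\gamma>0$, which in particular forces that intersection to be nonempty for every such $\u$ --- contradicting the bound $\delta/2<\delta$ just established. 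Hence no vertex is covered.

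There is no genuine obstacle. The only points needing care are keeping the constant $\frac12$ in the Gaussian tail bound so that the final comparison $\delta/2<\delta$ is strict, and invoking the standing assumption that the $\v_x$ are distinct so that $y=x$ is the unique vertex at distance $0$ from $x$ (and it is excluded because $\sigma>0$). This lemma is intentionally weak: it is the trivial base case that anchors the inductive construction of the collections $\D^r$, capping the amount of stretch any distribution collection can produce.
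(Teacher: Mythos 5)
Your proof is correct and follows essentially the same route as the paper: a Gaussian tail bound showing each $y\in\Ball[x;\ell]$ lands in $\Stretch[x,\sigma,\u]$ with probability at most $\delta/(2n)$, followed by a union bound making the target set empty except on a set of $\u$ of measure strictly less than $\delta$. The only cosmetic difference is that you keep the factor $\tfrac12$ in the tail bound and union-bound over $n$ vertices, whereas the paper drops the $\tfrac12$ and union-bounds over $n-1$ vertices (excluding $y=x$); both yield the required strict inequality.
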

 \begin{proof} For any $y \in
\Ball[x;\ell]$, the probability that $(\v_y - \v_x)\cdot \u \geq \beta$ is at most
$\exp(-\beta^2/2\ell^2) \leq \delta/n$ for $\beta = \ell\sqrt{2\log(n/\delta)}$.
It follows that the probability that $\Stretch[x,\ell\sqrt{2\log(n/\delta)}, \u] \cap
\Ball[x;\ell]$ is
non-empty is at most $(n-1)\delta/n < \delta$.
\end{proof}

The next lemma says that if a vertex $x$ is connected by $\D'$ to a set $S$ of vertices
that are covered by $\D$, then $x$ is covered by the concatenation of $\D'$ and $\D$.
\begin{lemma}\label{nearby} Let $S$ be a set of vertices such that each $y \in S$ is
$(\sigma,\delta,\gamma,\ell)$-covered by $M(\D)$.  Let $x$ be a vertex with
$\mu_{M(\D')}(x,
S \cap \Ball[x;\ell']) \geq \gamma'$.
Then, $x$ is $(\sigma-\sqrt{2\ell'\log(2/\delta)},\delta/4, \gamma \gamma'\delta/4, \ell +
\ell')$-covered by $\D''(\u) = \D', \D(\u)$.
\end{lemma}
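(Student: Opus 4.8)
The plan is to chain the two distributions: given $\u$, we want to exhibit a large set of vertices $z$ reachable from $x$ in $M(\D''(\u)) = M(\D', \D(\u))$ that are stretched by roughly $\sigma$ along $\u$ and lie in $\Ball[x;\ell+\ell']$. The natural route is to pass through an intermediate vertex $y \in S \cap \Ball[x;\ell']$: the first leg uses $\D'$ to get from $x$ to $y$, and the second leg uses $\D(\u)$ to get from $y$ to some $z \in \Stretch[y,\sigma,\u] \cap \Ball[y;\ell]$. The triangle inequality then gives $z \in \Ball[x;\ell+\ell']$, and the stretch of $z$ along $\u$ as seen from $x$ is $(\v_z-\v_x)\cdot\u = (\v_z-\v_y)\cdot\u + (\v_y-\v_x)\cdot\u \ge \sigma + (\v_y-\v_x)\cdot\u$. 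The only worry is the correction term $(\v_y-\v_x)\cdot\u$, which could be negative; but $\|\v_y-\v_x\| \le \ell'$, so by a Gaussian tail bound, for all but a $\delta/4$ fraction of $\u$ we have $|(\v_y-\v_x)\cdot\u| \le \sqrt{2\ell'\log(2/\delta)}$ simultaneously — actually we need this to hold for the specific relevant $y$'s, so I would instead argue: for each fixed $y$, $\Pr_\u[(\v_x-\v_y)\cdot\u > \sqrt{2\ell'\log(2/\delta)}] \le \delta/2$ (using $\|\v_x - \v_y\|\le\ell'$ and the standard normal tail $e^{-t^2/2}$ with $t = \sqrt{2\log(2/\delta)}$), hence $\ge 1-\delta/2$ of $\u$ have $(\v_z - \v_x)\cdot\u \ge \sigma - \sqrt{2\ell'\log(2/\delta)}$ whenever $z \in \Stretch[y,\sigma,\u]$.

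The main quantitative work is bookkeeping the expected edge counts. Fix the "good" event on $\u$ above. Since each $y \in S$ is $(\sigma,\delta,\gamma,\ell)$-covered by $M(\D)$, for at least a $\delta$ fraction of $\u$ the vertex $y$ is $\gamma$-connected to $\Stretch[y,\sigma,\u]\cap\Ball[y;\ell]$ in $M(\D(\u))$. Intersecting with the good event from the previous paragraph, for at least a $\delta - \delta/2 = \delta/2$ fraction of $\u$ (for each fixed $y$) both hold; to get a single set of $\u$ working for "enough" of $S$ I would use an averaging/Markov argument, which is where the factor $\delta/4$ in the conclusion comes from. Concretely, let $Y(\u)$ be the (random) set of $y \in S \cap \Ball[x;\ell']$ that are $\gamma$-connected-and-good for this $\u$; then $\E_\u|Y(\u)|$ is large relative to $\E_\u|S\cap\Ball[x;\ell']|$-type quantities weighted by the $\D'$-connection, and by Markov at least a $\delta/4$ fraction of $\u$ have $Y(\u)$ capturing an $\Omega(\delta)$-fraction of the $\D'$-flow from $x$ into $S\cap\Ball[x;\ell']$. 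For such $\u$, composing the $\D'$-flow $x\to y$ (total mass $\ge \Omega(\gamma'\delta)$ after the Markov pruning) with the $\D(\u)$-flow $y \to z$ (mass $\ge\gamma$ out of each surviving $y$) gives $\mu_{M(\D''(\u))}(x, \Stretch[x,\sigma-\sqrt{2\ell'\log(2/\delta)},\u]\cap\Ball[x;\ell+\ell']) \ge \gamma\cdot\gamma'\delta/4$, which is exactly the claimed coverage parameter $(\sigma-\sqrt{2\ell'\log(2/\delta)}, \delta/4, \gamma\gamma'\delta/4, \ell+\ell')$.

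The step I expect to be the main obstacle is the averaging argument that produces a single fraction-$\delta/4$ set of directions $\u$ over which simultaneously (i) the small-correction event holds, (ii) a substantial portion of the $\D'$-connection mass from $x$ lands in vertices $y$ that are themselves covered for that same $\u$, and (iii) no double-counting inflates the edge count beyond what a max-degree-one graph allows. The subtlety is that "covered" is a per-$\u$ statement for each $y$, so correlations between which $y$'s are good and how much $\D'$-flow reaches them must be handled carefully; I would control this by writing $\mu_{M(\D''(\u))}(x, \cdot) \ge \sum_{y} \mu_{M(\D')}(x,y)\cdot \mathbf{1}[y\text{ good for }\u]\cdot(\text{second-leg mass})$, taking $\E_\u$, swapping the sum and expectation, using the coverage hypothesis termwise, and then a final Markov step in $\u$ — the constants $1/4$ absorbing the two Markov applications and the tail-bound loss.
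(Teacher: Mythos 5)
Your proposal is correct and follows essentially the same route as the paper's proof: condition the $\D'$-out-neighbor of $x$ on $S\cap\Ball[x;\ell']$, bound the correction term $(\v_x-\v_y)\cdot\u$ by a per-$y$ Gaussian tail to lose $\delta/2$, intersect with the coverage event $A_y$, and apply one averaging/Markov step over $\u$ to extract a $\delta/4$ fraction of directions on which a $\delta/4$ fraction of the $\Gamma$-mass is good, yielding the connectivity $\gamma\gamma'\delta/4$. The composition identity you write at the end is exactly how the paper combines the two legs.
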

\begin{proof} Let $\Gamma$ be the distribution of $x$'s out-neighbor in $M(\D')$,
conditioned on $S \cap \Ball[x;\ell']$.  For each $y \in S$, let $A_y$ be the set of $\u$ for which
$y$ is $\gamma$-connected to $\Stretch[y, \sigma, \u] \cap \Ball[y;\ell]$ in $M(\D(\u))$.

For any fixed $y \in \Gamma$, the quantity $(\v_x - \v_y) \cdot \u$ is normal with mean
zero and variance
$\|\v_y - \v_x\|^2 \leq \ell'^2$, so the probability (over $\u$) that $y \in
\Stretch[x, -\beta, \u]$ is at least $1-\exp(-\beta^2/2\ell'^2) \geq 1-\delta/2$ for
$\beta = \sqrt{2\ell'\log(2/\delta)}$.  Then, for at least
$\delta/2$ of $\u$, we have $y \in \Stretch[x, -\beta, \u]$ and
$\u \in A_y$.  By averaging, for at least $\delta/4$ of $\u$, at least $\delta/4$ of $y \sim
\Gamma$ have $y \in
\Stretch[x,-\beta, \u]$ and $\u \in A_y$.  It follows that $x$ is
$(\sigma-\beta, \delta/4, \gamma \gamma'\delta/4, \ell +
\ell')$-covered by $\D''$.
\end{proof}

Our next lemma is the main chaining step.
\begin{lemma}\label{extend} Let $M$ be a $(\sigma_0,\cdot)$-matching-cover, $T$ be a set
of vertices that are
$(\sigma, 1-\delta/2, \gamma, \infty)$-covered in $M(\D)$, and $S$ a set of vertices
that is $\delta|T|$-connected to $T$ in $M$.  Then, at least $\delta|T|/2$ vertices $x \in S$ are
$(\sigma + \sigma_0, \delta|T|/4|S|, \gamma, \infty)$-covered by $\D'(\u) = \u, \D(\u)$.
\end{lemma}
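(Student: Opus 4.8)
The plan is to make a single use of the chaining inequality of Section~\ref{PROOF} and then count, exploiting that every $M(\u)$ has in- and out-degree at most one. For $x \in S$ and a direction $\u$ let $y(x,\u)$ be the (unique, when it exists) out-neighbour of $x$ in $M(\u)$; as $M$ is a $(\sigma_0,\cdot)$-matching-cover, $(\v_{y(x,\u)}-\v_x)\cdot\u\ge\sigma_0$ whenever $y(x,\u)$ is defined. For $y\in T$ let $A_y=\{\u:\mu_{M(\D(\u))}(y,\Stretch[y,\sigma,\u])\ge\gamma\}$; since $y$ is $(\sigma,1-\delta/2,\gamma,\infty)$-covered, $\Pr_\u[\u\in A_y]\ge 1-\delta/2$. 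Set
\[
 B_x=\{\,\u:\ y(x,\u)\text{ is defined, }y(x,\u)\in T,\text{ and }\u\in A_{y(x,\u)}\,\},
\]
and write $|B_x|$ for its Gaussian measure.

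The first step is to check that each $\u\in B_x$ witnesses a covering of $x$ by $\D'(\u)=\u,\D(\u)$. Fix such a $\u$ and set $y=y(x,\u)$. Given $\u$, the graph $M(\u)$ is deterministic and contains the edge $(x,y)$, while $M(\D(\u))$ is random and sends, on average, at least $\gamma$ edges from $y$ into $\Stretch[y,\sigma,\u]$; concatenating along the sure edge $(x,y)$ gives $\mu_{M(\D'(\u))}(x,\Stretch[y,\sigma,\u])\ge\gamma$. Moreover, for every $z\in\Stretch[y,\sigma,\u]$,
\[
 (\v_z-\v_x)\cdot\u=(\v_z-\v_y)\cdot\u+(\v_y-\v_x)\cdot\u\ \ge\ \sigma+\sigma_0,
\]
so $\Stretch[y,\sigma,\u]\subseteq\Stretch[x,\sigma+\sigma_0,\u]$. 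By monotonicity of $\mu$ in the target set, $x$ is $\gamma$-connected to $\Stretch[x,\sigma+\sigma_0,\u]$ in $M(\D'(\u))$, i.e.\ $x$ is $(\sigma+\sigma_0,|B_x|,\gamma,\infty)$-covered by $\D'$. It remains only to show that $|B_x|$ is reasonably large for enough vertices $x\in S$.

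For this, note that since $M(\u)$ has out-degree at most one, $\one[\u\in B_x]=\sum_{y\in T}\one[(x,y)\in M(\u)]\,\one[\u\in A_y]$, whence
\[
 \sum_{x\in S}|B_x|=\E_\u\Big[\sum_{x\in S}\sum_{y\in T}\one[(x,y)\in M(\u)]\,\one[\u\in A_y]\Big].
\]
By hypothesis $\E_\u\big[\sum_{x\in S}\sum_{y\in T}\one[(x,y)\in M(\u)]\big]=\mu_M(S,T)\ge\delta|T|$, and since $M(\u)$ has in-degree at most one, for each fixed $y$ the inner sum over $x\in S$ is at most $1$, so the terms with $\u\notin A_y$ contribute at most $\sum_{y\in T}\Pr_\u[\u\notin A_y]\le\delta|T|/2$ in total. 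Subtracting, $\sum_{x\in S}|B_x|\ge\delta|T|/2$. Since each $|B_x|\le 1$, a standard averaging argument — discard the $x$ with $|B_x|<\delta|T|/4|S|$, whose total measure is below $\delta|T|/4$ — leaves at least $\delta|T|/4$ vertices $x\in S$ with $|B_x|\ge\delta|T|/4|S|$ (and a touch more care in the counting step recovers the full $\delta|T|/2$); by the first step each such $x$ is $(\sigma+\sigma_0,\delta|T|/4|S|,\gamma,\infty)$-covered by $\D'$, which is the claim.

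The delicate point is the counting step: one must use the out-degree bound on $M(\u)$ to express $\one[\u\in B_x]$ as the double-indicator sum without over-counting, and the in-degree bound to control the loss from the $\u\notin A_y$ terms by $\sum_y\Pr[\u\notin A_y]$. The chaining inequality and the concatenation are routine — there is no independence issue in the concatenation because, for a fixed $\u$, only $M(\D(\u))$ is random while $M(\u)$ is deterministic — so apart from constant-chasing in the final averaging, the counting is where essentially all of the content lies.
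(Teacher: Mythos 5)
Your proof is correct and is essentially the paper's own argument: restrict to $S$-to-$T$ edges, discard those whose endpoint $y$ is not $\gamma$-connected to $\Stretch[y,\sigma,\u]$ for that particular $\u$ (losing at most $\delta|T|/2$ total weight via the in-degree-one bound), average over $x\in S$, and chain using $\Stretch[y,\sigma,\u]\subseteq\Stretch[x,\sigma+\sigma_0,\u]$. The constant you flag is real but harmless: the paper's own counting likewise yields only $\delta|T|/4$ vertices at threshold $\delta|T|/4|S|$, and the slack in the inductive lemma absorbs the difference.
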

\begin{proof}
  Let $M'$ be the subgraph of $M$ consisting only of edges from
$S$ to $T$; by assumption the total degree in $M'$ is at least $\delta |T|$.  Further remove
any edge $(x,y) \in M'(\u)$ where $\mu_{M(\D(\u))}(y, \Stretch[y, \sigma, \u]) < \gamma$.
The total in-degree remaining is at least $\delta|T|/2$, so there is a set $S' \subseteq
S$ of at least $\delta|T|/2$ vertices that have out-degree at least $\delta|T|/4|S|$.
Finally, note that if $(x,y) \in M'(\u)$, then $\Stretch[y,
\sigma, \u, \infty] \subseteq \Stretch[x, \sigma+\sigma_0, \u, \infty]$, so each $x \in S'$
is $(\sigma + \sigma_0, \delta|T|/4|S|, \gamma, \infty)$-covered by $\D'$.
\end{proof}

To apply lemma \ref{extend}, we need to establish covers with $\delta$ very close to $1$.
Consider taking a collection $\D$ and then \emph{smoothing} it by replacing $\D(\u)$
with the average of $\D(\hat\u)$ for nearby $\hat\u$.
The next lemma shows that doing so boosts $\delta$ to nearly $1$, in exchange
for a loss in $\sigma$ and $\gamma$.
\begin{lemma}\label{boost} Let $x$ be $(\sigma,\delta,\gamma,\ell)$-covered by $\D$.
Then, $x$ is $(\rho\sigma - 4\ell \sqrt{\log(2/\delta)}, 1-2\delta,
\delta^{2/(1-\rho)}\gamma/4, \ell)$-covered by $\D'(\u) =
\D(\hat\u)$ where $\hat\u \sim_\rho \u$.
\end{lemma}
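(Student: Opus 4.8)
The plan is to pull the coverage hypothesis back through the $\rho$-correlated coupling: use Borell's inequality (lemma \ref{iso}) to argue that a $\rho$-correlated copy $\hat\u$ of a typical $\u$ still lands in the witnessing set with probability roughly $\delta^{2/(1-\rho)}$, and then argue that a vertex $\sigma$-stretched along $\hat\u$ cannot have lost more than $\beta$ of its stretch along $\u$. Concretely, fix a set $A\subseteq\R^d$ of Gaussian measure exactly $\delta$ with the property that for every $\hat\u\in A$, $x$ is $\gamma$-connected to $S_{\hat\u}:=\Stretch[x,\sigma,\hat\u]\cap\Ball[x;\ell]$ in $M(\D(\hat\u))$ (such an $A$ exists because $x$ is $(\sigma,\delta,\gamma,\ell)$-covered by $\D$). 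Applying lemma \ref{iso} to $A$ with $\eps=\delta$ shows that for all but a $\delta$-fraction of $\u$ (call these \emph{typical}) one has $\Pr_{\hat\u\sim_\rho\u}[\hat\u\in A]\geq\delta^{2/(1-\rho)}$. For a typical $\u$, write $\D'(\u)$ as ``draw $\hat\u\sim_\rho\u$, then draw from $\D(\hat\u)$'', and with $T(\u):=\Stretch[x,\rho\sigma-\beta,\u]\cap\Ball[x;\ell]$, $\beta=4\ell\sqrt{\log(2/\delta)}$, bound
\[
\mu_{M(\D'(\u))}(x,T(\u))=\E_{\hat\u\sim_\rho\u}\bigl[\mu_{M(\D(\hat\u))}(x,T(\u))\bigr]\geq\gamma\,\Pr_{\hat\u\sim_\rho\u}[\hat\u\in A]-\mathrm{leak}(\u),
\]
where $\mathrm{leak}(\u)=\E_{\hat\u\sim_\rho\u}[\mathbf{1}[\hat\u\in A]\,\mu_{M(\D(\hat\u))}(x,S_{\hat\u}\setminus T(\u))]$ counts out-neighbours that are $\sigma$-stretched along $\hat\u$ but fall out of $T(\u)$.

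The crux is estimating $\mathrm{leak}(\u)$. Since $(\u,\hat\u)$ is a $\rho$-correlated Gaussian pair and the graph $M(\D(\hat\u))$ depends only on $\hat\u$, I would generate the joint law by first drawing $\hat\u$ and the graph, then setting $\u=\rho\hat\u+\sqrt{1-\rho^2}\,\mathbf{g}$ for a fresh standard Gaussian $\mathbf{g}$. Any candidate out-neighbour $y\in S_{\hat\u}$ has $\|\v_y-\v_x\|\leq\ell$ and $(\v_y-\v_x)\cdot\hat\u\geq\sigma$, hence $(\v_y-\v_x)\cdot\u\geq\rho\sigma+\sqrt{1-\rho^2}(\v_y-\v_x)\cdot\mathbf{g}$, so $y\notin T(\u)$ forces $\sqrt{1-\rho^2}(\v_y-\v_x)\cdot\mathbf{g}<-\beta$, an event of $\mathbf{g}$-probability at most $\exp(-\beta^2/(2(1-\rho^2)\ell^2))$. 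The essential point is that this must be charged only against $x$'s actual (fractional) out-neighbour, weighted by the probability the graph picks it — not against all of $\Ball[x;\ell]$, which would cost a union-bound factor of $n$ and blow $\beta$ up to $\Theta(\ell\sqrt{\log n})$. Averaging over the out-neighbour and over $\mathbf{g}$, and using $|A|=\delta$, yields $\E_\u[\mathrm{leak}(\u)]\leq\delta\exp(-\beta^2/(2(1-\rho^2)\ell^2))$; Markov then gives $\mathrm{leak}(\u)\leq\exp(-\beta^2/(2(1-\rho^2)\ell^2))$ for all but a further $\delta$-fraction of $\u$, and the $1-\rho^2$ factor in the exponent is precisely what makes this quantity at most $\tfrac34\gamma\delta^{2/(1-\rho)}$ for the stated $\beta$.

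Intersecting the two exceptional sets, for at least a $1-2\delta$-fraction of $\u$ we get $\mu_{M(\D'(\u))}(x,T(\u))\geq\gamma\delta^{2/(1-\rho)}-\tfrac34\gamma\delta^{2/(1-\rho)}=\tfrac14\gamma\delta^{2/(1-\rho)}$, which is exactly the asserted coverage. I expect the leakage estimate to be the only real obstacle: one is tempted to ask that $S_{\hat\u}\subseteq T(\u)$ entirely, but that requires controlling the whole ball $\Ball[x;\ell]$ and destroys the bound on $\beta$; the fix is to track only the single out-neighbour and to exploit the order of randomness ($\hat\u$ and the graph first, the perturbation $\mathbf{g}$ last) so that $\mathbf{g}$ is independent of which out-neighbour gets chosen, after which the $\sqrt{1-\rho^2}$-scale of the perturbation keeps the per-step stretch loss at $O(\ell\sqrt{\log(1/\delta)})$ as needed for the $\Theta(R)$ repeated applications.
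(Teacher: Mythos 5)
Your overall architecture — Borell via lemma \ref{iso} to lower-bound $\Pr_{\hat\u\sim_\rho\u}[\hat\u\in A]$ for most $\u$, plus the decomposition $\u=\rho\hat\u+\sqrt{1-\rho^2}\,\mathbf{g}$ with $\mathbf{g}$ independent of the graph so that the stretch loss is charged only against the actual out-neighbour — is exactly the paper's. But there is a genuine gap in your leakage accounting. Your $\mathrm{leak}(\u)$ is an \emph{absolute} quantity: since the out-degree of $x$ in $M(\D(\hat\u))$ is at most one, your chain of estimates gives $\E_\u[\mathrm{leak}(\u)]\leq\delta\exp(-\beta^2/(2(1-\rho^2)\ell^2))$ and hence, after Markov, $\mathrm{leak}(\u)\leq(\delta/2)^{8/(1-\rho^2)}\leq(\delta/2)^{4/(1-\rho)}$ for most $\u$. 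You then need this to be at most $\tfrac34\gamma\delta^{2/(1-\rho)}$, i.e.\ $(\delta^2/16)^{1/(1-\rho)}\leq\tfrac34\gamma$. That inequality has no reason to hold for small $\gamma$, and it fails precisely in the regime where the lemma is used: in lemma \ref{induct} one has $\rho=1-1/K$ and $\gamma=\delta^{\Theta(Kk)}$, which for $k\geq 1$ is far smaller than $(\delta^2/16)^{K}$. So with $\beta=4\ell\sqrt{\log(2/\delta)}$ the subtraction $\gamma\Pr[\hat\u\in A]-\mathrm{leak}(\u)$ can be negative; repairing it by enlarging $\beta$ to $\Theta(\ell\sqrt{(1-\rho)^{-1}\log(1/\gamma)})$ would destroy the stretch budget over the $\Theta(K)$ boosting steps.

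The paper avoids this by normalizing the leak \emph{relative to} $\gamma$ rather than absolutely: for each $\hat\u$ it considers $\Gamma(\hat\u)$, the out-neighbour distribution conditioned on landing in $S_{\hat\u}=\Stretch[x,\sigma,\hat\u]\cap\Ball[x;\ell]$, and shows (two applications of Markov) that for all but $\delta$ of $\u$, all but a $(2/\delta)(\delta/2)^{4/(1-\rho)}$ fraction of $\hat\u$ have $\Pr_{y\sim\Gamma(\hat\u)}[y\in\Stretch[x,\rho\sigma-\beta,\u]]\geq 1/2$. Whenever such a "good" $\hat\u$ also lies in $A$, the surviving mass in $T(\u)$ is at least $\gamma/2$ — a multiplicative half of the signal, not the signal minus an absolute tail. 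The final comparison is then $(2/\delta)(\delta/2)^{4/(1-\rho)}\leq\tfrac12\delta^{2/(1-\rho)}$, which is $\gamma$-free and holds for the stated $\beta$. You should restructure your leak bound this way: charge the bad event against the conditional distribution on $S_{\hat\u}$ so that it is measured as a fraction of $\gamma$, and only then intersect with the Borell estimate.
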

\begin{proof}
Let $A$ be the set of $\hat\u$ for which $x$ is $\gamma$-connected to $\Stretch[x, \sigma, \hat\u] \cap
\Ball[x;\ell]$ in $M(\D(\hat\u))$.
For each $\hat\u$, let $\Gamma(\hat\u)$ be the distribution of $x$'s out-neighbor in
$M(\D(\hat\u))$, conditioned on $\Stretch[x, \sigma, \hat\u] \cap \Ball[x;\ell]$.

For any $\hat \u$ and $y \in \Gamma(\hat\u)$, the quantity $(\v_y - \v_x) \cdot
\u$ is normal with mean $\rho (\v_y - \v_x) \cdot \hat \u \geq \rho \sigma$ and variance
$(1-\rho^2)\|\v_y - \v_x\|^2 \leq 2(1-\rho)\ell^2$; it follows that
$y \in \Stretch[x, \rho \sigma - \beta, \u]$ with probability at least
$1-\exp(-\beta^2/4(1-\rho)\ell^2) \geq 1-(\delta/2)^{4/(1-\rho)}$ for $\beta = 4\ell
\sqrt{\log(2/\delta)}$ over $\u$.  By
averaging, for at least $1-\delta$ $\u$, for at least
$1-(2/\delta)(\delta/2)^{4/(1-\rho)}$
$\hat\u \sim_\rho \u$, we have $\Pr\left[\Gamma(\hat\u) \in \Stretch[x, \rho\sigma-\beta,
\u]\right]
\geq 1/2$.  Call such pairs $(\u, \hat\u)$ good.

Applying lemma \ref{iso} to $A$, for at least $1-\delta$ $\u$, we have $\Pr[\hat\u \in
A] \geq \delta^{2/(1-\rho)}$.
All together, for at least $1-2\delta$ $\u$, with probability at least
$\delta^{2/(1-\rho)} - (2/\delta)(\delta/2)^{4/(1-\rho)}$ we
have both $\hat\u \in A$ and $(\u,\hat\u)$ good.  In that case, $\mu_{M(\D(\hat\u)}(x,
\Stretch[x, \rho\sigma-\beta, \u] \cap \Ball[x;\ell]) \geq \gamma/2$.
The lemma follows by noting $(2/\delta)(\delta/2)^{4/(1-\rho)} \leq
\delta^{2/(1-\rho)}/2$.
\end{proof}

Combining the previous results, we prove the main inductive lemma.
\begin{lemma}\label{induct} Let $M$ be a $(\sigma,\delta)$-uniform-matching-cover of $X$
where $\delta \leq 1/4$.  Let $\ell \leq \sigma/2^7\sqrt{\log(1/\delta)}$ and $K \geq
1$.  Then, one of the following must occur.
\begin{enumerate}
\item\label{first} There are distribution collections $\D^0,\ldots,\D^K$ such that for every $k \leq
K$, at least $\delta^{6k}|X|$ vertices are $(k\sigma/4, \delta^8, \delta^{24Kk},
\ell)$-covered in $M(\D^k)$.
\item\label{second} There is a distribution $\D^*$ such that at least $\delta^{6K}|X|$ vertices $x$
are $\delta^{24K^2}$-connected to $X \setminus B[x;\ell]$ in $M(\D^*)$.  Furthermore, $\D^*$
is a shuffling of $\N^k_{1-1/K}$ with $\N^{k'}_0$ for some $k \leq K$ and $k' \leq 6K$.
\end{enumerate}
\end{lemma}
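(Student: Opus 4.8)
The plan is to induct on $k$, at each step attempting to extend the covers from level $k$ to level $k+1$ by one matching hop, using Lemma~\ref{extend} together with a smoothing step (Lemma~\ref{boost}) to boost the $\delta$-parameter of the cover up close to $1$ so that Lemma~\ref{extend} applies. We start the induction with $\D^0 = $ the empty list, for which $M(\D^0)$ is the self-loop graph, so every vertex $x\in X$ is trivially $(0, 1, 1, 0)$-covered (and a fortiori $(0,\delta^8,\delta^0,\ell)$-covered since $0 < \ell$); that gives the base case of alternative~(\ref{first}) with all $|X|$ vertices. Suppose we have $\D^k$ with a set $T$ of at least $\delta^{6k}|X|$ vertices $(k\sigma/4,\delta^8,\delta^{24Kk},\ell)$-covered. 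First I would apply Lemma~\ref{boost} with an appropriate correlation $\rho = 1 - 1/K$: this produces a smoothed collection in which those vertices are covered with parameter $1 - 2\delta^8 \geq 1 - \delta/2$ (using $\delta\le 1/4$), at the cost of shrinking the stretch from $k\sigma/4$ to $\rho(k\sigma/4) - 4\ell\sqrt{\log(2/\delta^8)}$ and shrinking $\gamma$ by a factor $(\delta^8)^{2/(1-\rho)}/4 = \delta^{16K}/4$. The constraint $\ell \le \sigma/2^7\sqrt{\log(1/\delta)}$ is exactly what is needed to absorb the additive stretch loss so that after the loss the stretch is still at least $(k\sigma/4)(1-O(1/K)) - O(\sigma/K) \geq (k-c)\sigma/4$ for a small constant; chasing the constants should give that the stretch stays above something like $k\sigma/4 \cdot(1-1/K) - \sigma/2^5$, which over $K$ steps accumulates a bounded multiplicative loss — this is the delicate bookkeeping.

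Next I would apply Lemma~\ref{extend}: since $M$ is a $(\sigma,\delta)$-uniform-matching-cover of $X$, every vertex of $X$ (in particular every vertex of $T$) has in-degree at least $\delta$ in $M$, so the whole set $X$ is $\delta|T|$-connected to $T$ in $M$ (each of the $\ge \delta|T|$ in-edges to $T$ originates somewhere in $X$). Lemma~\ref{extend} then yields a set $S' \subseteq X$ of at least $\delta|T|/2 \geq \delta^{6k+1}|X|/2$ vertices that are $((k+1)\text{-ish})\sigma/4$-covered by $\D^{k+1}(\u) = \u, (\text{smoothed }\D^k)(\u)$, with $\delta$-parameter $\delta|T|/(4|X|) \geq \delta^{6k+2}/4$ and $\gamma$-parameter still $\delta^{16K}\delta^{24Kk}/4$. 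The bookkeeping task is to check that $\delta^{6k+2}/4 \geq \delta^8$ — this is where the $\delta^{6k}|X|$ size bound degrading by a factor $\delta^6$ per step comes from: we need $6k+2 \le 8$-ish uniformly, which fails for large $k$, so in fact the $\delta$-parameter, size, and $\gamma$ all degrade geometrically, and the exponents $\delta^{6k}$, $\delta^8$ (fixed!), $\delta^{24Kk}$ in the statement are chosen precisely to make the recursion close. I would verify: the size exponent goes $6k \to 6k+6$ (absorbing the factor $1/2$ into the slack between $6k+1$ and $6k+6$); the stretch stays $\ge (k+1)\sigma/4$ after accounting for losses (this needs $K$ large enough and the $\ell$ bound); the $\delta$-parameter: Lemma~\ref{extend} outputs $\delta|T|/4|S|$ which is tiny, so I need one more smoothing at the start of the next round to reset it — meaning the statement's claim that the output has $\delta$-parameter $\delta^8$ should be read as: after the next boost it returns to $\ge 1-\delta/2$, and $\delta^8$ is just a convenient clean bound to carry. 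Actually re-reading the statement, the cleanest route is: after the boost-then-extend, the $\delta$-parameter is $\ge \delta^8$ provided we only ever need to boost a set whose $\delta$-parameter is at least... so I would set things up so the boost is applied at the \emph{top} of each iteration to whatever we have, and $\delta^8$ is the invariant we maintain for the \emph{post-extend} collections.

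Finally, the escape to alternative~(\ref{second}): at each step, before applying Lemma~\ref{extend}, I would first check whether the covered vertices are already well-connected to $X\setminus\Ball[x;\sqrt L] = X\setminus B[x;\ell]$ — wait, $\ell$ here plays the role of $\sqrt L$. More precisely, by Lemma~\ref{toolong} no vertex can be $(\ell\sqrt{2\log(n/\delta)}, \delta, \gamma, \ell)$-covered; so once the accumulated stretch $k\sigma/4$ (after all the multiplicative losses, so really $\Omega(k\sigma)$) exceeds $\ell\sqrt{2\log(n/\delta)}$, the cover condition with radius-$\ell$ balls becomes impossible, which forces the connected set to reach \emph{outside} $\Ball[x;\ell]$. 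That transition is how we land in case~(\ref{second}) with $\D^*$ of the form claimed — at the final step we stop appending single correlated directions $\u$ (which built up an $\N^k_{1-1/K}$ component) and the smoothing steps contributed the $\N^{k'}_0$ shuffled component (each Lemma~\ref{boost} application with $\rho = 1-1/K$ is what generates correlated directions; the uncorrelated $\w$'s of \Sample come in because the overall distribution $\D^k$ is, structurally, a shuffle of a correlated block with an uncorrelated block — I'd make this precise by tracking that $\D^{k+1}(\u) = \u, \text{(smoothed)}\D^k(\u)$ prepends a fresh $\rho$-correlated coordinate). The main obstacle I anticipate is precisely this constant-chasing through the stretch parameter: showing that the $1-1/K$ multiplicative shrink per step plus the $O(\ell\sqrt{\log(1/\delta)})$ additive shrink per step, iterated $K$ times, still leaves stretch $\Theta(k\sigma)$ at step $k$ and in particular $\Theta(K\sigma)$ at step $K$, large enough relative to $\ell\sqrt{\log n}$ to trigger Lemma~\ref{toolong}; the $\gamma$ and size parameters degrade to $\delta^{\Theta(K^2)}$ and $\delta^{\Theta(K)}\cdot|X|$ respectively, which is acceptable since $\delta = \Omega(1)$ and $K \le O(\sqrt{\log n})$, giving the $e^{-O(R^2)}n$ bound claimed in Theorem~\ref{ARVstrong}. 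I would be careful that "$1-\delta/2$" vs "$1-2\delta^8$" — since $\delta\le 1/4$, $2\delta^8 \le \delta/2$, so Lemma~\ref{extend}'s hypothesis "$(\sigma,1-\delta/2,\gamma,\infty)$-covered" is met — except Lemma~\ref{extend} wants radius $\infty$ while we have radius $\ell$; I'd resolve this by noting that a radius-$\ell$ cover is in particular a radius-$\infty$ cover, apply Lemma~\ref{extend} to get radius-$\infty$, and then the radius only matters when we invoke Lemma~\ref{toolong} and Lemma~\ref{nearby}, so I'd instead use Lemma~\ref{nearby} (which tracks radii additively, $\ell \to \ell+\ell'$ with $\ell' \le \sigma_0$) in place of Lemma~\ref{extend} for the radius bookkeeping, keeping $\ell$ genuinely bounded across all $K$ steps — this substitution is the other place where the $\ell \le \sigma/2^7\sqrt{\log(1/\delta)}$ hypothesis earns its keep, since each hop adds at most $\sigma$-ish to the radius while adding $\sigma/4$-ish to the stretch, and the stretch must overtake $\ell\sqrt{\log n}$ before the radius does.
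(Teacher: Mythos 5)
Your overall architecture (induct on $k$; boost with $\rho=1-1/K$ via Lemma~\ref{boost} to get the cover parameter near $1$; extend by one matching hop via Lemma~\ref{extend}; escape to case~\ref{second} when the stretch forces the connected set outside the ball) matches the paper's. But you have correctly identified, and then not resolved, the central difficulty: applying Lemma~\ref{extend} with $S=X$ gives a new cover whose $\delta$-parameter is $\delta|T|/4|X|\approx\delta^{6k+1}/4$, which degrades with $k$. Your proposed fix (``one more smoothing at the start of the next round to reset it'') does not work: Lemma~\ref{boost} applied to a $(\cdot,\delta',\cdot,\cdot)$-cover with $\delta'\approx\delta^{6k}$ costs a factor $(\delta')^{2K}=\delta^{O(Kk)}$ in $\gamma$ and an additive $4\ell\sqrt{\log(2/\delta')}=\Theta(\ell\sqrt{k\log(1/\delta)})$ in stretch. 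Summed over $k\le K$ these give $\gamma=\delta^{O(K^3)}$ and a stretch loss of order $K^{3/2}\ell\sqrt{\log(1/\delta)}$, which respectively miss the claimed $\delta^{24K^2}$ and overwhelm the gained stretch $K\sigma/4$ for large $K$. In other words, your route only recovers the $e^{-O(R^3)}$ bound that the paper explicitly identifies as the barrier of the original ARV analysis.

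The missing idea is the paper's construction of a \emph{small} source set $S$ before invoking Lemma~\ref{extend}. Starting from the covered set $T_0$, one repeatedly applies an averaging dichotomy: either at least $\delta^{-1}|T_0|$ vertices each have out-degree $\ge\delta^2|T_0|/|X|$ into $T_0$ (call them $T_1$ and recurse), or at least $\delta|T_0|$ vertices each have out-degree $\ge\delta^3$ into $T_0$ (stop). This yields a chain $T_0,\dots,T_t$ whose sizes grow geometrically (so $t\le 6k$) and a final set $S$ with $|S|\le\delta^{-1}|T_t|$; connecting $S$ to $T_0$ through the chain uses $t$ \emph{uncorrelated} directions $\D'=\N^t_0$ and Lemma~\ref{nearby}, and it is this $\D'$ --- not the smoothing step, as you suggest --- that produces the $\N^{k'}_0$ component of the shuffle in case~\ref{second}. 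Because $|S|\le\delta^{-1}|T_t|$, Lemma~\ref{extend} now outputs a cover with $\delta$-parameter $\delta^5|T_t|/4|S|\ge\delta^7$, a constant independent of $k$, which is exactly what lets every subsequent boost cost only $\delta^{O(K)}$ in $\gamma$ and $O(\ell\sqrt{\log(1/\delta)})$ in stretch, closing the recursion with the stated exponents. This chaining-through-uncorrelated-hops step is the algorithmic form of Lee's refinement and is the heart of the lemma; without it the proof does not go through as claimed.
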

\begin{proof}
For $k = 0$, every $x\in X$ is $(0, 1, 1, 0)$-covered by $\D^0(\u) = ()$, the
empty list.

Assuming case \ref{first} holds for some $0 \leq k < K$, let $T_0$ be those vertices that are
$(k\sigma/4, \delta^8, \gamma, \ell)$-covered by $\D^{k}$, where $\gamma =
\delta^{24Kk}$.
We begin by finding a set $S$ that is well-connected to $T_0$.
Since at least $\delta|T_0|$ in-degree enters $T_0$ in $M$, by averaging
either at least $\delta^{-1}|T_0|$ vertices have
at least $\delta^2|T_0|/|X|$ out-degree into $T_0$ or else at least $\delta|T_0|$ vertices
have at least $\delta^3$ out-degree into $T_0$.  In the former case, call that set $T_1$
and
repeat, yielding sets $T_0,T_1,\ldots,T_t$ where each $y \in T_s$ has at least
$\delta^2|T_{s-1}|/|X|$ out-degree into $T_{s-1}$.  Let $S$ be those vertices with
out-degree at least $\delta^3$ into $T_t$, so that $\delta|T_t| \leq |S| \leq
\delta^{-1}|T_t|$.
Let $\D' = \N^t_0$; by construction, each $y \in T_t$ has
\begin{align}
\mu_{M(\D')}(y, T_0) &\geq \prod_{s=0}^{t-1}\delta^2|T_s|/|X| \notag \\
 &\geq \delta^{2t-t(t-1)/2}|T_0|/|X| \notag \\
 &\geq \delta^{3+6k}\notag
\end{align}
Assuming case \ref{second} does not hold by setting $\D^* = \D'$, there is a set
$T\subseteq T_t$ of size at least
$(1-\delta^{5})|T_t|$
such that each $y \in T$ has $\mu_{M(\D')}(y, T_0 \cap \Ball[y;\ell]) \geq \delta^{3+6k}/2
=:\gamma'$.
It follows that at least $\delta^3|S| - \delta^{5}|T_t| \geq \delta^5|T_t|$ out-degree
from $S$ enters $T$.

Lemma \ref{nearby} implies each $y \in T$ is $((k-1)\sigma/4, \delta^{9}, \gamma'',
2\ell)$-covered by $\D''(\u) = \D', \D^k(\u)$ where $\gamma'' =
\gamma \gamma' \delta^{9}$ (we replace factors of $1/4$ with $\delta$).
Setting $\rho = 1-1/K$, lemma \ref{boost} implies each $y \in T$ is
$((k-3)\sigma/4, 1-2\delta^{9}, \gamma''', 2\ell)$-covered
by $\D'''(\u) = \D(\hat\u)$ for $\hat\u \sim_{1-1/K}\u$ where $\gamma''' =
\delta^{18K+1}\gamma''$.
Finally, since $\delta^5|T_t|/4|S| \geq \delta^7$, lemma \ref{extend} implies at least
$\delta^5|T_t|/2$ vertices in $S$ are
$((k+1)\sigma/4, \delta^7, \gamma''', \infty)$-covered by $\D^{k+1}(\u) = \u,
\D'''(\u)$, where
\[ \gamma''' = \delta^{18K+1+3+6k}\gamma/2 \geq 2\delta^{24K(k+1)} \]
Assuming case \ref{second} does not hold for $\D^* = \D^{k+1}$, at least $\delta^5|T_t|/4 \geq
\delta^{6(k+1)}|X|$ vertices in $S$ are $((k+1)\sigma/4, \delta^8, \delta^{24K(k+1)},
\ell)$-covered by $\D_{k+1}$.

Finally, note each $\D_k$ consists of a shuffling of $\N^k_{1-1/K}$ with $\N^{k'}_0$ where
$k' \leq 6k$ because the expanding case can occur at most $6k$ total times.
\end{proof}

To complete the proof of theorem \ref{ARVstrong}, recall $M$ is a $(\sigma,
\delta/4)$-uniform-matching-cover of $X$.  Let $1 \leq R \leq \log(n)/\log(1/\delta))$.
For $R <7$, lemma \ref{induct} implies a typical edge in $M$ has length
$\Omega(\sigma/\sqrt{\log(n/\delta)}) = \Omega(R\sigma/\sqrt{\log n})$
since $\log(n \geq \log(1/\delta))$ by assumption.  That is, setting $\D = \Sample(1,0)$
suffices.

For $R \geq 7$, set $K = \lfloor R/7\rfloor$ and $\ell = R\sigma/2^{10}\sqrt{\log(n)}$, so
that $\ell$ satisfies lemma \ref{induct}.  Lemma \ref{toolong} implies case \ref{first} of
lemma \ref{induct} can not hold for $K$, so case \ref{second} must hold.  That is, setting
$\D = \Sample(R, 1-1/K)$ suffices.
\ifFOCS
One might be concerned with issues of precision required for sampling Gaussians.
Fortunately, it suffices to approximate them by sampling $\w \in \{\pm 1\}^k$
and returning $\frac{1}{\sqrt{k}}\sum_{i =1}^k \w_i$ for $k = O(\log n)$; we include the
details in the full version.
\else
\subsection{Using $\pm 1$ Coins}
One might be concerned with issues of precision required for sampling Gaussians.
Fortunately, it suffices to approximate them by sampling $\w \in \{\pm 1\}^k$
and returning $\frac{1}{\sqrt{k}}\sum_{i =1}^k \w_i$ for $k = O(\log n)$.
\begin{lemma}\label{binary} Suppose that instead of a random Gaussian $\u$, we sample a uniform random $\pm 1$
matrix $\U \in \R^{d \times k}$ and set $\u = \U\one$, where $\one \in \R^k$
has $\one_j = 1/\sqrt{k}$ for all $j \leq k$.
To sample a $\rho$-correlated copy $\hat\u$, we sample
$\hat\U \in \R^{d \times k}$ as a $\rho$-correlated copy of $\U$ (i.e., each
$\hat\U_{ij} = \U_{ij}$ with probability $\rho$ or a random $\pm 1$ with probability
$1-\rho$) and set $\hat\u = \hat\U\one$.  Then, for $k = O(R^2\log(1/\delta)) = O(\log n)$, theorem \ref{ARVstrong}
still holds.
\end{lemma}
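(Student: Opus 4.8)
The plan is to re-run the proof of Theorem \ref{ARVstrong} in Section \ref{PROOF} essentially unchanged, with every standard normal $\u$ replaced by the surrogate $\u=\U\one$ and every $\rho$-correlated copy by an entrywise $\rho$-correlated $\hat\U$, and to check that each step survives once $k$ is chosen large enough. Inspecting the argument, the Gaussian law is invoked in exactly two places: (i) the reverse-hypercontractive isoperimetric inequality, Lemma \ref{iso}, which powers the $\delta$-boosting step of Lemma \ref{boost}; and (ii) one-dimensional tail bounds for the linear forms $(\v_y-\v_x)\cdot\u$ --- their unconditional Gaussian tails in Lemmas \ref{toolong} and \ref{nearby} (and in the $R<7$ case of the final argument), and the \emph{conditional} tail, with the reduced variance $(1-\rho^2)\|\v_y-\v_x\|^2$, in Lemma \ref{boost}. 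Everything else --- Lemma \ref{extend}, the counting in Lemma \ref{induct}, and the shuffling and truncation in \Sample{} --- treats the sampling distribution as a black box and is unaffected. It therefore suffices to reprove (i) and (ii) for the surrogate; note also that each coordinate of $\u$ is mean-zero with variance $1$, so the surrogate is variance-matched to the standard normal and no parameters $\sigma,\ell,\dots$ need rescaling.

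For (i), observe that $\u=\U\one$ together with an entrywise $\rho$-correlated $\hat\U$ is exactly the image, under the linear map $\one$, of a $\rho$-noise pair on the cube $\{\pm 1\}^{d\times k}$; in particular $\E[\U_{ij}\hat\U_{ij}]=\rho$, so the surrogate forms carry the same covariance structure as the Gaussian ones. Since Borell's reverse hypercontractive inequality holds for the uniform measure on $\{\pm 1\}^{d\times k}$ (it tensorizes from the two-point space), the proof of Lemma \ref{iso} applies word for word after replacing a set $A\subseteq\R^d$ by its preimage $\{\U:\U\one\in A\}$; thus Lemma \ref{iso} holds unchanged for the surrogate. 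For (ii), write $(\v_y-\v_x)\cdot\u=\frac{1}{\sqrt{k}}\sum_{i,j}(\v_y-\v_x)_i\,\U_{ij}$, a sum of $dk$ independent mean-zero terms, each of absolute value at most $\|\v_y-\v_x\|/\sqrt{k}$, with total variance $\|\v_y-\v_x\|^2$. Hoeffding's inequality then reproduces the unconditional Gaussian tail $\exp(-\beta^2/2\|\v_y-\v_x\|^2)$ \emph{exactly}, so Lemmas \ref{toolong} and \ref{nearby} (and the $R<7$ case) go through with no loss. Conditioned on $\hat\u$, the same expression is a sum of independent terms with conditional mean $\rho(\v_y-\v_x)\cdot\hat\u$, conditional variance $v:=(1-\rho^2)\|\v_y-\v_x\|^2$, and the same per-term bound, so Bernstein's inequality gives, with $Z:=(\v_y-\v_x)\cdot(\u-\rho\hat\u)$,
\[
\Pr\bigl[\lvert Z\rvert\ge\beta \mid \hat\u\bigr]\;\le\;2\exp\!\left(-\frac{\beta^2}{2\bigl(v+\beta\,\|\v_y-\v_x\|/\sqrt{k}\bigr)}\right).
\]

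The only quantitative obstacle is the granularity term $\beta\,\|\v_y-\v_x\|/\sqrt{k}$ in this conditional bound: a Gaussian has no atoms, whereas the surrogate does, so its variance-reduced tail is as sharp as the Gaussian one only once $\sqrt{k}$ dominates that ratio. Since the right-hand side above is monotone increasing in $\|\v_y-\v_x\|$ with $\beta$ held fixed, it suffices to control the worst case $\|\v_y-\v_x\|=\ell$; and in every invocation of Lemma \ref{boost} one has $\beta=\Theta(\ell\sqrt{\log(1/\delta)})$ and $\rho=1-1/K$ with $K=\Theta(R)$, so $v\ge\ell^2/K$ there. Taking $k=CR^2\log(1/\delta)$ for a large enough absolute constant $C$ thus forces the granularity term below $v/100$, and the displayed bound differs from its Gaussian counterpart by at most a factor $2$ and a $(1\pm\tfrac{1}{100})$ factor in the exponent --- a perturbation the slack built into Lemma \ref{boost} absorbs. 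Feeding these negligibly degraded tail bounds, together with the unchanged Lemma \ref{iso}, through Lemmas \ref{toolong}--\ref{induct} leaves every inequality valid with only the hidden constants worsened, so Theorem \ref{ARVstrong} holds with the same $e^{-O(R^2)}n$ conclusion (with a larger constant in the $O$). Finally, since $R\le\Theta(\sqrt{\log n})$ and $\delta=\Omega(1)$ in the setting of Theorem \ref{ARVstrong}, $k=CR^2\log(1/\delta)=O(\log n)$ as claimed, and \Sample{} using these surrogates needs only fair coins; the remaining one-dimensional facts used elsewhere in the algorithm (e.g.\ Lemma \ref{goodu}) follow from the same Hoeffding tail together with standard anti-concentration of $\pm 1$ sums.
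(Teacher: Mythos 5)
Your proposal is correct and follows essentially the same route as the paper's appendix: Lemma \ref{iso} survives because Borell's inequality tensorizes to the hypercube, the uncorrelated tails are recovered by a Hoeffding/Bernstein bound, and the only place the discreteness costs anything is the variance-reduced conditional tail in Lemma \ref{boost}, where the Bernstein error term forces $k = O(K^2\log(1/\delta)) = O(R^2\log(1/\delta))$. The one point you gloss over that the paper spells out is the anti-concentration needed for Lemma \ref{goodu} (the paper proves $(\v\cdot\U\one)^2 \geq \|\v\|^2/4$ with probability $\geq 1/5$ by a fourth-moment computation), but citing this as standard is acceptable.
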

The proof of lemma \ref{binary} is straightforward.  Lemmas \ref{goodu}, \ref{toolong}, \ref{nearby}
all still hold with similar constants even for $k=1$ (see e.g. \cite{database}), so the
only issue is lemma \ref{boost}.  For the latter, lemma \ref{iso} also
holds for $\rho$-correlated $\pm 1$ variables, so the only change needed is in bounding $(\v_x - \v_y)
\cdot (\u - \hat\u)$, which is easily done for $k = O(R^2\log(1/\delta)) = O(\log(n))$ using Bernstein's
inequality(see e.g. \cite{Boucheron}).
For completeness, we include the details in appendix \ref{plusminus}.
\fi
%%%%%%%%%%%%%%%%%%%%%%%%%%%%%%%%%%%%%%%%%%%%%%%%%%%%%%
%               CUT-MATCHING GAME
%%%%%%%%%%%%%%%%%%%%%%%%%%%%%%%%%%%%%%%%%%%%%%%%%%%%%%
\section{\label{CUTMATCH}Lower-bound for the Cut-Matching Game}
Khandekar, Rao, and Vazirani proposed a primal-dual framework based on
the following two-player game game, which proceeds for $T$ rounds\cite{KRV}.
On each round, the cut player chooses a bisection $(S^t,\overline{S^t})$ of the vertices,
and the matching player responds with a perfect matching $M^t$ pairing each $x \in S^t$ with some
$y \in \overline{S^t}$  The payoff to the cut player is $h(H^T)$, where $H^t
= M^1 + \cdots + M^t$.  Thus on round $t$, the cut player aims to choose a cut so that
\emph{any} matching response $M^t$ will increase the expansion of $H^{t}$.

To see the connection to {\sc sparsest cut}, suppose the cut player has a strategy
that guarantees $h(H^T) \geq T/\kappa$, and consider a matching player that
plays as follows.
When given a bisection $(S,\overline{S})$,
the matching player connects a source $s$ to all $x\in S$ with edges of unit capacity and
a sink $t$ to all $y \in \overline{S}$.
A simple lemma similar to lemma \ref{cutval} implies that if the min-cut is at most $n/2$,
then it has expansion at most one.  Otherwise, the added edges are saturated, and assuming all
edges have integral capacities, the flow can be pseudo-decomposed into a matching; the
matching player responds with that matching.
Then, after $T$ rounds, we have either found a cut of expansion one or else routed $H^T$ in
$G$ with congestion $T$.  Assuming the cut-player forced $h(H^T) \geq
T/\kappa$, scaling down by $T$ yields a feasible flow routing a graph of expansion
$1/\kappa$, yielding a $\kappa$-approximation.

The following theorems appear in \cite{OSVV}.
\begin{theorem}[\cite{OSVV}] The cut player has an (efficient) strategy to ensure,
\begin{equation}
\exp\left(-\lambda_2(\L_{H^t})\right) \leq n\exp\left(\frac{-t}{O(\log n)}\right) \notag
\end{equation}
In particular, after $T = O(\log^2(n))$, the cut player can ensure $\lambda_2(\L_{H^T})
\geq \Omega(\log n)$, yielding an $O(\log n)$ factor approximation using $O(\log^2n)$
max-flows.
\end{theorem}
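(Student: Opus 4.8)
The plan is to track the matrix-exponential potential
\[ \Psi_t = \mathrm{tr}\exp(-\L_{H^t}) - 1 = \sum_{i\geq 2}\exp\!\bigl(-\lambda_i(\L_{H^t})\bigr), \]
where the subtracted $1$ discards the eigenvalue along $\one$; equivalently one works with operators restricted to $\one^\perp$, which is legitimate since $\one$ lies in the kernel of every $\L_{H^t}$ and of every perfect-matching Laplacian $\L_{M^t}$. Since $H^0$ is empty, $\Psi_0 = n-1 < n$, and since $\lambda_2$ is the smallest nonzero Laplacian eigenvalue, $\exp(-\lambda_2(\L_{H^t})) \leq \Psi_t$ for all $t$. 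So it suffices to exhibit a cut-player strategy forcing $\E[\Psi_{t+1}\mid\Psi_t] \leq (1-\Omega(1/\log n))\Psi_t$ against any matching response; telescoping this expectation gives $\E[\Psi_t] \leq n\,e^{-\Omega(t/\log n)}$, which (since $\exp(-\lambda_2(\L_{H^t}))\leq\Psi_t$) is the stated bound, in expectation, and standard concentration makes it hold with high probability up to the constant in $O(\log n)$; then $T = O(\log^2 n)$ rounds push $\lambda_2(\L_{H^T})$ past $\Omega(\log n)$. In the reduction to {\sc sparsest cut} each round costs one $s$--$t$ max-flow (the matching player's response, pseudo-decomposed into a matching), while the cut player only needs (approximate) heat-kernel computations, so the whole algorithm uses $O(\log^2 n)$ max-flows.

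The first step records how one matching moves $\Psi$. Since $\L_{H^{t+1}} = \L_{H^t}+\L_{M^{t+1}}$, Golden--Thompson gives $\Psi_{t+1}+1 \leq \mathrm{tr}\bigl(\exp(-\L_{H^t})\exp(-\L_{M^{t+1}})\bigr)$. A perfect-matching Laplacian has eigenvalues only $0$ and $2$, with $2$-eigenspace $W_M := \mathrm{span}\{\e_u-\e_w:(u,w)\in M\}$, so $\exp(-\L_M) = I-(1-e^{-2})\Pi_{W_M}$ with $\Pi_{W_M}$ the orthogonal projection onto $W_M$; substituting,
\[ \Psi_{t+1} \leq \Psi_t - (1-e^{-2})\,\mathrm{tr}\bigl(\exp(-\L_{H^t})\,\Pi_{W_{M^{t+1}}}\bigr). \]
Passing to the heat-kernel embedding $\v_u = \exp(-\L_{H^t}/2)\,\e_u$ (recentered to mean $0$), and using that the vectors $(\e_u-\e_w)/\sqrt2$ over a matching's edges are orthonormal, that last trace equals $\tfrac12\sum_{(u,w)\in M^{t+1}}\|\v_u-\v_w\|^2$ --- so $\Psi$ drops by a constant times the total squared edge length of the matching in this embedding. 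The two facts about the embedding we will use are $\sum_u\|\v_u\|^2 = \Psi_t$ and that it has bounded diameter, $\|\v_u-\v_w\|^2 \leq \|\e_u-\e_w\|^2 = 2$ (because $\L_{H^t}\succeq 0$ forces $\exp(-\L_{H^t})\preceq I$).

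The crux is the cut-finding core of \cite{KRV, OSVV}: pick the bisection $(S^{t+1},\overline{S^{t+1}})$ so that \emph{every} perfect matching $M$ across it has $\sum_{(u,w)\in M}\|\v_u-\v_w\|^2 = \Omega(\Psi_t/\log n)$. I would do this in three moves. (i) Project the $\v_u$ onto a random $O(\log n)$-dimensional subspace; this preserves all pairwise distances within a constant factor, hence preserves any matching's total squared length and the total $\sum_u\|\v_u-\bar\v\|^2 = \Psi_t$. (ii) Project the images onto a uniformly random direction in that subspace to get scalars $x_u$, and take the bisection $S = \{u : x_u \text{ below the median}\}$. (iii) Observe the median split is robust: writing $x_u = \mu-a_u$ for $u\in S$ and $x_w = \mu+b_w$ for $w\in\overline S$ with $a_u,b_w\geq 0$, any perfect matching across it has $\sum_{(u,w)\in M}(x_u-x_w)^2 = \sum(a_u+b_w)^2 \geq \sum_u(x_u-\mu)^2 \geq \sum_u(x_u-\bar x)^2$, so the one-dimensional second moment cannot shrink. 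Since the direction was random in only $O(\log n)$ dimensions, $\E[\sum_u(x_u-\bar x)^2] = \Omega(\Psi_t/\log n)$; the inequality in (iii) holds for every realization of the direction and hence survives an adversary who picks $M$ after seeing the cut, so taking expectations and using the constant-factor distance preservation to pass back to the original heat-kernel lengths gives $\E[\sum_{(u,w)\in M}\|\v_u-\v_w\|^2] = \Omega(\Psi_t/\log n)$. Plugging into the previous display yields the per-round expected multiplicative decrease.

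The one genuine difficulty is calibrating this cut-finding step. A matching is free to pair up points whose one-dimensional projections nearly coincide, so the median split really is needed (an arbitrary bisection forces no progress at all), and the sole place a $\log n$ factor is lost is in passing from the full second moment $\Psi_t$ to what a single random one-dimensional projection can see --- which is exactly why the embedding must first be compressed into $O(\log n)$ dimensions. Golden--Thompson, the eigenstructure of a matching Laplacian, and the telescoping are all routine.
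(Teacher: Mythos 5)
This theorem is imported by the paper from \cite{OSVV} without proof, so there is no in-paper argument to compare against; judged on its own, your reconstruction is essentially the known argument (the ``second algorithm'' of \cite{OSVV}, equivalently the matrix multiplicative-weights analysis of Arora--Kale), and it is correct. The key steps all check out: Golden--Thompson for $\mathrm{tr}\exp(-\L_{H^t}-\L_{M^{t+1}})$; the observation that a perfect-matching Laplacian has $\exp(-\L_M)=I-(1-e^{-2})\Pi_{W_M}$, so the potential drop is the matching's total squared length in the heat-kernel embedding; the deterministic median-split inequality $\sum_M (x_u-x_w)^2=\sum(a_u+b_w)^2\geq\sum_v(x_v-\mu)^2\geq\sum_v(x_v-\bar x)^2$, which is exactly what defeats an adversary choosing $M$ after seeing the cut; and the nice device of doing Johnson--Lindenstrauss first so that the passage from projected lengths back to true lengths is a \emph{deterministic} constant (projections are contractions), pushing the unavoidable $\log n$ loss into $\E[\sum_u(x_u-\bar x)^2]=\Theta(\Psi_t/\log n)$ rather than into a whp stretch bound. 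Two small points you wave at rather than prove: (a) the recentering identity $\sum_u\|\v_u-\bar\v\|^2=\mathrm{tr}\exp(-\L_{H^t})-1$ uses that $\bar\v=\one/n$ because $\one\in\ker\L_{H^t}$ --- worth stating; (b) for ``standard concentration,'' note that $\Psi_t$ is nonnegative and never increases, so telescoping the conditional expectation plus one application of Markov's inequality at time $T$ already gives $\lambda_2(\L_{H^T})\geq\Omega(\log n)$ with high probability; no martingale concentration is actually needed.
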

\begin{theorem}[\cite{OSVV}]\label{oldlower} The matching player can ensure
\begin{equation}
h(H^t) \leq O\left(\frac{1}{\sqrt{\log n}}\right) \cdot t \notag
\end{equation}
\end{theorem}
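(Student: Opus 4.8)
The plan is to follow \cite{OSVV}: reduce the combinatorial claim to a spectral one via Cheeger's inequality, and then give the matching player a strategy that maintains a short, spread geometric embedding certifying that $\lambda_2(\L_{H^t})$ stays small. Since $H^t$ is the sum of $t$ perfect matchings it is $t$-regular, so Cheeger's inequality \cite{Alon} gives $h(H^t)\le\sqrt{2t\,\lambda_2(\L_{H^t})}$; hence it suffices to guarantee $\lambda_2(\L_{H^t})\le O(t/\log n)$, which then yields $h(H^t)\le O(t/\sqrt{\log n})$. To bound $\lambda_2$, the matching player maintains, after each round $t$, a centered embedding $\{\v_x^t\}\subset\mathbb{R}^d$ with $d=\Theta(\log n)$ that stays in the unit ball and stays \emph{spread}, i.e.\ $\sum_{x<y}\|\v_x^t-\v_y^t\|^2=\Omega(n^2)$. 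Centeredness and spread force $\sum_x\|\v_x^t\|^2=\Theta(n)$, so the variational characterization of $\lambda_2$ gives $\lambda_2(\L_{H^t})\le\bigl(\sum_x\|\v_x^t\|^2\bigr)^{-1}\sum_{(x,y)\in H^t}\|\v_x^t-\v_y^t\|^2=O(1/n)\sum_{(x,y)\in H^t}\|\v_x^t-\v_y^t\|^2$. The task thus becomes: keep the total squared length of the $tn/2$ edges of $H^t$ at $O(tn/\log n)$, i.e.\ average squared edge length $O(1/\log n)$, without letting the embedding lose spread.

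On round $t$, given the bisection $(S^t,\overline{S^t})$, the matching player answers with the perfect matching $M^t$ across the cut minimizing $\sum_{(x,y)\in M^t}\|\v_x^{t-1}-\v_y^{t-1}\|^2$, and then ``heals'' the embedding by a lazy-averaging step: each matched pair $(x,y)$ moves an $\eps$-fraction toward its midpoint, $\V^t=\V^{t-1}\bigl(I-\tfrac{\eps}{2}\L_{M^t}\bigr)$, followed by a global rescaling back into the unit ball. Because $I-\tfrac{\eps}{2}\L_{M^t}$ is a contraction, this step never increases the squared length of any edge of $H^{t-1}$, while it shrinks the newly added edges of $M^t$ by a factor $(1-\eps)^2$; iterating, the contribution of $M^s$ to the edge energy of $H^t$ is at most $(1-\eps)^2$ times its energy just before the round-$s$ healing. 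So the two quantities to control are (a) the minimum matching energy the cut player can force in a round, and (b) the cumulative loss of spread caused by the healings and rescalings.

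The heart of the proof — and the step I expect to be the main obstacle — is the amortized argument showing the cut player cannot, over $t$ rounds, push the total edge energy of $H^t$ above $O(tn/\log n)$ while the embedding keeps spread $\Omega(n^2)$. On a single round the cut player can choose a bisection (one separated by a hyperplane in the current embedding, or one ``enclosing'' one side) across which every matching has energy $\Theta(n)$; the point is that, because the matching player's embedding is a near-optimal SDP solution for $H^{t-1}$, such a cut is necessarily one currently crossed by few $H^{t-1}$ edges (a sparse cut the cut player is densifying to raise $\lambda_2$), so after healing along $M^t$ the matching player can re-establish spread by re-inflating a complementary direction while stretching only those few $H^{t-1}$ edges. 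Making this quantitative calls for a potential function comparing the spread of the embedding to the edge energy of $H^t$, together with a Gaussian measure-concentration estimate (in the spirit of Lemmas~\ref{toolong} and \ref{nearby}) bounding the stretch incurred per re-inflation; the dimension $d=\Theta(\log n)$ enters as the number of essentially independent directions available for restoring spread, and this is the source of the $\sqrt{\log n}$. Combining, $\lambda_2(\L_{H^t})=O(t/\log n)$ and hence $h(H^t)=O(t/\sqrt{\log n})$. (That \emph{some} $\Omega(\sqrt{\log n})$ barrier must exist also follows from the known $\Omega(\sqrt{\log n})$ lower bound on the flow--cut gap for negative-type ($\ell_2^2$) metrics, but the explicit strategy above is what pins down the absolute constant in the $O(1/\sqrt{\log n})$.)
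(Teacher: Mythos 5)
There is a genuine gap, and it starts at the very first step. Your reduction via Cheeger's inequality ($h(H^t)\le\sqrt{2t\,\lambda_2(\L_{H^t})}$ for the $t$-regular graph $H^t$) requires the matching player to guarantee $\lambda_2(\L_{H^t})\le O(t/\log n)$. That spectral statement is \emph{strictly stronger} than this paper's own Theorem~\ref{lower}, which is presented as a new contribution, needs a dedicated sphere-packing construction (Lemma~\ref{pointset}), and even then only achieves $\lambda_2(\L_{H^t})\le O(t\log\log n/\log n)$. Plugging that best-known spectral bound back into Cheeger gives only $h(H^t)\le O\bigl(t\sqrt{\log\log n/\log n}\bigr)$, which does not recover the claimed $O(t/\sqrt{\log n})$. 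So the detour through $\lambda_2$ makes the problem harder, not easier, and your "heart of the proof" paragraph --- the amortized spread-versus-energy potential argument for a dynamically healed and re-inflated embedding --- is exactly the part that is missing; you flag it yourself as the main obstacle, and the telescoping bound you implicitly rely on (the per-round increase being controlled by the new matching's energy) is only valid for a \emph{fixed} embedding, which your healing and rescaling steps destroy.

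The actual argument (OSVV's, and the simplification in Section~\ref{CUTMATCH}) never touches $\lambda_2$ or Cheeger and never updates the embedding. Identify the vertices once and for all with the hypercube $\{-1/\sqrt{d},1/\sqrt{d}\}^d$, $d=\log n$, and track the \emph{average expansion of the $d$ dimension cuts}, i.e.\ the potential $\psi(t)$ of Lemma~\ref{potential} with $L=n/d$. Because the embedding is static, $\psi(t)-\psi(t-1)\le\frac{1}{dL}\sum_{xy\in M^t}\|\v_x-\v_y\|^2$ exactly, with no loss-of-spread bookkeeping. The only thing the matching player must do is, for an arbitrary adversarial bisection, produce a perfect matching across it of total squared length $O(n/\sqrt{d})$; this follows from the vertex isoperimetry of the hypercube, either by LP duality as in \cite{OSVV} (using that $\ell_2^2$ on the hypercube is a metric) or by the greedy closest-pair argument of Lemma~\ref{greed}. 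Then $\psi(t)\le O(t/\sqrt{d})$, so some dimension cut has expansion $O(t/\sqrt{\log n})$, and $h(H^t)$ is at most the expansion of that single explicit cut. Your proposal is missing this fixed-witness-cut idea entirely, and without it the claims about near-optimal SDP solutions, sparse cuts being densified, and measure-concentration-controlled re-inflation do not assemble into a proof.
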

We prove the following.
\begin{theorem} \label{lower} The matching player can ensure
\begin{equation}
\lambda_2(\L_{H^t}) \leq O\left(\frac{\log\log n}{\log n}\right)\cdot t \notag
\end{equation}
\end{theorem}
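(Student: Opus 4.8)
The plan is to hand the matching player an explicit geometric ``certificate'' that forces $\lambda_2$ to grow slowly no matter how the cut player plays. Fix parameters $d$ and $k$ with $k^d=n$ (for general $n$ take $k=\lceil n^{1/d}\rceil$ and any $\Theta(1)$-spread $n$-subset of the grid below), and embed the $n$ vertices, once and for all, as $n$ cell-centers $\{\tfrac1{2k},\tfrac3{2k},\dots,\tfrac{2k-1}{2k}\}^{d}\subseteq[0,1]^d$ of the uniform grid; write $\v=(\v_1,\dots,\v_n)$ for this fixed embedding and $V=\tfrac1n\sum_{x<y}\|\v_x-\v_y\|^2=\Theta(nd)$. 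On round $t$, given the bisection $(S^t,\overline{S^t})$, the matching player replies with the minimum-cost perfect matching $M^t$ between $S^t$ and $\overline{S^t}$ for the costs $\|\v_x-\v_y\|^2$ (a bipartite transportation problem, solvable in polynomial time). Since $\lambda_2(\L_{H^t})$ equals the minimum of $\Phi(\cdot,H^t)$ over embeddings (the embedding player's optimal value, cf.\ Section~\ref{FLOWS}), it is at most $\Phi(\v,H^t)=\tfrac1V\sum_{s\le t}\sum_{(x,y)\in M^s}\|\v_x-\v_y\|^2$, so the whole theorem reduces to the single geometric claim that \emph{every} bisection $(S,\overline S)$ of the grid admits a perfect matching across it of cost $O(n)=O(V/d)$; summing over rounds then gives $\lambda_2(\L_{H^t})\le O(t/d)$, and the next step will let us take $d=\Theta(\log n/\log\log n)$.

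To prove the geometric claim I pass to optimal transport. Let $\mu_S,\mu_{\overline S}$ be the uniform probability measures on $\{\v_x\}_{x\in S}$ and $\{\v_x\}_{x\in\overline S}$; by Birkhoff--von Neumann the optimal transport plan between two uniform $n/2$-point measures is supported on a permutation, so the minimum matching cost is exactly $\tfrac n2\,W_2^2(\mu_S,\mu_{\overline S})$. Thus the claim is that $W_2^2(\mu_S,\mu_{\overline S})=O(1)$ \emph{uniformly over all $2^n$ bisections} --- and since $\v$ is deterministic there is no union bound to fear; the only enemy is the trivial diameter bound $W_2^2\le d$, which is useless. The dimension-free estimate I want comes from Talagrand's transportation--entropy inequality: the uniform measure $\mu$ on $[0,1]^d$ satisfies $W_2^2(\nu,\mu)\le C\,D_{\mathrm{KL}}(\nu\,\|\,\mu)$ with an \emph{absolute} constant $C$ (uniform on $[0,1]$ is a compactly supported log-concave measure, hence satisfies a log-Sobolev inequality, hence $T_2$, with absolute constant; and $T_2$ tensorizes with no loss in dimension). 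One cannot feed the atomic $\mu_S$ into this inequality, so I first ``blur'' it: replace each atom $\v_x$, $x\in S$, by the uniform measure on its grid cell, obtaining $\widetilde\mu_S=2\mu|_{\widetilde S}$ for a set $\widetilde S$ of volume $\tfrac12$; then $D_{\mathrm{KL}}(\widetilde\mu_S\,\|\,\mu)=\ln 2$, so $W_2^2(\widetilde\mu_S,\mu)\le C\ln 2$ (and likewise for $\overline S$), while $W_2^2(\mu_S,\widetilde\mu_S)$ is at most the squared cell half-diagonal $d/(4k^2)$. The triangle inequality then gives $W_2^2(\mu_S,\mu_{\overline S})\le 4C\ln 2+d/(2k^2)$, which is $O(1)$ as soon as $k\ge\sqrt d$.

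It remains to optimize. The only constraint is the one forced by the blurring step, $k=n^{1/d}\ge\sqrt d$, i.e.\ $d\log d\le 2\log n$, which is met by $d=\Theta(\log n/\log\log n)$ and $k=n^{1/d}=(\log n)^{\Theta(1)}$. With these parameters $V=\Theta(nd)$ and every bisection has a matching of cost $O(n)$, so $\Phi(\v,H^t)\le O(tn)/\Theta(nd)=O\!\left(\tfrac{\log\log n}{\log n}\right)t$, whence $\lambda_2(\L_{H^t})\le O\!\left(\tfrac{\log\log n}{\log n}\right)t$. The main obstacle is precisely this dimension-free transportation bound and the tension it creates: for a \emph{discrete} embedding the worst-case Wasserstein distance between two halves is $\Theta(\sqrt d)$ rather than $O(1)$ (e.g.\ the majority bisection of the $k=2$ grid, the Boolean hypercube, which only yields $\lambda_2(\L_{H^t})\le O(t/\sqrt{\log n})$), and one recovers $O(1)$ only by keeping the grid fine enough that the blurring error $d/k^2$ stays controlled --- it is exactly this cap on how large $d$ can be, and not any looseness in the cut-matching argument, that produces the extra $\log\log n$ factor over the ``ideal'' bound $t/\log n$. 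One should also check that the $O(\log^2 n)$-round game, the scaling reductions, and replacing the continuum by $\pm1$-valued coordinates all go through with only lower-order changes, exactly as elsewhere in the paper; these are routine.
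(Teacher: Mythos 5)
Your argument is correct, and it reaches the same quantitative conclusion by a genuinely different route for the key geometric step. The overall architecture coincides with the paper's: fix once and for all a well-spread embedding in dimension $d=\Theta(\log n/\log\log n)$, observe that $\lambda_2(\L_{H^t})$ is bounded by an average of coordinate Rayleigh quotients (your use of $\Phi(\v,\cdot)$, which centers automatically, is a mild simplification of Lemma \ref{potential}, which needs $\sum_x\v_x=0$ and per-coordinate norm lower bounds), and show every bisection admits a crossing perfect matching of total squared length $O(n)$ against a normalization of $\Theta(nd)$. Where you diverge is in the choice of point set and, more substantially, in how the $O(n)$ matching bound is proved: the paper uses a Feige--Schechtman net on the sphere and proves Lemma \ref{greed}, a greedy closest-pair argument whose cost is controlled by iterating the vertex-isoperimetric inequality ($|\Ball[S;\sqrt r]|\geq(1+\Omega(1))|S|$ gives pairs at distance $O(\sqrt r\log(n/s))$ when $s$ points remain, and the squares sum to $O(nr)$); you instead take the uniform grid in $[0,1]^d$, identify the min-cost matching with $\tfrac n2 W_2^2(\mu_S,\mu_{\overline S})$ via Birkhoff--von Neumann, and get a dimension-free $O(1)$ bound on $W_2^2$ from Talagrand's $T_2$ inequality for $\mathrm{Unif}[0,1]^d$ (log-concave with diameter-$1$ support in each factor, hence LSI and $T_2$ with absolute constant, tensorized) after blurring each atom over its cell. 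The constraint that caps $d$ is the same in both proofs in disguise --- the paper needs $n=(O(\sqrt d))^d$ cells of diameter $O(1/\sqrt d)$ to cover the sphere, you need $k=n^{1/d}\geq\sqrt d$ so the blurring error $d/k^2$ stays $O(1)$ --- and both resolve to $d\log d\lesssim\log n$. What your approach buys is a cleaner, non-iterative bound via an off-the-shelf transportation inequality (and an explicitly polynomial-time optimal matching player); what the paper's buys is elementarity (no optimal transport or log-Sobolev machinery) and a reusable statement, Lemma \ref{greed}, that also recovers Theorem \ref{oldlower} for the hypercube. Your loose ends (handling $n\neq k^d$ by deleting an $o(1)$ fraction of cells, and constant-factor slack in the $W_2$ triangle inequality) are routine and comparable to the technicalities the paper defers to Appendix \ref{POINTSETPROOF}.
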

Theorem \ref{lower} does not entirely eliminate the possibility of achieving a better approximation
in the cut-matching game, and indeed it is known among experts that there \emph{exists} an
(inefficient) strategy for the cut-player to ensure $\exp(-h(H^t)) \leq
n\exp(-t/O(\sqrt{\log(n)})$\cite{Lorenzo}.  However, theorem \ref{lower} says
that doing so will require certifying expansion via something stronger than
$\lambda_2(\L_{H^t})$.
For example, one could route another expander flow $H'$ in $H^T$ and certify
$h(H^T) \geq \lambda_2(\L_{H'})/2$.  Such an approach seems somewhat awkward though, as any such
flow might as well have been routed in $G$ directly.

In theorem \ref{oldlower}, the matching player arbitrarily identifies the vertices of $G$
with a hypercube, and tries to keep the dimension cuts sparse.  In particular, it is shown
that for any bisection $(S,\overline{S})$, there must always exist a matching that
raises the expansion of the average dimension cut by at most $O(1/\sqrt{d})$.

To prove theorem \ref{lower}, we identify the vertices of $G$ arbitrarily with a
dense set of points $\v_1,\ldots,\v_n\in \R^d$ on the sphere $S^{d-1}$, where $d = \Omega(\log(n)/\log\log(n))$.
Letting $\w_1,\ldots,\w_d \in \R^n$ be the column vectors of the $n \times d$ matrix with
row vectors $\{\v_x\}$, we show that for any bisection $(S,\overline{S})$ there must be a
matching that raises the average Rayleigh quotient $\frac{\w_i^T \L_{H} \w_i}{\w^T_i
\w_i}$ by at most $O(1/d)$.

The following lemma is an easy generalization of one in \cite{OSVV}.
\begin{lemma} \label{potential} Let $\v_1,\ldots, \v^n \in R^d$, and let $\w_1,\ldots,\w_d \in \R^n$ be
defined by $\w_{i,x} = \v_{x, i}$.  Define,
\begin{align}
\psi(t) &= \frac{1}{d}\sum_{i=1}^d \frac{\w_i^T \L_{H^t} \w_i}{\w^T_i \w_i} \notag
\end{align}
If all $\|\w_i\|^2 \geq L > 0$, then,
\begin{align}
\psi(t) - \psi(t-1) &\leq \frac{1}{dL} \sum_{xy \in M^t} \|\v_x - \v_y\|^2 \notag
\end{align}
\end{lemma}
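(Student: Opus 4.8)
The plan is to directly compute the change in the potential $\psi$ by computing, for each coordinate $i$, the change in the Rayleigh quotient $\w_i^T \L_{H^t} \w_i / \w_i^T \w_i$ when the matching $M^t$ is added to the graph $H^{t-1}$. The key observation is that $\L_{H^t} = \L_{H^{t-1}} + \L_{M^t}$, since the Laplacian is additive over edge-disjoint (indeed, over arbitrary) graphs. Hence $\w_i^T \L_{H^t}\w_i = \w_i^T \L_{H^{t-1}}\w_i + \w_i^T \L_{M^t}\w_i$, and since the vectors $\w_i$ (and hence the denominators $\w_i^T\w_i = \|\w_i\|^2$) do not change from round to round, the increment in each Rayleigh quotient is exactly $\w_i^T \L_{M^t}\w_i / \|\w_i\|^2$.

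Next I would use the standard quadratic-form identity for a Laplacian: for any graph $H$ and any vector $z \in \R^n$, $z^T \L_H z = \sum_{xy \in H} H_{xy}(z_x - z_y)^2$. Applying this with $z = \w_i$ and $H = M^t$ a perfect matching (so each edge has unit capacity and the edges are vertex-disjoint), we get $\w_i^T \L_{M^t}\w_i = \sum_{xy \in M^t}(\w_{i,x} - \w_{i,y})^2$. Summing over the $d$ coordinates and recalling $\w_{i,x} = \v_{x,i}$, the inner sum telescopes into $\sum_{i=1}^d (\v_{x,i} - \v_{y,i})^2 = \|\v_x - \v_y\|^2$. Thus $\sum_{i=1}^d \w_i^T \L_{M^t}\w_i = \sum_{xy\in M^t}\|\v_x - \v_y\|^2$.

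Putting these together:
\begin{align}
\psi(t) - \psi(t-1) &= \frac{1}{d}\sum_{i=1}^d \frac{\w_i^T\L_{M^t}\w_i}{\|\w_i\|^2} \notag \\
&\leq \frac{1}{dL}\sum_{i=1}^d \w_i^T\L_{M^t}\w_i \notag \\
&= \frac{1}{dL}\sum_{xy\in M^t}\|\v_x-\v_y\|^2 \notag
\end{align}
where the inequality uses $\|\w_i\|^2 \geq L > 0$ together with the fact that each $\w_i^T\L_{M^t}\w_i \geq 0$ (a Laplacian is positive semidefinite). This gives the claimed bound.

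There is essentially no serious obstacle here — the lemma is a routine manipulation of Laplacian quadratic forms, and the only mild subtlety is making sure the telescoping across coordinates is set up correctly (i.e.\ that the $i$-th coordinate of $\w$ lines up with the $i$-th coordinate of each $\v_x$, which is guaranteed by the definition $\w_{i,x} = \v_{x,i}$) and that one is allowed to pull the positive quantities $\w_i^T\L_{M^t}\w_i$ out against the uniform lower bound $L$ on the denominators. If anything, the place to be slightly careful is the additivity $\L_{H^t} = \L_{H^{t-1}} + \L_{M^t}$, which is immediate from $H^t = H^{t-1} + M^t$ and linearity of $\L$ in the adjacency matrix; no edge-disjointness is needed.
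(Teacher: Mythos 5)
Your proof is correct and follows exactly the paper's argument: Laplacian additivity gives the increment $\frac{1}{d}\sum_i \w_i^T\L_{M^t}\w_i/\|\w_i\|^2$, the denominators are bounded below by $L$, and swapping the order of summation telescopes $\sum_i(\w_{i,x}-\w_{i,y})^2$ into $\|\v_x-\v_y\|^2$. No differences worth noting.
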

\ifFOCS
\else
\begin{proof}
\begin{align}
\psi(t) - \psi(t-1) &=  \frac{1}{d}\sum_{i=1}^d \frac{\w_i^T\L_{M^{t}}\w_i}{\w_i^T\w_i}
\notag \\
 &\leq  \frac{1}{dL}\sum_{i=1}^d \sum_{xy \in M^t} (\w_{i,x} - \w_{i, y})^2 \notag \\
 &=  \frac{1}{dL}\sum_{xy \in M^t} \|\v_x - \v_y\|^2 \notag \qedhere
\end{align}
\end{proof}
\fi
If $\w_1,\ldots,\w_d$ are as in lemma \ref{potential} and all orthogonal to the all-1s
vector, then $\lambda_2(\L_{H^t}) \leq \psi(t)$; the orthogonality condition is equivalent to
$\sum_{x} \v_x = 0$.  Having fixed such an embedding, when presented with a bisection $(S,
\overline{S})$, the matching player aims to match points so as to minimize the average
distance between matched points.
The analysis of \cite{OSVV} shows that for the hypercube embedding $\{-1,1\}^d$, one can
obtain $\psi(t) - \psi(t-1) \leq O(1/\sqrt{d})$.  The analysis is not constructive; rather, they
use the vertex isoperimetry of the hypercube to establish an upper bound on the value
of the matching problem's dual LP, and then conclude a matching achieving that bound
exists by strong duality.  Their argument also depends on the
fact that for the hypercube embedding, the squared distances $\|\v_x-\v_y\|^2$ form a
metric.

In fact, the metric assumption is not needed, and there is also no need to apply LP duality.
We give a simple proof that large vertex isoperimetry of the embedding implies the simple greedy
strategy of iteratively matching closest points works.
\begin{lemma} \label{greed} Let $\v_1,\ldots,\v_n \in \R^d$ be a set of points such that, for any $S
\subseteq \{\v_1,\ldots,\v_n\}$ with $|S| \leq n/2$, $|\Ball[S;\sqrt{r}]| \geq (1+\Omega(1))|S|$,
Then, the greedy strategy produces $M$ with,
\begin{align}
\sum_{xy \in M} \|\v_x - \v_y\|^2 \leq O(nr) \notag
\end{align}
\end{lemma}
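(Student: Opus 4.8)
The plan is to run the greedy strategy, order the chosen pairs $(x_1,y_1),\dots,(x_{n/2},y_{n/2})$ in the order greedy selects them (so the distances $d_k:=\|\v_{x_k}-\v_{y_k}\|$ are non-decreasing in $k$), and reduce everything to a per-step bound of the shape $d_{k+1}^2 \le O\!\left(r\,\ln^2\frac{n}{n-2k}\right)$. Granting that, write $s=n/2-k$; since $\ln^2(n/2s)$ is decreasing in $s$ on $(0,n/2]$ one gets $\sum_{xy\in M}\|\v_x-\v_y\|^2=\sum_k d_k^2 \le O(r)\sum_{s=1}^{n/2}\ln^2\tfrac{n}{2s}\le O(r)\int_0^{n/2}\ln^2\tfrac{n}{2s}\,ds = O(r)\cdot\tfrac n2\int_0^1\ln^2\tfrac1u\,du = O(nr)$, using $\int_0^1\ln^2(1/u)\,du=2$. (The change from $\log_{1+\alpha}$ to $\ln$ below is harmless since the relevant constant $\alpha=\Omega(1)$.) I would also observe at the outset that greedy never gets stuck while vertices remain — all Euclidean distances are finite — so it really does output a perfect matching of the bisection.

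The engine of the per-step bound is the isoperimetry hypothesis, available only at the single scale $\sqrt r$ but \emph{bootstrapped} to larger scales. Fix $k$ and let $S_k,T_k$ be the still-unmatched vertices on the two sides, so $|S_k|=|T_k|=n/2-k=:q$. Because greedy always takes the shortest available cross pair, no pair between $S_k$ and $T_k$ has length below $d_{k+1}$; that is, $\Ball[S_k;\lambda]\cap T_k=\emptyset$ for all $\lambda<d_{k+1}$. Now set $B_0=S_k$ and $B_{i+1}=\Ball[B_i;\sqrt r]$. The key point is that \emph{Euclidean distance} obeys the triangle inequality — even though its square, which we never assume to be a metric, need not — so $B_i\subseteq\Ball[S_k;i\sqrt r]$, hence $B_i$ stays disjoint from $T_k$ for every $i$ with $i\sqrt r<d_{k+1}$. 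While $|B_i|\le n/2$ the hypothesis forces $|B_{i+1}|\ge(1+\alpha)|B_i|$ where $\alpha=\Omega(1)$, so $|B_i|$ first exceeds $n/2$ after at most $i^\ast\le 1+\log_{1+\alpha}\tfrac{n}{2q}$ steps.

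The step I expect to be the main obstacle is the regime $|B_i|>n/2$, where the hypothesis no longer applies to $B_i$ directly; here I would pass to complements. If $z\in\Ball[\overline{B_{i+1}};\sqrt r]$ then $z$ lies within $\sqrt r$ of a point that is at distance $>\sqrt r$ from all of $B_i$, so $z\notin B_i$; hence $\Ball[\overline{B_{i+1}};\sqrt r]\subseteq\overline{B_i}$, and applying the hypothesis to $\overline{B_{i+1}}$ (which has fewer than $n/2$ elements once $i\ge i^\ast$) yields $|\overline{B_{i+1}}|\le|\overline{B_i}|/(1+\alpha)$. So the complement shrinks geometrically: $|\overline{B_i}|\le\tfrac n2(1+\alpha)^{-(i-i^\ast)}$ for all $i\ge i^\ast$. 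But $T_k\subseteq\overline{B_i}$ whenever $i\sqrt r<d_{k+1}$, so $q=|T_k|\le\tfrac n2(1+\alpha)^{-(i-i^\ast)}$ for every such $i$, which forces $i\le i^\ast+\log_{1+\alpha}\tfrac{n}{2q}\le 1+2\log_{1+\alpha}\tfrac{n}{2q}$. Taking $i=\lceil d_{k+1}/\sqrt r\rceil-1$ — the largest index with $i\sqrt r<d_{k+1}$ — then gives $d_{k+1}\le\bigl(2+2\log_{1+\alpha}\tfrac{n}{2q}\bigr)\sqrt r$, i.e.\ $d_{k+1}^2\le O\!\left(r\,\ln^2\frac{n}{n-2k}\right)$, which is exactly the per-step bound and completes the plan.
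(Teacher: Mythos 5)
Your proof is correct and follows essentially the same route as the paper: iterate the expansion hypothesis to show that at the step where $s$ pairs remain on each side, some cross pair has distance $O(\sqrt{r}\,\log(n/2s))$, then sum the squares via the same integral $\int_0^{n/2}\log^2(n/2s)\,ds = O(n)$. The only (minor) difference is how you handle sets larger than $n/2$: the paper grows balls from \emph{both} sides until each exceeds $n/2$ and invokes pigeonhole to make them intersect, whereas you grow from one side only and switch to a complement-shrinking argument — both yield the same $O(\sqrt{r}\log(n/2s))$ bound up to constants.
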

\ifFOCS
\IEEEproof
\else
\begin{proof}
\fi
Starting with $S,\overline{S}$, we pick $x \in S, y \in \overline{S}$ minimizing $\|\v_x -
\v_y|^2$, match them, and then remove them.  Repeated application of the
the isoperimetric condition implies that, for
all $S$ with $|S| \leq n/2$, $|\Ball[S;t\sqrt{r}]| \geq \min\{1 + n/2,
(1+\Omega(1))^t|S|\}$.  It follows that if two sets $A, B$ have size $s$, there must be
$x \in A$, $y \in B$ with $\|\v_x - \v_y\| \leq 2t\sqrt{r}$ for $t =
\lceil \log_{(1+\Omega(1))}(n/2s)\rceil + 1 \leq 2 + O(1)\log(n/2s)$.
Then, the total cost of the greedy solution is at most,
\begin{align}
\sum_{xy \in M}\| \v_x - \v_y\|^2 &\leq O\left( \sum_{s=1}^{n/2} \left(1 +
\log(n/2s)\right)^2 \cdot r\right) \notag \\
% &\leq O\left(n + \sum_{s=1}^{n/2} \log^2(n/2s)\right)r \notag \\
 &\leq O\left(n + \int_0^{n/2}\log^2(n/2s)\ \mathrm{d}s\right)r \notag \\
 &\leq O(nr) \notag \ifFOCS \tag*{\IEEEQED} \else \qedhere \fi
\end{align}
\ifFOCS
\else
\end{proof}
For the case of theorem \ref{oldlower}, let $\v_x \in
\{-1/\sqrt{d},1/\sqrt{d}\}^d$ be the hypercube embedding and take
$L = n/d$ in lemma \ref{potential}.  The vertex isoperimetry of the hypercube implies $r =
O(1/\sqrt{d})$ in lemma \ref{greed}, yielding a strategy to ensure $\psi(t) \leq
O(nr/dL)\cdot t = O(1/\sqrt{d}) \cdot t$.
\fi

To prove theorem \ref{lower}, we choose $\v_x$ as per the following lemma, and take $L
= \Omega(n/d)$, $r = O(1/d)$, yielding a strategy to ensure $\psi(t) \leq O(nr/dL)
\cdot t=
O(1/d) \cdot t$.
\begin{lemma}\label{pointset} For every $d$, there exists a set of $n = O(\sqrt{d})^d$ points
$\v_1,\ldots,\v_n \in S^{d-1}$ such that $\sum_{i=1}^n \v_i = 0$, every $i \leq d$ has
$\sum_{x=1}^n \v_{x,i}^2 = \Omega(n/d)$, and for every $S \subseteq [n]$ with $|S| \leq n/2$,
$|\Ball[S;O(1/\sqrt{d})]| \geq (1+\Omega(1))|S|$.
\end{lemma}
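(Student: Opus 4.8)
The plan is to construct $\v_1,\ldots,\v_n$ explicitly as a suitable discretization of the sphere $S^{d-1}$, then verify each of the three required properties. For the construction, I would take a fine $\epsilon$-net of the sphere with $\epsilon = \Theta(1/\sqrt d)$; a standard volumetric argument gives a net of size $n = (O(1/\epsilon))^d = O(\sqrt d)^d$, and such a net can be taken to be \emph{well-spread} in the sense that it meets every spherical cap of angular radius $\Omega(1/\sqrt d)$ in a constant fraction of the points that a uniform sample would land in. To get $\sum_x \v_x = 0$ exactly, I would either symmetrize the net (include $\v$ and $-\v$ for each point, which at most doubles $n$ and leaves all other properties intact), or observe that a random rotation plus a small perturbation makes the centroid zero while moving each point by $o(1/\sqrt d)$; the symmetrization route is cleanest.

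The second property, $\sum_x \v_{x,i}^2 = \Omega(n/d)$ for every coordinate $i$, follows from the fact that a near-uniform set of points on $S^{d-1}$ has its empirical second-moment matrix close to $\frac1d I$: by symmetry of the sphere, $\E_{\v\sim \mathrm{Unif}(S^{d-1})}[\v_i^2] = 1/d$, and because the net is well-spread (its empirical measure is within a constant factor of uniform on every cap, hence its low-degree moments match uniform up to constants), we get $\sum_x \v_{x,i}^2 \geq \Omega(1)\cdot n/d$. With the symmetrization this is exact up to the constant.

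The heart of the lemma is the vertex-isoperimetric inequality: for every $S$ with $|S|\le n/2$, the $O(1/\sqrt d)$-ball-neighborhood (in squared Euclidean distance, so Euclidean radius $O(d^{-1/4})$, i.e.\ angular radius $\Theta(d^{-1/4})$ — I will need to track the exact exponent and may instead want squared radius $r = O(1/d)$ so the Euclidean radius is $O(1/\sqrt d)$ and the angular radius $\Theta(1/\sqrt d)$) expands $S$ by a constant factor. The natural tool is the Gaussian/spherical isoperimetric inequality: the measure of the $\rho$-neighborhood of a set of measure $\mu$ on $S^{d-1}$ is at least $\Phi(\Phi^{-1}(\mu) + \rho\sqrt d)$ in the Gaussian approximation, so taking $\rho = \Theta(1/\sqrt d)$ makes the argument $\rho\sqrt d = \Theta(1)$ and pushes $\mu$ up by a constant additive amount in the Gaussian scale — which, for $\mu \le 1/2$, is a multiplicative factor $1+\Omega(1)$. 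Transferring this from the continuous sphere to the net is where the well-spreadness of the net is used twice: once to lower-bound the measure of the continuous set corresponding to $S$ (inflate each net point to a cap of radius $\epsilon$), and once to upper-bound $|S|$ / lower-bound $|\Ball[S;\cdot]|$ by the measure of the corresponding inflated regions; the slack between the $\epsilon$-net scale and the neighborhood radius $\rho$ must be an absolute constant factor, which is why I fix $\rho$ to be a large-but-constant multiple of $\epsilon$.

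The main obstacle I anticipate is the bookkeeping in the discrete-to-continuous transfer of isoperimetry: one must be careful that inflating net points to caps does not cause overlaps that destroy the constant-factor measure estimates, and that the constant in the expansion factor survives after paying for the $\epsilon$-net granularity, the Gaussian-vs-sphere approximation, and the squared-distance-vs-Euclidean-distance-vs-angular-distance conversions. A clean way to sidestep some of this is to prove the isoperimetric statement for the continuous uniform measure on $S^{d-1}$ with the sharp spherical isoperimetric inequality (caps are extremal), then quote a standard net-transference lemma; I expect the writeup to be short once the parameters $\epsilon = \Theta(1/\sqrt d)$, $r = O(1/d)$ are pinned down so that all the $\Theta(1)$'s line up.
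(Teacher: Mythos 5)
Your architecture matches the paper's: discretize $S^{d-1}$ at scale $\Theta(1/\sqrt d)$, transfer spherical isoperimetry from the continuous measure to the point set, and symmetrize ($V\cup -V$) to force $\sum_x \v_x=0$. But there is a genuine gap at the step you yourself flag as the heart of the matter. You assert that an $\epsilon$-net ``can be taken to be well-spread'' in the sense that its empirical measure agrees with the uniform measure up to a \emph{constant factor} on every cap of radius $\Theta(1/\sqrt d)$. Standard net constructions (maximal $\epsilon$-separated sets, volumetric nets) do not have this property: the packing/covering counts for a cap of radius $R$ are sandwiched between $\mathrm{vol}(C_{R-\epsilon})/\mathrm{vol}(B_{\epsilon})$ and $\mathrm{vol}(C_{R+\epsilon/2})/\mathrm{vol}(B_{\epsilon/2})$, and the ratio $\mathrm{vol}(B_{\epsilon})/\mathrm{vol}(B_{\epsilon/2})=2^{\Theta(d)}$, so the discrepancy with the uniform measure is exponential in $d$, not constant --- fatal when $R$ and $\epsilon$ are both $\Theta(1/\sqrt d)$. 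Supplying the missing property is exactly the content of the Feige--Schechtman lemma the paper invokes: $S^{d-1}$ can be partitioned into $n=(O(1)/\gamma)^d$ cells of \emph{equal volume} and diameter at most $\gamma$. Picking one representative per cell makes the measure-to-cardinality dictionary exact (a set of measure $\alpha$ meets $\geq\alpha n$ cells, hence $\geq\alpha n$ representatives lie within $\gamma$ of it, and conversely), after which your isoperimetry transfer and second-moment bound go through as you describe. Without an equal-volume (or otherwise quantitatively uniform) partition, your ``net-transference lemma'' does not exist at the constant-factor precision you need.

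Two smaller points. First, the symmetrization does not automatically ``leave all other properties intact'': after passing to $V\cup -V$, a set $A$ with $|A|\leq|V'|/2=n$ may contain more than half of $V$, so the expansion claim proved for $V$ does not apply directly to $A\cap V$; the paper handles this with a short case analysis on $|A_+|$ versus $|A_-|$, and you would need the same. Second, your hesitation about the radius resolves in favor of your second reading: $\Ball[x;\ell]$ is defined by Euclidean distance $\|\v_x-\v_y\|\leq\ell$, so the neighborhood radius in the lemma is Euclidean $O(1/\sqrt d)$ (squared distance $r=O(1/d)$), matching the parameters you eventually settle on.
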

The proof of lemma \ref{pointset} is a straightforward application of a construction of
Feige and Schechtman\cite{FS}, which we
\ifFOCS defer to the full version.
\else include in appendix \ref{POINTSETPROOF} \fi
\section{\label{CONCLUSION}Final Remarks}
It will be interesting to see if efficient algorithms can be designed for the {\sc generalized sparsest cut}
problem\ifFOCS.  \else, where
we are given graphs $G$ and $H$ and aim to find a cut $(S,\overline{S})$ minimizing
$\frac{\sum_{x \in S, y \in \overline{S}} G_{xy}}{\sum_{x \in S, y \in \overline{S}}
H_{xy}}$ (when $H$ is the complete graph, the problem is essentially the regular {\sc
sparsest cut} problem, up to a factor of two).  \fi  The results of \cite{ALN} imply an
$O(\sqrt{\log n}\log\log n)$-approximation can be found by rounding a SDP similar to
\eqref{SDP}, but to the best of our knowledge no efficient algorithms have been designed
to approximately solve that SDP.

The boosting step in our proof of theorem \ref{ARVstrong} crucially depends on use of
the \emph{noise operator}.  Many hardness of approximation reductions for CSPs also
make use of that operator in their soundness analysis; what is the connection between
how it is used in each case?

Another question concerns the relation between the expander flow SDP and the original ``stronger''
SDP proposed by Goemans.  Constructing integrality
gaps for the latter is a notoriously hard problem.  Might it be any easier to construct
them for \eqref{SDP}?  If not, can one always ``round'' an embedding for the dual of
\eqref{SDP} to an embedding satisfying the triangle inequality constraints of
Goemans' program?
\section*{Acknowledgement}
We thank Umesh Vazirani and Satish Rao for helpful discussions, Ryan
O'Donnell for suggesting \cite{Borell, MORSS} to prove lemma \ref{iso}, and James Lee
for suggesting \cite{FS} to prove lemma \ref{pointset}.

\ifFOCS
\enlargethispage{-0.05in}
\bibliographystyle{IEEEtranS}
\else
\bibliographystyle{plain}
\fi
\bibliography{flows}

\begin{thebibliography}{10}

\bibitem{database}
Dimitris Achlioptas.
\newblock Database-friendly random projections: Johnson-lindenstrauss with
  binary coins.
\newblock {\em J. Comput. Syst. Sci.}, 66(4):671--687, 2003.

\bibitem{Alon}
Noga Alon and V.~D. Milman.
\newblock $\lambda_1$, isoperimetric inequalities for graphs, and
  superconcentrators.
\newblock {\em J. Comb. Theory, Ser. B}, 38(1):73--88, 1985.

\bibitem{AHK}
Sanjeev Arora, Elad Hazan, and Satyen Kale.
\newblock $\,{O}(\sqrt {\log n})$ approximation to {\sc sparsest cut} in
  $\tilde{O}(n^2)$ time.
\newblock In {\em FOCS '04: Proceedings of the 45th Annual IEEE Symposium on
  Foundations of Computer Science}, pages 238--247, Washington, DC, USA, 2004.
  IEEE Computer Society.

\bibitem{MW}
Sanjeev Arora, Elad Hazan, and Satyen Kale.
\newblock The multiplicative weights update method: a meta algorithm and
  applications.
\newblock Technical report, Princeton University, 2005.

\bibitem{AroraKale}
Sanjeev Arora and Satyen Kale.
\newblock A combinatorial, primal-dual approach to semidefinite programs.
\newblock In {\em STOC '07: Proceedings of the thirty-ninth annual ACM
  symposium on Theory of computing}, pages 227--236, New York, NY, USA, 2007.
  ACM.

\bibitem{ALN}
Sanjeev Arora, James~R. Lee, and Assaf Naor.
\newblock Euclidean distortion and the sparsest cut.
\newblock In {\em STOC '05: Proceedings of the thirty-seventh annual ACM
  symposium on Theory of computing}, pages 553--562, New York, NY, USA, 2005.
  ACM.

\bibitem{ARV}
Sanjeev Arora, Satish Rao, and Umesh Vazirani.
\newblock Expander flows, geometric embeddings and graph partitioning.
\newblock In {\em STOC '04: Proceedings of the thirty-sixth annual ACM
  symposium on Theory of computing}, pages 222--231, New York, NY, USA, 2004.
  ACM.

\bibitem{Sparse}
Andr\'{a}s~A. Bencz\'{u}r and David~R. Karger.
\newblock Approximating s-t minimum cuts in $\,{O}(n^2)$ time.
\newblock In {\em STOC '96: Proceedings of the twenty-eighth annual ACM
  symposium on Theory of computing}, pages 47--55, New York, NY, USA, 1996.
  ACM.

\bibitem{Borell}
Christer Borell.
\newblock Positivity improving operators and hypercontractivity.
\newblock {\em Mathematische Zeitschrift}, 180:225--234, 1982.

\bibitem{Boucheron}
St{\'e}phane Boucheron, G{\'a}bor Lugosi, and Olivier Bousquet.
\newblock Concentration inequalities.
\newblock In {\em Advanced Lectures on Machine Learning}, pages 208--240.
  Springer, 2003.

\bibitem{FS}
Uriel Feige and Gideon Schechtman.
\newblock On the optimality of the random hyperplane rounding technique for max
  cut.
\newblock {\em Random Struct. Algorithms}, 20(3):403--440, 2002.

\bibitem{Fleischer}
Lisa~K. Fleischer.
\newblock Approximating fractional multicommodity flow independent of the
  number of commodities.
\newblock {\em SIAM Journal on Discrete Mathematics}, 13:505--520, 2000.

\bibitem{Freund}
Yoav Freund and Robert~E. Schapire.
\newblock Adaptive game playing using multiplicative weights.
\newblock {\em Games and Economic Behavior}, 29(1-2):79--103, October 1999.

\bibitem{GoldbergRao}
Andrew~V. Goldberg and Satish Rao.
\newblock Beyond the flow decomposition barrier.
\newblock {\em J. ACM}, 45(5):783--797, 1998.

\bibitem{KRV}
Rohit Khandekar, Satish Rao, and Umesh Vazirani.
\newblock Graph partitioning using single commodity flows.
\newblock In {\em STOC '06: Proceedings of the thirty-eighth annual ACM
  symposium on Theory of computing}, pages 385--390, New York, NY, USA, 2006.
  ACM.

\bibitem{JamesLee}
James~R. Lee.
\newblock On distance scales, embeddings, and efficient relaxations of the cut
  cone.
\newblock In {\em SODA '05: Proceedings of the sixteenth annual ACM-SIAM
  symposium on Discrete algorithms}, pages 92--101, Philadelphia, PA, USA,
  2005. Society for Industrial and Applied Mathematics.

\bibitem{lectures}
Jiri Matousek.
\newblock {\em Lectures on Discrete Geometry}.
\newblock Springer-Verlag New York, Inc., Secaucus, NJ, USA, 2002.

\bibitem{MORSS}
Elchanan Mossel, Oded Regev, Jeffrey~E. Steif, and Benny Sudakov.
\newblock Non-interactive correlation distillation, inhomogeneous markov
  chains, and the reverse bonami-beckner inequality.
\newblock {\em Israel Journal of Mathematics}, 154, 2006.

\bibitem{Lorenzo}
Lorenzo Orecchia.
\newblock personal communication, 2009.

\bibitem{OSVV}
Lorenzo Orecchia, Leonard~J. Schulman, Umesh~V. Vazirani, and Nisheeth~K.
  Vishnoi.
\newblock On partitioning graphs via single commodity flows.
\newblock In {\em STOC '08: Proceedings of the 40th annual ACM symposium on
  Theory of computing}, pages 461--470, New York, NY, USA, 2008. ACM.

\bibitem{Plotkin}
Serge~A. Plotkin, David~B. Shmoys, and Eva Tardos.
\newblock Fast approximation algorithms for fractional packing and covering
  problems.
\newblock {\em Mathematics of Operations Research}, 20:257--301, 1995.

\bibitem{SleatorTarjan}
Daniel~D. Sleator and Robert~Endre Tarjan.
\newblock A data structure for dynamic trees.
\newblock {\em J. Comput. Syst. Sci.}, 26(3):362--391, 1983.

\end{thebibliography}
\ifFOCS
\else
\appendix
\section{\label{ISOPROOF}Proof of Lemma \ref{iso}}
For $f : \R^d \to \R_{\geq 0}$, let $\|f\|_p = \E[f^p]^{1/p}$, where the expectation is
over the multivariate standard normal distribution.  For $x \in \R^d$, we write $y
\sim_\rho x$ for a $\rho$-correlated copy of $u$.  The \emph{Ornstein-Uhlenback operator}
is defined by,
\[ T_\rho f(x) = \E_{y \sim_\rho x}[f(y)] \]
\begin{theorem}[Borell\cite{Borell}]\label{hyper} Let $f : \R^d \to \R_{\geq 0}$
and $-\infty < q \leq p \leq
1$.  If $0 \leq \rho^2 \leq (1-p)/(1-q)$, then
\[ \|T_\rho f\|_q \geq \|f\|_p \qquad \textrm{for }0 \leq \rho^2 \leq (1-p)/(1-q)\]
\end{theorem}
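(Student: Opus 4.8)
The plan is to give the standard semigroup proof of reverse hypercontractivity, driven by Gross's Gaussian logarithmic Sobolev inequality; since that inequality and the Ornstein--Uhlenbeck semigroup make sense in $\R^d$ with dimension-free constants, no reduction to one dimension is needed. First I would reduce, by a routine truncation and regularization, to the case where $f$ is smooth, bounded, and bounded below by some $\eta>0$: replacing $f$ by $f+\eta$ only increases $\|f\|_p$ when $p\le 1$ (checking the cases $0<p\le1$, $p=0$, $p<0$ separately), and one recovers the general statement by letting $\eta\downarrow 0$ via monotone convergence at the end. Write $T_\rho$ for the Ornstein--Uhlenbeck semigroup, with $\rho=e^{-s}$, so that $\tfrac{d}{ds}T_{e^{-s}}f=-\mathcal{L}\,T_{e^{-s}}f$ where the generator $\mathcal{L}$ satisfies $\E[(\mathcal{L}u)v]=\E[\nabla u\cdot\nabla v]$ (expectations over the standard Gaussian). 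Fix the exponent path $q(s)$ determined by $q(0)=p$ and $1-q(s)=(1-p)e^{2s}$ (equivalently $q'(s)=2(q(s)-1)$), and set $\Phi(s)=\log\|T_{e^{-s}}f\|_{q(s)}$; note $\Phi(0)=\log\|f\|_p$ and that $q(s)<1$ for all $s\ge 0$, decreasing to $-\infty$.

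The heart of the argument is to show $\Phi'(s)\ge 0$. Differentiating, writing $u=T_{e^{-s}}f$ and $q=q(s)$, and integrating $\mathcal{L}$ by parts, one gets
\[
\Phi'(s)\;=\;\frac{1}{\E[u^{q}]}\left(\frac{q'(s)}{q^{2}}\,\mathrm{Ent}_{q}(u)-(q-1)\,\E\bigl[u^{\,q-2}|\nabla u|^{2}\bigr]\right),
\]
where $\mathrm{Ent}_{q}(u)=\E[u^{q}\log u^{q}]-\E[u^{q}]\log\E[u^{q}]\ge 0$ (Jensen). Applying Gross's inequality to $u^{q/2}$ gives $\mathrm{Ent}_{q}(u)\le \tfrac{q^{2}}{2}\,\E[u^{q-2}|\nabla u|^{2}]$, so the bracket equals $(q-1)\bigl(\tfrac{2}{q^{2}}\mathrm{Ent}_q(u)-\E[u^{q-2}|\nabla u|^{2}]\bigr)$, a product of a nonpositive factor ($q<1$) and a nonpositive factor, hence $\ge 0$. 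Therefore $\|T_{e^{-s}}f\|_{q(s)}\ge\|f\|_p$ for every $s\ge 0$. Finally, given $\rho^{2}\le(1-p)/(1-q)$, take $s=\ln(1/\rho)$; then $1-q(s)=(1-p)/\rho^{2}\ge 1-q$, i.e. $q(s)\le q\le 1$, so by the power-mean inequality $\|T_{\rho}f\|_{q}\ge\|T_{\rho}f\|_{q(s)}\ge\|f\|_{p}$. Undoing the regularization completes the proof.

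The main obstacle is the bookkeeping forced by the exponents being allowed to vanish or be negative. When $q(s)=0$ the ``$L^{q}$ norm'' is the geometric mean $\exp\E[\log u]$ and the displayed formula has a removable singularity, so one passes to the limit there (the entropy becomes ordinary entropy and Gross's inequality is used in its $q\to0$ form); when $q(s)<0$ one must verify that $\E[u^{q}]$ is finite for the regularized $u$ and keep straight the reversed directions of the power-mean and monotone-convergence steps (and the $p=1$ endpoint $\|f\|_1=\E f$ is handled separately or by a limit). One also has to justify differentiating under the expectation and the integration by parts against $\mathcal{L}$, which is legitimate precisely because, after regularization, $T_{e^{-s}}f$ is smooth and bounded away from $0$ and $\infty$. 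A self-contained alternative that avoids Gross's inequality is to prove the two-point Boolean case by an elementary calculus argument, tensorize over coordinates, and transfer to the Gaussian setting by the central limit theorem, at the cost of a comparable amount of bookkeeping (uniform integrability for the moment convergence).
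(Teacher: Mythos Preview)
The paper does not prove Theorem~\ref{hyper} at all; it is stated with a citation to Borell and used as a black box in the derivation of Lemma~\ref{iso}. So there is nothing in the paper to compare your argument against.

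That said, your proposal is the standard semigroup proof of reverse hypercontractivity (essentially the argument in Borell's paper and in \cite{MORSS}), and it is correct. The derivative computation, the use of Gross's log-Sobolev inequality applied to $u^{q/2}$, and the final power-mean step are all as they should be; the sign analysis is right because the prefactor $1/\E[u^{q}]$ is positive for all $q\neq 0$ once $u$ is bounded away from zero. The caveats you flag (the removable singularity at $q=0$, the $p=1$ endpoint, and the need to regularize so that differentiation under the integral and integration by parts against $\mathcal{L}$ are justified) are exactly the places where care is needed, and the fixes you indicate are the standard ones. One small wording quibble: it is not that ``replacing $f$ by $f+\eta$ only increases $\|f\|_p$'' that matters, but rather that both sides of the inequality for $f+\eta$ converge to the corresponding sides for $f$ as $\eta\downarrow 0$ (by monotone convergence when $p,q>0$, and with the convention $\|f\|_p=0$ when $p<0$ and $f$ vanishes on a set of positive measure, in which case the inequality is trivial).
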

By a change of variables, lemma \ref{iso} is equivalent to,
\begin{align}
\Pr_{\u}[\Pr_{\hat\u} [\hat\u \in A] < \tau] < \frac{\tau^{1-\rho}}{\delta} \notag
\end{align}
Let $f$
indicate $A$, and set $p = 1-\rho, q=1-1/\rho$.  Note $q < 0 < p \leq 1$ satisfy theorem
\ref{hyper}, so
\begin{equation}
\|T_\rho f\|_q \geq \|f\|_p = \delta^{1/p} \notag
\end{equation}
Then, $\Pr_{\hat\u \sim_\rho \u}[\hat\u \in A] = T_\rho f(\u)$, and we have,
\begin{align}
\Pr[T_\rho f < \tau] &= \Pr[(T_\rho f)^q > \tau^q] \notag\\
 &< \|(T_\rho f)\|_q^q\tau^{-q}\notag \\
 &\leq \delta^{q/p} \tau^{-q} \notag \\
 &= \left(\frac{\tau^{1-\rho}}{\delta}\right)^{1/\rho} \notag
\end{align}
For $\tau^{1-\rho}/\delta \leq 1$, raising the last line to $\rho$ can't decrease its value.
In the other case, the result is trivial.
\section{\label{plusminus} Proof of Lemma \ref{binary}}
Lemma \ref{goodu} only uses the fact that for a vector $\v$ and standard normal $\u$,
$(\u\cdot \v)^2 \geq \Omega(\|\v\|^2)$ with probability $\Omega(1)$.  That property
still holds.
\begin{lemma} Let $\U \in \R^{d \times k}$ be a uniform random $\pm 1$ matrix, and let
$\v \in \R^d$ be a vector.  Then,
\[ \Pr\left[(\v \cdot \U\one)^2 \geq \|\v\|^2/4\right] \geq 1/5 \]
\end{lemma}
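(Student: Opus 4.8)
The plan is to recognize $\v\cdot\U\one$ as a linear form in independent $\pm 1$ variables, compute its second and fourth moments exactly, and then run a Paley--Zygmund-style anti-concentration argument — being careful to keep the lower-order term that a textbook application would discard, since that is exactly what is needed to clear the constant $1/5$. First I would reduce to the case $\|\v\|=1$: both $(\v\cdot\U\one)^2$ and $\|\v\|^2$ are homogeneous of degree $2$ in $\v$, so the statement is scale-invariant. Writing $X=\v\cdot\U\one=k^{-1/2}\sum_{i=1}^{d}\sum_{j=1}^{k}\v_i\,\U_{ij}$, I have expressed $X$ as a linear form $\sum_{(i,j)}a_{ij}\U_{ij}$ in the i.i.d.\ uniform $\pm1$ variables $\U_{ij}$, with coefficients $a_{ij}=\v_i/\sqrt{k}$ satisfying $\sum_{i,j}a_{ij}^2=\|\v\|^2=1$. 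The goal becomes $\Pr[X^2\ge 1/4]\ge 1/5$.

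Next I would record the two moment computations. Since the $\U_{ij}$ are independent and mean zero, $\E[X^2]=\sum_{i,j}a_{ij}^2=1$. For the fourth moment, in expanding $X^4$ only the monomials whose indices pair up survive, which gives the standard identity $\E[X^4]=3\bigl(\sum_{i,j}a_{ij}^2\bigr)^2-2\sum_{i,j}a_{ij}^4\le 3$. Then, setting $p=\Pr[X^2\ge 1/4]$, I would split the second moment and bound each piece:
\[ 1=\E[X^2]=\E\bigl[X^2\,\mathbf{1}\{X^2<\tfrac14\}\bigr]+\E\bigl[X^2\,\mathbf{1}\{X^2\ge\tfrac14\}\bigr]\le \tfrac14(1-p)+\sqrt{\E[X^4]}\,\sqrt{p}\le \tfrac14(1-p)+\sqrt{3}\,\sqrt{p}, \]
where the first summand is bounded using $X^2<\tfrac14$ on that event and the second by Cauchy--Schwarz. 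Writing $u=\sqrt{p}\in[0,1]$, this rearranges to $u^2-4\sqrt{3}\,u+3\le 0$, whose smaller root is $2\sqrt{3}-3$; since $u\le 1<2\sqrt3+3$ this forces $u\ge 2\sqrt3-3$, hence $p\ge(2\sqrt3-3)^2=21-12\sqrt3>1/5$.

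The only genuine obstacle is that last constant. The off-the-shelf Paley--Zygmund bound drops the $\tfrac14 p$ term and yields only $p\ge(1-\tfrac14)^2/\E[X^4]=3/16$, which falls just short of $1/5$; so one must retain that term and solve the quadratic exactly. Everything else needed for Lemma~\ref{binary} transfers from the Gaussian setting essentially verbatim — Lemmas~\ref{goodu}, \ref{toolong}, \ref{nearby} and the isoperimetric Lemma~\ref{iso} all hold for $\rho$-correlated $\pm1$ variables with the same constants — so this single anti-concentration estimate is the only new ingredient.
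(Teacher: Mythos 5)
Your proposal is correct and follows essentially the same route as the paper: normalize to $\|\v\|=1$, compute $\E[X^2]=1$ and $\E[X^4]\le 3$ exactly as the paper does, and then convert these two moments into an anti-concentration bound. The only (cosmetic) difference is the final step -- the paper applies Markov's inequality to $(1-tZ^2)^2$ with $t=\lambda=1/4$ to get $\Pr[Z^2<1/4]<\tfrac{11/16}{(15/16)^2}=176/225<4/5$, while you split $\E[X^2]$ over the two events and use Cauchy--Schwarz to get $p\ge 21-12\sqrt{3}$; both are second-moment arguments yielding nearly identical constants ($\approx 0.218$ vs.\ $\approx 0.215$), each just clearing $1/5$.
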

\begin{proof}  It suffices to consider a unit vector $\v$.  Let $Z = \v \cdot
\U\one$.  Then,
\[ \E[Z^2] = \E\left[\left(\sum_{i\leq d, j\leq k} \v_i
\frac{\U_{ij}}{\sqrt{k}}\right)^2\right] = \sum_{i_1,i_2,j_1,j_2} \v_{i_1} \v_{i_2}
\frac{\E\left[\U_{i_1j_1}\U_{i_2j_2}\right]}{k} = \sum_{i,j} \v_i^2/k = \|\v\|^2 = 1
\]
\[ \E[Z^4] = 
 \sum_{i_1,\ldots,i_4,j_1,\ldots,j_4}
\v_{i_1}\cdots \v_{i_4} 
\frac{\E\left[\U_{i_1j_1}\cdots\U_{i_4j_4}\right]}{k^2} \leq 3\sum_{i_1,j_1,i_2,j_2}
\v_{i_1}^2 \v_{i_2}^2/k^2 = 3\|\v\|^4 = 3
\]

Then, for any $t \leq 1/\lambda$, we have,
\[ \Pr\left[Z^2 < \lambda\right] \leq \Pr\left[\left(1-tZ^2\right)^2 >
\left(1-t\lambda\right)^2\right] <
\frac{\E\left[\left(1-tZ^2\right)^2\right]}{\left(1-t\lambda\right)^2} =
\frac{1-2t\E[Z^2]+t^2\E[Z^4]}{\left(1-t\lambda\right)^2}
\]
Taking $t = \lambda = 1/4$ yields,
\[ \Pr\left[Z^2 < 1/4\right] < \frac{1-1/2+3/16}{(15/16)^2}  < 1/5 \]
\end{proof}

The remaining lemmas require a Gaussian-like bound on stretch; for that, we'll use the
following theorem.
\begin{theorem}[Bernstein's Inequality]\label{bernstein} Let $X_1,\ldots,X_n$ be independent random variables with
$\E[X_i] = 0$ and $X_i \leq 1$.  Let $\sigma^2 = \sum_{i=1}^n \E[X_i^2]$.  Then, for any
$t > 0$,
\[ \Pr\left[ \sum_{i=1}^n X_i > t\sigma\right] \leq
\exp\left(\frac{-t^2}{2+t/3\sigma}\right) \]
\end{theorem}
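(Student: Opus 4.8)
The plan is to prove Bernstein's inequality by the classical exponential‑moment (Chernoff) method: control the moment generating function of a single centered variable that is bounded above by $1$, multiply over $i$ using independence, and then choose the free parameter near‑optimally. First I would reduce the tail bound to a moment‑generating‑function estimate: for any $\lambda>0$, Markov's inequality applied to $e^{\lambda\sum_i X_i}$ together with independence gives
\[ \Pr\left[\sum_{i=1}^n X_i > t\sigma\right]\ \le\ e^{-\lambda t\sigma}\,\E\left[e^{\lambda\sum_i X_i}\right]\ =\ e^{-\lambda t\sigma}\prod_{i=1}^n\E\left[e^{\lambda X_i}\right]. \]

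The heart of the argument is a per‑variable bound: for $0<\lambda<3$ and any $X$ with $\E[X]=0$ and $X\le 1$,
\[ \E\left[e^{\lambda X}\right]\ \le\ \exp\left(\frac{\lambda^2\,\E[X^2]/2}{1-\lambda/3}\right). \]
To prove this I would expand $e^{\lambda x}=1+\lambda x+\sum_{k\ge 2}\frac{(\lambda x)^k}{k!}$ and use that $|x|\le 1$ forces $x^k\le x^2$ for every $k\ge 2$, so that $e^{\lambda x}-1-\lambda x\le x^2\sum_{k\ge 2}\frac{\lambda^k}{k!}$. I would then bound the purely numerical series by $\sum_{k\ge 2}\frac{\lambda^k}{k!}=\frac{\lambda^2}{2}\sum_{j\ge 0}\frac{2\lambda^j}{(j+2)!}\le\frac{\lambda^2}{2}\sum_{j\ge 0}\left(\frac{\lambda}{3}\right)^j=\frac{\lambda^2/2}{1-\lambda/3}$, using the elementary fact $\frac{2}{(j+2)!}\le 3^{-j}$ for all $j\ge 0$. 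Taking expectations, invoking $\E[X]=0$ to kill the linear term, and then applying $1+y\le e^y$ yields the displayed moment bound; multiplying over $i$ gives $\prod_i\E[e^{\lambda X_i}]\le\exp\!\big(\frac{\sigma^2\lambda^2/2}{1-\lambda/3}\big)$.

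Combining the two steps, for every $0<\lambda<3$,
\[ \Pr\left[\sum_i X_i > t\sigma\right]\ \le\ \exp\left(-\lambda t\sigma+\frac{\sigma^2\lambda^2/2}{1-\lambda/3}\right). \]
Finally I would optimize the right‑hand side over $\lambda$. Rather than solving for the exact minimizer, I would substitute $\lambda=\frac{t/\sigma}{1+t/(3\sigma)}$, which lies in $(0,3)$; since then $1-\lambda/3=\left(1+t/(3\sigma)\right)^{-1}$, a one‑line computation collapses the exponent to $-\frac{t^2}{2\left(1+t/(3\sigma)\right)}$, which is the claimed Bernstein tail bound. The argument is entirely routine; the only places needing (trivial) care are the two elementary inequalities $x^k\le x^2$ for $|x|\le 1,\ k\ge 2$ and $\frac{2}{(j+2)!}\le 3^{-j}$, together with the bookkeeping in the final substitution, so I do not anticipate any real obstacle — the ``hard part'' is simply choosing a clean near‑optimal $\lambda$ rather than grinding through the exact optimizer.
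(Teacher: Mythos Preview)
The paper does not prove this theorem; it is quoted as a standard result with a reference to \cite{Boucheron}. Your Chernoff--MGF approach is exactly the classical proof, so in spirit you are doing the right thing. Two concrete issues, though.

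\textbf{The one-sided hypothesis.} The theorem only assumes $X_i\le 1$, not $|X_i|\le 1$; your bound $x^k\le x^2$ for $k\ge 2$ fails for $x<-1$ (e.g.\ $x=-2$, $k=4$). The standard fix is to observe that $g(x)=(e^{\lambda x}-1-\lambda x)/x^2$ (with $g(0)=\lambda^2/2$) is nondecreasing on all of $\R$, so $x\le 1$ alone gives $e^{\lambda x}-1-\lambda x\le x^2(e^\lambda-1-\lambda)$, and then your series bound $e^\lambda-1-\lambda\le\frac{\lambda^2/2}{1-\lambda/3}$ applies verbatim. This is a genuine gap in what you wrote, but the repair is one line.

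\textbf{The constant in the exponent.} Your substitution $\lambda=\frac{t/\sigma}{1+t/(3\sigma)}$ is correct and, as you say, yields exponent $-\frac{t^2}{2(1+t/(3\sigma))}=-\frac{t^2}{2+2t/(3\sigma)}$. That is \emph{not} equal to the paper's stated $-\frac{t^2}{2+t/(3\sigma)}$; there is a missing factor of $2$ on the linear term. What you derived is the standard Bernstein bound (exactly the form in \cite{Boucheron} after rescaling the deviation by $\sigma$), so the discrepancy is a typo in the paper's statement rather than an error on your part --- but you should not claim the two expressions coincide.
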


The next lemma says that if $k$ is large enough, we can obtain Gaussian-like bounds on
stretch.  Note that when $\rho=0$ much better bounds are possible, in that even $k=1$
works (see \cite{database}).
\begin{lemma} \label{stretch}Let $\U \in \R^{d \times k}$ be an arbitrary $\pm 1$ matrix, and let
$\hat\U \sim_\rho \U$ be a $\rho$-correlated copy of $\U$.  Then, for any vector $\v$, and
any $0 < t \leq \sqrt{k(1-\rho^2)}$,
\[ \Pr\left[\v\cdot \hat\U\one > \rho(\v\cdot \U\one) + t\sqrt{1-\rho^2}\|\v\| \right] \leq e^{-t^2/3} \]
\[ \Pr\left[\v\cdot \hat\U\one < \rho(\v\cdot \U\one) - t\sqrt{1-\rho^2}\|\v\| \right] \leq e^{-t^2/3} \]
\end{lemma}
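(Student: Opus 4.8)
The plan is to fix the matrix $\U$ (the probability claimed in the statement is over the randomness of $\hat\U$ alone) and write $Z := \v\cdot\hat\U\one = \sum_{i\leq d,\,j\leq k}\v_i\hat\U_{ij}/\sqrt{k}$ as a sum of independent random variables indexed by the entries $(i,j)$, then apply Bernstein's inequality (theorem \ref{bernstein}) to the centered sum. We may assume $\v\neq 0$, since otherwise both inequalities are trivial.

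First I would record the first two moments of each entry. Since $\hat\U_{ij}$ equals $\U_{ij}$ with probability $\rho$ and is an independent uniform $\pm1$ otherwise, $\E[\hat\U_{ij}]=\rho\U_{ij}$, so $\E[Z]=\rho(\v\cdot\U\one)$, which is exactly the quantity subtracted off in the statement. Set $X_{ij}:=\v_i(\hat\U_{ij}-\rho\U_{ij})/\sqrt{k}$, so the $X_{ij}$ are independent and mean zero. Using $\U_{ij}^2=1$, a one-line computation gives $\E[(\hat\U_{ij}-\rho\U_{ij})^2]=1-\rho^2$, hence $\sum_{i,j}\Var(X_{ij})=(1-\rho^2)\sum_i\v_i^2=(1-\rho^2)\|\v\|^2$. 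Moreover $|\hat\U_{ij}-\rho\U_{ij}|\leq 1+\rho\leq 2$ and $|\v_i|\leq\|\v\|$, so $|X_{ij}|\leq 2\|\v\|/\sqrt{k}$ uniformly.

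To match the hypotheses of theorem \ref{bernstein} I would rescale: put $Y_{ij}:=\sqrt{k}\,X_{ij}/(2\|\v\|)$, so that $|Y_{ij}|\leq 1$ and $\tau^2:=\sum_{i,j}\Var(Y_{ij})=k(1-\rho^2)/4$, i.e.\ $\tau=\tfrac12\sqrt{k(1-\rho^2)}$. The event $\{Z>\rho(\v\cdot\U\one)+t\sqrt{1-\rho^2}\,\|\v\|\}$ is exactly $\{\sum_{i,j}Y_{ij}>t\tau\}$, so theorem \ref{bernstein} bounds its probability by $\exp\!\big(-t^2/(2+t/3\tau)\big)$. Finally, the hypothesis $t\leq\sqrt{k(1-\rho^2)}=2\tau$ gives $t/3\tau\leq 2/3$, hence $2+t/3\tau\leq 8/3<3$, so the bound is at most $e^{-t^2/3}$, as desired. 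The lower-tail inequality follows by applying the upper-tail inequality with $\v$ replaced by $-\v$.

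There is essentially no obstacle here beyond the bookkeeping in the rescaling step: one must line up the variance proxy $\tau$ with the deviation level $t\sqrt{1-\rho^2}\,\|\v\|$ so that Bernstein's quadratic/linear denominator collapses to the clean constant $3$ over the stated range of $t$. Since the argument holds for every fixed $\U$, the stated bound follows immediately.
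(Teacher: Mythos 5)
Your proof is correct and is essentially the paper's own argument: both center the sum entrywise via $\hat\U_{ij}-\rho\U_{ij}$, compute the variance $(1-\rho^2)\|\v\|^2$, apply Bernstein's inequality, and use $t\leq\sqrt{k(1-\rho^2)}$ to collapse the denominator $2+t/3\sigma$ to at most $3$; the only cosmetic difference is that you rescale the summands rather than normalizing $\v$ to a unit vector at the outset. The lower-tail reduction (you negate $\v$, the paper negates $Z_{ij}$) is likewise equivalent.
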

\begin{proof}
It suffices to consider a unit vector $\v$.
For each $i \leq d$, $j \leq k$, let $Z_{ij} = \v_i(\hat\u_{ij} - \rho \U_{ij})/2$, so that
we have $\E[Z_{ij}] = 0$, $|Z_{ij}| \leq 1$, and $\E[Z_{ij}^2] = (1-\rho^2)\v_i^2/4$.
Note that,
\[\v\cdot \hat\U\one = \rho(\v\cdot \hat\U\one) + \frac{2}{\sqrt{k}}
\sum_{i\leq d, j\leq k} Z_{ij} \]

Applying theorem \ref{bernstein} with $\sigma^2 = k(1-\rho^2)\|\v\|^2/4 = k(1-\rho^2)/4$, we
have
\[ \Pr\left[ \sum_{i,j} Z_{ij} > t\sigma \right] \leq
\exp\left(\frac{-t^2}{2+t/3\sigma}\right) \leq e^{-t^2/3} \]
proving the first part.  The second part follows by applying the same argument to
$-Z_{ij}$.
\end{proof}

For lemmas \ref{toolong} and \ref{nearby}, we use $\rho = 0$ and $t =
O(\sqrt{\log(1/\delta)})$, so $k = O(\log(1/\delta))$ suffices.
For lemma \ref{boost}, we use $\rho = 1-1/K$ and $t = O(\sqrt{K\log(1/\delta)})$, so $k =
O(K^2\log(1/\delta))$ suffices.  Also, lemma \ref{iso} holds for the uniform measure on
the hypercube, as Borell's theorem also holds for $f : \{-1,+1\}^n \to \R_{\geq 0}$.

\section{\label{POINTSETPROOF}Proof of Lemma \ref{pointset}}
%\begin{claim} Let $A = \{x \in S^{d-1} : x_1 \geq \sqrt{2/d}\}$.  Then, $A$ has measure at
%least $1/12$.  On the other hand, for any $A \subseteq S^{d-1}$, if $\mu(A) \leq 1/2$,
%then $\mu(B(A,\sqrt{5/d}))/\mu(A) \geq 3/2$.
%\end{claim}
\begin{lemma}[Feige, Schechtman \cite{FS}] \label{cells} For each $0 < \gamma < \pi/2$, the sphere $S^{d-1}$ can be partitioned into
$n = (O(1)/\gamma)^d$ equal volume cells, each of diameter at most $\gamma$.
\end{lemma}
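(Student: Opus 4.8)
The plan is to first produce a partition of $S^{d-1}$ into cells of diameter at most $\gamma/2$ with roughly comparable volumes, and then subdivide those cells into $n$ pieces of \emph{exactly} equal volume, paying only a factor of $2$ in diameter on the pieces that must cross cell boundaries.

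For the first step I would fix a maximal $(\gamma/4)$-separated set $P=\{p_1,\dots,p_N\}\subseteq S^{d-1}$ (distances measured along the sphere) and let $B_1,\dots,B_N$ be the corresponding Voronoi cells. Maximality forces $P$ to be a $(\gamma/4)$-net, so each point lies within $\gamma/4$ of its own center and hence $\mathrm{diam}(B_i)\le\gamma/2$. Since $P$ is $(\gamma/4)$-separated, $B_i$ contains the cap of angular radius $\gamma/8$ about $p_i$, so $\mathrm{Vol}(B_i)\ge c_d\gamma^{d-1}$ for an explicit $c_d$, and a volume–packing estimate bounds $N$. Moreover $B_i$ and $B_j$ can be adjacent only if $d(p_i,p_j)\le\gamma/2$, and by separation at most $e^{O(d)}$ centers lie within that distance of a fixed $p_i$; hence the adjacency graph $H$ of the $B_i$ has maximum degree $\Delta\le e^{O(d)}$, and $H$ is connected because the $B_i$ tile the connected sphere. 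The case $d=1$ is degenerate and I would dispatch it directly: $S^0$ is two antipodal points, and since $\gamma<\pi/2$ we take the two singletons, giving $n=2\le O(1)/\gamma$.

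For the second step, let $V=\mathrm{Vol}(S^{d-1})$ and choose the integer $n$ so that $w:=V/n\le\mathrm{Vol}(B_i)/\Delta$ for every $i$; a routine count of $V$, $\min_i\mathrm{Vol}(B_i)$ and $\Delta$ shows the admissible window $[\,\Delta V/\min_i\mathrm{Vol}(B_i),\ (O(1)/\gamma)^d\,]$ is nonempty with upper endpoint of the required form, so such an $n$ exists. Fix a spanning tree $\mathcal{T}$ of $H$, root it arbitrarily, and carve out the $n$ cells in post-order. When node $u$ is reached, each of its children $v$ (there are at most $\Delta$ of them) has already passed up a residual sub-region $F_v\subseteq B_v$ of volume less than $w$; for each such $v$ I form a final cell by uniting $F_v$ with a fresh sub-region of $B_u$ of volume exactly $w-\mathrm{Vol}(F_v)$. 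This is possible because $B_u$ has at least $\Delta w$ volume to spend and the spherical measure is atomless (so a sub-region of any prescribed volume can be cut from $B_u$, e.g.\ by sweeping a halfspace), and the resulting cell lies in $B_u\cup B_v$ with $B_u,B_v$ adjacent, hence has diameter at most $\gamma/2+\gamma/2=\gamma$. After all children are absorbed, I carve clean volume-$w$ sub-regions out of what remains of $B_u$ (each of diameter $\le\gamma/2$) until less than $w$ is left; that remainder is $F_u$, passed to $u$'s parent. At the root there is no parent, but since every output cell has volume exactly $w$ and the cells partition $S^{d-1}$, the number of output cells is $(V-|F_{\mathrm{root}}|)/w$ with $|F_{\mathrm{root}}|\in[0,w)$; as $w=V/n$ this forces $|F_{\mathrm{root}}|=0$ and exactly $n$ equal cells.

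The one genuinely delicate point is the second step: subdividing each $B_i$ on its own fails because $\mathrm{Vol}(B_i)$ need not be a multiple of $w$, and gluing leftovers from non-adjacent cells would wreck the diameter (a leftover near one pole joined to one near the other). The post-order carving is precisely the device that confines every glued piece to a single edge of $\mathcal{T}$, and the inequality $w\le\mathrm{Vol}(B_i)/\Delta$ — which is also what pushes $n$ up to $(O(1)/\gamma)^d$ rather than $(O(1)/\gamma)^{d-1}$ — is what guarantees each parent cell is large enough to absorb all of its children's residues in one sweep. The remaining ingredients (diameter and volume bounds for Voronoi cells of a net, the degree bound on $H$, and the counting that pins down $n$) are standard sphere estimates I would not expect to cause trouble.
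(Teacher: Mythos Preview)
The paper does not actually prove this lemma; it is quoted verbatim as a result of Feige and Schechtman and invoked as a black box in the proof of Lemma~\ref{pointset}. So there is no in-paper argument to compare your proposal against.

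Your proof is correct. The Voronoi cells of a maximal $(\gamma/4)$-separated set indeed have diameter at most $\gamma/2$ and each contains the cap of radius $\gamma/8$ about its center; the resulting packing bounds on $N$, on $\min_i\mathrm{Vol}(B_i)$, and on the adjacency degree $\Delta$ are standard and go through as you describe. The one place worth a word of care is the claim that $\Delta V/\min_i\mathrm{Vol}(B_i)\le (C/\gamma)^d$: the cap-volume ratios carry $d$-dependent factors such as $(\pi/2)^{d}$ and $\sqrt d$, but these are all $e^{O(d)}$ and get absorbed into $(C/\gamma)^d$ because $\gamma<\pi/2$ keeps $C/\gamma$ bounded below. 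The post-order tree pass then confines every output cell to $B_u\cup B_v$ for an adjacent pair $(u,v)$, hence to a set of diameter at most $\gamma$; the divisibility argument forcing the root residue to vanish is clean.

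For comparison, the Feige--Schechtman construction is more explicit and rather different in flavor: they build the cells directly by an iterated slicing of the sphere (essentially working in spherical coordinates and recursing on lower-dimensional spheres), producing connected ``spherical boxes''. Your Voronoi-plus-tree-balancing approach is more conceptual and sidesteps the induction on dimension; on the other hand your cells need not be connected, only of small diameter. For the application in this paper only equal volume and bounded diameter are used, so either route is adequate.
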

Apply lemma \ref{cells} with $\gamma = 1/\sqrt{d}$, yielding cells $C_1,\ldots,C_n$ with
$n = \exp(O(d\log d))$.  Let $V$ be a set of $n$ arbitrary points, each from a distinct
cell; for convenience, let us choose $V$ so that $V \cap (-V) = \emptyset$.
\begin{claim} \label{meas} If $A \subseteq S^{d-1}$ has $\mu(A) \geq \alpha$, then $|\Ball(A;\gamma) \cap
V| \geq \alpha n$; if $A \subseteq V$ has $|A| \geq \alpha n$, then $\mu(\Ball(A;\gamma))
\geq \alpha$.
\end{claim}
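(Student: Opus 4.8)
\textbf{Proof proposal for Claim \ref{meas}.}
The plan is to exploit the fact that the cells $C_1,\ldots,C_n$ all have equal volume $1/n$ (in the normalized measure $\mu$) and diameter at most $\gamma$, so that the single representative point $\v_x \in C_x \cap V$ faithfully stands in for the whole cell $C_x$ up to a spatial error of $\gamma$.

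First I would prove the first implication. Suppose $A \subseteq S^{d-1}$ has $\mu(A) \geq \alpha$. Let $I = \{x : C_x \cap A \neq \emptyset\}$ be the indices of cells that meet $A$. Since the cells partition the sphere and have equal measure $1/n$, the set $\bigcup_{x \in I} C_x$ has measure $|I|/n$ and contains $A$, so $|I|/n \geq \mu(A) \geq \alpha$, i.e. $|I| \geq \alpha n$. Now for each $x \in I$ pick any point of $C_x \cap A$; since $\mathrm{diam}(C_x) \leq \gamma$, the representative $\v_x$ is within distance $\gamma$ of that point, hence $\v_x \in \Ball(A;\gamma)$. As the $\v_x$ for distinct $x$ are distinct points of $V$, this exhibits at least $|I| \geq \alpha n$ points of $V$ in $\Ball(A;\gamma)$, which is the claim.

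Next I would prove the second implication. Suppose $A \subseteq V$ with $|A| \geq \alpha n$. Each point of $A$ is the representative $\v_x$ of a unique cell $C_x$, and since $\mathrm{diam}(C_x) \leq \gamma$ we have $C_x \subseteq \Ball(\v_x;\gamma) \subseteq \Ball(A;\gamma)$. The cells corresponding to distinct points of $A$ are disjoint and each has measure $1/n$, so $\mu(\Ball(A;\gamma)) \geq \sum_{\v_x \in A}\mu(C_x) = |A|/n \geq \alpha$, as desired.

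I do not expect any serious obstacle here: the argument is essentially a bookkeeping exercise on the equal-volume bounded-diameter partition from Lemma \ref{cells}, and the only points requiring care are (i) noting that $\Ball$ here denotes a ball in the ambient metric so that a cell of diameter $\leq \gamma$ around one of its points stays inside a $\gamma$-ball, and (ii) keeping the two directions of the inequality straight (in one direction we inflate the cell count by intersection, in the other we pack disjoint cells). The hypothesis $V \cap (-V) = \emptyset$ plays no role in this claim and is only relevant later.
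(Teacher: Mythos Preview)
Your proposal is correct and follows essentially the same approach as the paper's proof: in both directions you count cells (using that they partition $S^{d-1}$ with equal measure $1/n$) and then use the diameter bound to pass between a cell and its representative in $V$. Your write-up simply spells out in more detail what the paper states tersely.
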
 \begin{proof}  For the first direction, if $\mu(A) \geq \alpha$, $A$
intersects at least $\alpha n$ of the cells, so $\Ball[A;\gamma]$ contains at least $\alpha n$
cells, and hence at least $\alpha n$ elements of $V$.  For the second, if $A \subseteq V$
has size at least $\alpha n$, then $\Ball[A ; \gamma]$ contains at least $\alpha n$
cells, so $\mu(\Ball[A;\gamma]) \geq \alpha$. \end{proof}

\begin{claim}\label{longvar} For every $i \leq d$, $\sum_{\v \in V} \v_i^2 \geq \Omega(n/d)$.
\end{claim}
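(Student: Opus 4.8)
The plan is to exhibit $\Omega(n)$ points of $V$ that are each \emph{forced} to satisfy $\v_i^2 = \Omega(1/d)$, simply because the cell they sit in lies far from the coordinate hyperplane $\{u : u_i = 0\}$. Fix $i \le d$, recall $\gamma = 1/\sqrt d$, let $e_i$ be the $i$-th standard basis vector, and set $A = \{u \in S^{d-1} : |\langle u, e_i\rangle| \ge c\gamma\}$ for an absolute constant $c$ slightly larger than $1$ (e.g. $c = 1.3$, chosen so that $c/\sqrt d < 1$ already at $d = 2$). The first step is to record that $A$ captures a constant fraction of the sphere: $\mu(A) = 2\Pr_{u\sim S^{d-1}}[\langle u,e_i\rangle \ge c/\sqrt d]$, which tends to $2\Phi(-c) > 0$ as $d\to\infty$ and is bounded below by an absolute positive constant for every $d \ge 2$ (for $d$ below any fixed threshold this is a finite check, and in any case $n = O(\sqrt d)^d = O(1)$ there, so the statement is immediate).

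The cell geometry then does the rest. Each Feige--Schechtman cell $C_j$ has Euclidean diameter at most $\gamma$, so if $C_j$ meets $A$, fix $a \in C_j \cap A$; for every $u \in C_j$ we have $|u_i - a_i| \le \|u - a\| \le \gamma$, and since $|a_i| \ge c\gamma$ with $c > 1$ this gives $|u_i| \ge (c-1)\gamma$, i.e. $u_i^2 \ge (c-1)^2/d$. Thus \emph{whichever} point of $C_j$ landed in $V$ contributes at least $(c-1)^2/d$ to $\sum_{\v \in V}\v_i^2$; equivalently, every $w$ in the Euclidean neighbourhood $\Ball(A;\gamma)$ has $w_i^2 \ge (c-1)^2/d$, so it suffices to lower bound $|\Ball(A;\gamma)\cap V|$. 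That is exactly what Claim \ref{meas} provides: from $\mu(A) = \Omega(1)$ we get $|\Ball(A;\gamma)\cap V| \ge \mu(A)\,n = \Omega(n)$. Combining, $\sum_{\v\in V}\v_i^2 \ge |\Ball(A;\gamma)\cap V|\cdot\frac{(c-1)^2}{d} = \Omega(n/d)$.

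The only genuinely non-mechanical ingredient is the anti-concentration estimate $\mu(A) = \Omega(1)$; everything else is a one-line consequence of the diameter bound and of Claim \ref{meas}. That estimate is standard for the uniform measure on $S^{d-1}$ --- the density of a single coordinate is proportional to $(1-x^2)^{(d-3)/2}$, so $\sqrt d$ times a coordinate converges to $N(0,1)$ with matching two-sided bounds for all $d$ --- so I would cite a standard reference rather than reprove it, and would pick $c \in (1,\sqrt 2)$ so that $c/\sqrt d < 1$ holds for every $d \ge 2$ and no separate small-dimension argument is needed.
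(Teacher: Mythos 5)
Your proof is correct and is essentially the paper's own argument: the paper takes the cap $A=\{x: x_i\ge\sqrt{2/d}\}$, cites a spherical-cap measure bound $\mu(A)\ge 1/12$, applies Claim \ref{meas} to get $\Omega(n)$ points of $V$ in $\Ball(A;\gamma)$, and uses the same triangle-inequality step $x_i\ge\sqrt{2/d}-\gamma\ge\sqrt{1/8d}$ to conclude. Your only cosmetic differences are using a two-sided threshold $|u_i|\ge c/\sqrt d$ with $c=1.3$ in place of $\sqrt 2$ and phrasing the cap bound as coordinate anti-concentration; both are fine.
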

\begin{proof} Let $A = \{x \in S^{d-1} : x_i \geq \sqrt{2/d}\}$.  By bounds on the measure
of spherical caps (see e.g. \cite{lectures}), $\mu(A) \geq 1/12$.  Using claim \ref{meas},
$|\Ball(A;\gamma) \cap V| \geq (1/12)n$.  Then, since $x_i \geq \sqrt{2/d} - \gamma \geq
\sqrt{1/8d}$ for all $x \in \Ball(A;\gamma)$, we have  $\sum_{\v \in V} \v_i^2 \geq
(1/12)n(1/8d) \geq \Omega(n/d)$.\end{proof}

\begin{claim} \label{expand} For every $A \subseteq V$ with $|A| \leq n/2$, $|\Ball[A; O(1/\sqrt{d})] \cap X|
\geq (1+1/12)|A|$.
\end{claim}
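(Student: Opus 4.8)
The plan is to move the statement to the continuous sphere $S^{d-1}$, apply the classical spherical isoperimetric inequality (L\'evy's lemma) to reduce to the case of a spherical cap, estimate the cap's neighbourhood by an elementary calculation, and then return to the point set $V$ via Claim \ref{meas}; throughout, the Euclidean $\Ball[\cdot\,;\ell]$ and the geodesic ball of radius $\ell$ agree up to an absolute constant factor, which is harmless since only a radius of $O(1/\sqrt d)$ is claimed. Concretely, fix $A\subseteq V$ with $|A|=\alpha n\le n/2$, let $\gamma=1/\sqrt d$ be the cell-diameter bound, and set $\widehat A:=\Ball[A;\gamma]\subseteq S^{d-1}$, so $\mu(\widehat A)\ge\alpha$ by the second half of Claim \ref{meas}. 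For a constant $c$ to be chosen and $t:=c\gamma$, the spherical isoperimetric inequality (and monotonicity of the cap-neighbourhood function in the cap's measure) gives $\mu(\widehat A_t)\ge\mu(C_t)$, where $C$ is the cap with $\mu(C)=\alpha$ and $\widehat A_t,C_t$ denote $t$-neighbourhoods. Applying the first half of Claim \ref{meas} to $\widehat A_t$ and using $\Ball[\widehat A_t;\gamma]\subseteq\Ball[A;(2+c)\gamma]$ then yields $|\Ball[A;(2+c)\gamma]\cap V|\ge\mu(C_t)\,n$; since $(2+c)\gamma=O(1/\sqrt d)$, it remains to choose $c$ so that $\mu(C_t)\ge(1+\tfrac1{12})\alpha$ for every cap $C$ of measure $\alpha\le1/2$.

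For that last point I would write $C=\{x\in S^{d-1}:\langle x,e_1\rangle\ge\cos\theta\}$ with $\theta\in(0,\pi/2]$, so $C_t$ is the cap of angular radius $\theta+t$ and
\[
\frac{\mu(C_t)}{\mu(C)}=1+\frac{\int_\theta^{\theta+t}\sin^{d-2}\phi\,d\phi}{\int_0^{\theta}\sin^{d-2}\phi\,d\phi},
\]
and then split into two regimes. When $\cos\theta\ge 2t$ (the cap stops short of the equator), the elementary bound $\int_0^\theta\sin^{d-2}\phi\,d\phi\le\frac{\sin^{d-1}\theta}{(d-1)\cos\theta}$ (which follows from $\frac{d}{d\phi}\frac{\sin^{d-1}\phi}{\cos\phi}\ge(d-1)\sin^{d-2}\phi$) together with $\int_\theta^{\theta+t}\sin^{d-2}\phi\,d\phi\ge t\sin^{d-2}\theta$ gives $\mu(C_t)/\mu(C)\ge 1+t(d-1)\cot\theta\ge 1+\tfrac12 c\sqrt d\cos\theta\ge 1+c^2$, which beats $1+\tfrac1{12}$ for a suitable absolute constant $c$ (indeed the boost is polynomially large in $d$). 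When $\cos\theta<2t$ (the cap is within $O(1/\sqrt d)$ of a hemisphere, so $\alpha$ is bounded below), I would instead argue additively: a short computation gives $C_t\supseteq\{x:\langle x,e_1\rangle\ge\cos\theta-\tfrac12 t\}$, so $\mu(C_t)-\mu(C)$ is at least the mass of the slab $\{\cos\theta-\tfrac12 t\le\langle x,e_1\rangle\le\cos\theta\}\subseteq\{|\langle x,e_1\rangle|\le 2t\}$; since the one-coordinate marginal of the uniform measure on $S^{d-1}$ has density $\Theta(\sqrt d)$ on that range (standard cap estimates, cf.\ \cite{lectures}), this mass is $\Omega(t\sqrt d)=\Omega(c)$, and choosing $c$ large enough makes it at least $\tfrac1{24}\ge\tfrac1{12}\alpha$.

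The step I expect to be the main obstacle is reconciling the two regimes under a single $c$ while hitting the concrete constant $\tfrac1{12}$: the near-hemisphere case wants $c$ not too small, but the marginal density over the slab degrades like $e^{-\Theta(c^2)}$ as $c$ grows, so the slab-mass estimate $\Omega(c)$ has to be carried out with some care (bounding the density by its value near the equator rather than its worst value over the slab, and using a clean uniform upper bound on $\int_{-1}^1(1-u^2)^{(d-3)/2}\,du$) for the concrete constant to come out; very small $d$ is a non-issue since there $n=O(\sqrt d)^d$ is bounded, and in any case only the qualitative conclusion $(1+\Omega(1))|A|$ is needed downstream in Lemma \ref{pointset}. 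As an alternative that bypasses the cap calculation, one could mimic the proof of Lemma \ref{iso}: apply Borell's reverse hypercontractive inequality to $\mathbf 1_{\widehat A}$ with $\rho=1-\Theta(1/d)$ to conclude that for all but a small fraction of $x\in S^{d-1}$ a constant fraction of the $\rho$-correlated copies $\hat x\sim_\rho x$ lie in $\widehat A$, and observe that such $\hat x$ lie within $O(1/\sqrt d)$ of $x$ with overwhelming probability, which again forces a $(1+\Omega(1))$ factor more points of $V$ into an $O(1/\sqrt d)$-neighbourhood.
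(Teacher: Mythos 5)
Your proposal is correct and follows essentially the same route as the paper: pass from $V$ to the sphere via Claim \ref{meas}, apply the spherical isoperimetric inequality to an $O(1/\sqrt d)$-neighbourhood to gain a $(1+1/12)$ factor in measure, and return to $V$ via Claim \ref{meas}; the paper simply cites the cap-neighbourhood estimate (to \cite{lectures}) where you work it out. The only structural difference is that the paper dodges your ``near-hemisphere'' regime by noting that if $\mu(\Ball[A;\gamma])$ already exceeds $(1+1/12)/2$ there is nothing to prove (since $|A|/n\le 1/2$), which removes the constant-reconciliation worry you flag --- and in any case only $(1+\Omega(1))$ is used downstream in Lemma \ref{pointset}.
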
\begin{proof} Let $A \subseteq V$ have $|A| \leq n/2$, and set $A_1 = \Ball[A;
\gamma], A_2 = \Ball[A_1; 4/\sqrt{d}], A_3 =
\Ball[A_2;\gamma]$; the goal is to show $|A_3 \cap V| \geq (1+1/12)|A|$.  Note claim
\ref{meas} ensures $\mu(A_1) \geq |A|/n$.  If $\mu(A_1) \geq (1+1/12)/2$, then $\mu(A_2)
\geq (1+1/12)|A|/n$.  Otherwise, by the
isoperimetric inequality on the sphere (see e.g. \cite{lectures}), $\mu(A_2) \geq
(1+1/12)\mu(A_1) \geq (1+1/12)|A|/n$.  By by claim \ref{meas}, $|A_3 \cap V| \geq
(1+1/12)|A|$.
\end{proof}

Now to prove lemma \ref{pointset}, we let $V' = V \cup -V$.  Clearly $\sum_{\v \in V'}
\v = 0$, and claim \ref{longvar} still applies to $V'$, so it remains only to argue claim
\ref{expand} still holds for $V'$.
Let $A \subseteq V'$ have $|A| \leq
|V'|/2 = n$.  Let $A = A_+ \cup A_-$ where $A_+ \subseteq V$ and $A_- \subseteq -V$, and
suppose $|A_+| \leq |A_-|$ (in the other case, an analogous argument applies).  We
consider two cases.  First, if $|A_+| \leq |A_-|/2$, then $\mu(\Ball[A_-;\gamma]) \geq
|A_-|/n$, implying $|\Ball[A_-;2\gamma] \cap V| \geq |A_-|/n$.  Therefore, $|\Ball[A;2\gamma] \cap
V'| \geq 2|A_-| \geq (4/3)|A|$.  Otherwise, $|A_+| \leq n/2$ and $|A_+| \geq |A|/3$, so
claim \ref{expand} implies $|\Ball[A_+; O(1/\sqrt{d})] \cap V| \geq (1+1/12)|A_+|$.  Therefore,
$|\Ball[A; O(1/\sqrt{d})] \cap V'| \geq (1+1/12)|A_+| + |A_-| \geq (1+1/36)|A|$.
\fi
\end{document}